\documentclass[pdflatex,sn-mathphys-num]{sn-jnl}

\usepackage{graphicx}%
\usepackage{multirow}%
\usepackage{amsmath,amssymb,amsfonts}%
\usepackage{amsthm}%
\usepackage{mathrsfs}%
\usepackage[title]{appendix}%
\usepackage{xcolor}%
\usepackage{textcomp}%
\usepackage{manyfoot}%
\usepackage{booktabs}%
\usepackage{algorithm}%
\usepackage{algorithmicx}%
\usepackage{algpseudocode}%
\usepackage{listings}%
\usepackage{mathtools}%
\usepackage{bm}%
\usepackage{hyperref}%
\usepackage{enumitem}%
\usepackage[nameinlink,noabbrev]{cleveref}

\theoremstyle{thmstyleone}%
\newtheorem{theorem}{Theorem}%
\newtheorem{proposition}{Proposition}%
\newtheorem{lemma}{Lemma}%
\newtheorem{corollary}{Corollary}%

\theoremstyle{thmstyletwo}%

\theoremstyle{thmstylethree}%
\newtheorem{definition}{Definition}%

\newcommand{\R}{\mathbb{R}}
\newcommand{\C}{\mathbb{C}}
\newcommand{\Lap}{\mathcal{L}}
\newcommand{\dd}{\,\mathrm{d}}
\newcommand{\ee}{\mathrm{e}}

\newcommand{\TricomiU}{U}

\newcommand{\CTD}{\,{}^{UC}\!D}
\newcommand{\RLTD}{\,{}^{URL}\!D}
\newcommand{\ITU}{\,{}^{U}\!I}

\begin{document}

\title[A canonical two-scale Sonine fractional calculus induced by the Tricomi function]%
{A canonical two-scale Sonine fractional calculus induced by the Tricomi function 
\footnote{Published in: {\bf Fract. Calc. Appl. Anal.} (2026). % 13540, 0556 (2026).
DOI: \href{https://doi.org/10.1007/s13540-026-00556-z}{10.1007/s13540-026-00556-z}.}}

\author*[1]{\fnm{Ivano} \sur{Colombaro}}\email{ivano.colombaro@unibz.it}

\author[2,3]{\fnm{Marc} \sur{Tudela-Pi}}\email{marc.tudela@csic.es}

\affil*[1]{\orgdiv{Faculty of Engineering}, \orgname{Free University of Bozen-Bolzano}, \orgaddress{\city{via Bruno Buozzi 1, Bolzano}, \country{Italy}}}

\affil[2]{\orgdiv{Institut de Microelectr\`onica de Barcelona (IMB-CNM)}, \orgname{CSIC}, \orgaddress{\city{Bellaterra}, \postcode{08193}, \country{Spain}}}

\affil[3]{\orgdiv{CIBER-BBN}, \orgname{Instituto de Salud Carlos III}, \orgaddress{\city{Madrid}, \country{Spain}}}

\abstract{
We introduce a Tricomi-type generalized fractional calculus in the Sonine kernel framework.
The key result is that the Tricomi branch is a Stieltjes function in the admissible parameter range, so its reciprocal is a complete Bernstein function.
This fact induces a Sonine fractional calculus together with the canonical Tricomi integral and the associated Riemann--Liouville-type and Caputo-type derivatives. We also prove that, within the Kummer class, the Tricomi branch is the unique Stieltjes representative, once the natural asymptotic normalization is fixed, and we derive the corresponding L\'evy--Khintchine representation, Volterra formulation, and scalar Cauchy problem. A distinctive feature of the resulting operators is the emergence of two independent asymptotic orders.
}

\keywords{Tricomi function, Stieltjes function, complete Bernstein function, Sonine pair, generalized fractional calculus, generalized Caputo operator, confluent hypergeometric equation, Volterra equation, scalar Cauchy problem, L\'evy--Khintchine representation, asymptotic identifiability}

\pacs[MSC Classification]{26A33 (primary),  33C15, 45D05, 34A08, 44A10 }

\maketitle

\section{Introduction} \label{sec:intro}

\setcounter{section}{1} \setcounter{equation}{0} %% to have proper 2-digits numbers of eqs
%% Note that this style produces 1-digit numbering of definitons, statements, exmaples, etc.

Generalized fractional calculus is naturally connected with completely monotone memory kernels, Stieltjes functions, complete Bernstein symbols and Sonine pairs. In this setting, nonlocal operators are encoded by the analytic structure of their Laplace symbols, the associated hereditary kernels and the inverse kernels defining the corresponding fractional integrals \cite{SchillingSongVondracek2012,Kochubei2011,LuchkoYamamoto2020,Luchko2021Sonine,Luchko2021Operational,Luchko2020Fundamental,giusti2020generalPrabhakar}.
A basic problem in this framework is therefore to identify explicit special-function laws that fit the Stieltjes/complete-Bernstein scheme and at the same time yield tractable operators in the time domain.
The  foundational framework of these pairs dates back to the seminal work of Sonine \cite{Sonine1884}, who generalized Abel's integral equation by introducing a pair of kernels whose convolution equals $1$. For a comprehensive historical overview of the development of fractional calculus and Sonine original ideas within the nineteenth-century mathematical community, the reader is referred to \cite{Rogosin2021}. 

In this context, while classical Sonine pairs often interpolate a single fractional order 
between different regimes, a major challenge has been the explicit formulation of kernels 
capable of genuinely capturing two independent asymptotic behaviors. To provide a clear 
mathematical explanation of our main finding, the core mechanism of this paper relies on 
exploiting the unique analytical properties of the Tricomi confluent hypergeometric function. 
By proving its Stieltjes nature, we bridge the gap between special functions theory and 
operational calculus, providing an explicit, non-trivial Sonine pair where the slow and 
fast regimes are governed by two decoupled exponents.

Indeed, we show that one of the two classical linearly independent solutions of the Kummer equation~\cite{Abramowitz1965handbook}, known as Tricomi confluent hypergeometric function of the second kind $\TricomiU(a,b,z)$~\cite{tricomi1947funzioni}, provides such a law. More precisely, we consider $\TricomiU(a,b,s\tau)$ in the parameter range
\begin{equation}
a\in(0,1),\qquad b\in(1,2),\qquad \tau>0,
\label{eq:intro:box}
\end{equation}
and we introduce the associated raw Tricomi symbol, defined as the reciprocal of the Tricomi function in the Laplace domain, namely
\begin{equation}
\varphi_{a,b,\tau}(s):=\frac{1}{\TricomiU(a,b,s\tau)},
\qquad s>0.
\label{eq:intro:rawsymbol}
\end{equation}
The classical analytic properties of \(\TricomiU\), including its integral representation, analytic continuation, branch structure and asymptotics, are well known in the literature \cite{DLMF,Buchholz1969,Temme1996}. In the present parameter range \eqref{eq:intro:box}, the two asymptotic sectors of the raw symbol are
\begin{align}
    \varphi_{a,b,\tau}(s)&\sim (s\tau)^a
\qquad\qquad\qquad\quad s\to\infty\,,
\\
\varphi_{a,b,\tau}(s)&\sim
\frac{\Gamma(a)}{\Gamma(b-1)}(s\tau)^{b-1}
\qquad s\to0^+\,,
\end{align}
thus the model carries two effective orders, namely \(a\) in the fast regime
and \(b-1\) in the slow regime.

The main contribution of the paper is to show that the Tricomi branch generates a canonical generalized fractional calculus in the Sonine kernel framework and, specifically, we prove that the function \(s\mapsto \TricomiU(a,b,s\tau)\) is
Stieltjes in the admissible parameter range. It follows that its reciprocal
\(\varphi_{a,b,\tau}\) is a complete Bernstein function, and hence that
\(\varphi_{a,b,\tau}(s)/s\) is again Stieltjes. This provides the analytic mechanism from which the whole operatorial structure is derived. In this sense, the Tricomi symbol in \eqref{eq:intro:rawsymbol} acts as a structural bridge between complete Bernstein generators in the Laplace domain and genuine Sonine fractional calculus in the time domain, as proved later in this paper.

Once this core fact is established, the induced Sonine calculus follows canonically. The same Tricomi law yields the explicit kernel
\begin{equation}
\kappa_{a,b,\tau}(t)
=
\frac{1}{\Gamma(a)\tau}
\left(\frac{t}{\tau}\right)^{a-1}
\left(1+\frac{t}{\tau}\right)^{b-a-1},    
\end{equation}
whose Laplace transform is exactly \(\TricomiU(a,b,s\tau)\), i.e.
\begin{equation}\label{eq:U_laplace_kernel}
    U(a,b,s\tau)
     = \int_{0}^{\infty} e^{-st}\,\kappa_{a,b,\tau}(t)\,\mathrm{d}t,
    \qquad s>0.
\end{equation}
Its Sonine partner is the derivative kernel associated with \(\varphi_{a,b,\tau}(s)/s\). In this
way we obtain a Tricomi integral, a Riemann--Liouville-type Tricomi derivative,
and a Tricomi--Caputo derivative, together with the corresponding generalized
fundamental theorems of calculus in the modern Sonine-kernel setting
\cite{SamkoCardoso2003,Kochubei2011,LuchkoYamamoto2020,Luchko2021Sonine,Luchko2021Operational,Luchko2024Sonin,Hanyga2020,Luchko2020Fundamental,diethelm2020why}.

A second contribution is structural. Since the raw Tricomi symbol is a complete Bernstein function, it admits a L\'evy--Khintchine representation~\cite{SchillingSongVondracek2012,MainardiRogosin2006}. We derive this representation explicitly and identify the associated L\'evy density. This reveals the hereditary geometry of the model at the level of the jump kernel and shows that the same two-scale structure is visible in the symbol, in the L\'evy density, and in the derivative kernel.

A further contribution is a rigidity result inside the Kummer class. The confluent hypergeometric equation admits two linearly independent branches, namely the Kummer $M$-branch and the Tricomi $U$-branch~\cite{Oldham2009atlasfunctions}, but only the latter is compatible with the Stieltjes structure required by the Sonine framework. We prove that, within the Kummer class, the Tricomi branch is the unique Stieltjes representative, up to the natural asymptotic normalization. In this sense, the normalized Tricomi law is not merely an explicit example, but the canonical Stieltjes member of that class.
We also show that the parameters of the Tricomi family are asymptotically identifiable from the raw symbol and, equivalently, from the associated L\'evy density or hereditary kernel.

Finally, we formulate the natural Volterra evolution problem induced by the Tricomi Sonine pair, derive its resolvent representation, prove existence and uniqueness of mild solutions, and show that, under additional regularity, this Volterra formulation is equivalent to the scalar Cauchy problem associated with the Tricomi--Caputo derivative.
The bounded Tricomi law then appears naturally as the canonical bounded response associated with shifted generators~\cite{Kochubei2011,LuchkoYamamoto2020,Luchko2021Operational} and the classical Volterra framework~\cite{GripenbergLondenStaffans1990}. From a broader perspective, the resulting Volterra formulation fits naturally within the general theory of Caputo-type fractional differential equations,
where existence, uniqueness and stability of solutions are typically studied under minimal regularity assumptions on the data \cite{Diethelm2019}.

Throughout the paper, all functions are assumed to be causal whenever
convolution operators are considered. The Volterra formulation is developed in
the natural low-regularity setting \(L^1_{\mathrm{loc}}([0,\infty))\), whereas
the Caputo-type realization is considered for functions in
\(AC_{\mathrm{loc}}([0,\infty))\). Whenever Laplace-transform identities are
used, we additionally assume exponential order so that the relevant transforms
exist in a right half-plane. These conventions will not be repeated each time.

The paper is organized as follows. In Sect.~\ref{sec:stieltjes} we prove that the Tricomi function is a Stieltjes law and deduce that the reciprocal raw symbol is complete Bernstein, while in Sect.~\ref{sec:levy} we derive the L\'evy--Khintchine representation and identify the corresponding L\'evy density.
The following Sect.~\ref{sec:sonine} is devoted to the construction of the induced Sonine fractional calculus, including the associated Tricomi integral, the Riemann--Liouville-type and Caputo-type derivatives, and the generalized fundamental theorems of generalized fractional calculus.
In Sect.~\ref{sec:characterization} we characterize the Tricomi family within the Kummer class of the Sonine framework and Sect.~\ref{sec:ident} addresses asymptotic identifiability within this Tricomi family. Finally, Sect.~\ref{sec:model} formulates the natural Volterra evolution problem, derives its resolvent representation and well-posedness, relates it to the associated scalar Cauchy problem and recovers the bounded Tricomi law from shifted symbols.
Discussion and concluding remarks, including perspectives and possible extensions, are collected in Sect.~\ref{sec:discussion}.% and Sect.~\ref{sec:conclusions}.

\section{The Tricomi function as a Stieltjes law}\label{sec:stieltjes}

Throughout the paper we work in the parameter range specified in \eqref{eq:intro:box}, namely $a\in(0,1)$, $b\in(1,2)$ and $\tau>0$.
We also use the following standard facts from the theory of Stieltjes and
complete Bernstein functions \cite{SchillingSongVondracek2012,Kochubei2011}.

\begin{definition}
A function \(f:(0,\infty)\to(0,\infty)\) is called a Stieltjes function if it
admits a representation
\begin{equation}
f(s)=\frac{\alpha}{s}+\beta+\int_0^\infty \frac{1}{s+r}\,\mu(\dd r),
\label{eq:Stdef}
\end{equation}
where \(\alpha,\beta\ge0\) and \(\mu\) is a positive Radon measure on
\((0,\infty)\) satisfying
\[
\int_0^\infty \frac{1}{1+r}\,\mu(\dd r)<\infty.
\]
\end{definition}

\begin{definition}
A function \(\phi:(0,\infty)\to[0,\infty)\) is called a complete Bernstein
function if it admits a representation
\begin{equation}
\phi(s)=c_0+c_1 s+\int_0^\infty \frac{s}{s+r}\,\sigma(\dd r),
\label{eq:CBFdef}
\end{equation}
where \(c_0,c_1\ge0\) and \(\sigma\) is a positive Radon measure on
\((0,\infty)\) satisfying
\[
\int_0^\infty \frac{1}{1+r}\,\sigma(\dd r)<\infty.
\]
\end{definition}

\begin{proposition}\label{prop:StCBF}
If \(f\not\equiv0\) is a Stieltjes function, then \(1/f\) is a complete Bernstein function. Conversely, if \(\phi\not\equiv0\) is a complete Bernstein function, then \(1/\phi\) is a Stieltjes function.
\end{proposition}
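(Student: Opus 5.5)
The plan is to characterize both classes by their analytic continuation to the cut plane $\C\setminus(-\infty,0]$ and to compare the resulting Nevanlinna–Pick (Herglotz) conditions. This is the standard route in \cite{SchillingSongVondracek2012}.

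\textbf{Step 1: Stieltjes functions.} A function $f$ of the form \eqref{eq:Stdef} extends holomorphically to $\C\setminus(-\infty,0]$ by the same formula. From
\[
\operatorname{Im}\frac{1}{s+r}=-\frac{\operatorname{Im}s}{|s+r|^2}
\]
we get $\operatorname{Im} f(s)\le 0$ whenever $\operatorname{Im}s>0$. Conversely, I would show that a function $f\ge0$ on $(0,\infty)$ that is holomorphic on the cut plane and satisfies $\operatorname{Im}s\cdot\operatorname{Im}f(s)\le0$ is Stieltjes. For this I apply the Nevanlinna–Herglotz representation to $-f$, which maps the upper half-plane to its closure. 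The resulting measure is supported on $(-\infty,0]$, because $f$ is real on $(0,\infty)$. The linear term vanishes because $f\ge0$ on $(0,\infty)$ forces $f(s)/s\to0$ as $s\to\infty$. Positivity of $f$ on the half-line then gives $\beta\ge0$ and the integrability condition. Substituting $r\mapsto -r$ yields \eqref{eq:Stdef}.

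\textbf{Step 2: complete Bernstein functions.} Writing
\[
\frac{s}{s+r}=1-\frac{r}{s+r}
\]
shows that $\phi$ is complete Bernstein if and only if $\phi(s)/s$ is Stieltjes. Equivalently, $\phi$ extends holomorphically to the cut plane, is nonnegative on $(0,\infty)$, and satisfies $\operatorname{Im}\phi(s)\ge0$ when $\operatorname{Im}s>0$. The forward implication is again a direct computation. The converse follows from Step 1 applied to $\phi(s)/s$: the Herglotz condition for $\phi$ together with $\operatorname{Im}(1/s)<0$ only gives an argument bound on $\phi(s)/s$. So I would instead use the Herglotz representation of $\phi$ directly and proceed as in Step 1.

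\textbf{Step 3: the reciprocals.} If $f\not\equiv0$ is Stieltjes, then $f$ has no zeros in the cut plane. On $(0,\infty)$ the representation \eqref{eq:Stdef} is a sum of nonnegative terms, not all zero. Off the real axis, $\operatorname{Im}f\ne0$ unless $\mu=0$ and $\alpha=0$, in which case $f\equiv\beta>0$. Hence $1/f$ is holomorphic on the cut plane, and
\[
\operatorname{Im}(1/f)=-\frac{\operatorname{Im}f}{|f|^2}\ge0
\]
in the upper half-plane, while $1/f>0$ on $(0,\infty)$. By Step 2, $1/f$ is complete Bernstein. The converse is symmetric. For $\phi\not\equiv0$, $\phi$ is strictly positive on $(0,\infty)$, since $\phi$ is increasing and if $\phi(s_0)=0$ then all terms vanish. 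The same argument as above rules out zeros off the axis, and Step 1 then applies to $1/\phi$.

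\textbf{Main obstacle.} The main difficulty is the converse direction in Steps 1–2: extracting the precise integral form, with measures on $(0,\infty)$ and the correct integrability, from the Herglotz representation. This includes handling a possible atom at $0$, which produces $\alpha/s$ or $c_0$, and controlling the behavior at infinity. I would rely on the Stieltjes inversion formula and the boundary-value limits of $\operatorname{Im}f(-r+i\varepsilon)$, citing \cite{SchillingSongVondracek2012} (Thm.~6.2, Thm.~7.3) for the standard details.
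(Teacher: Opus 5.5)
Your argument is correct and is the standard Nevanlinna--Pick (Herglotz) argument behind the results of \cite{SchillingSongVondracek2012}; the paper gives no proof of its own and simply recalls this proposition as a known fact from \cite{SchillingSongVondracek2012,Kochubei2011}, so the two routes coincide. You can also drop the separate Herglotz analysis of complete Bernstein functions in Step~2, where your phrasing wavers: for Stieltjes $f\not\equiv0$, the function $sf(s)=\alpha+\beta s+\int_0^\infty\frac{s}{s+r}\,\mu(\dd r)$ is zero-free on the cut plane with $\operatorname{Im}\ge0$ on the upper half-plane, so Step~1 applied to $1/(sf(s))$, together with Proposition~\ref{prop:StCBFs}, already shows that $1/f$ is complete Bernstein, while the converse direction needs only the elementary computation $\operatorname{Im}\phi\ge0$ and Step~1.
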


\begin{proposition}\label{prop:StCBFs}
If \(\phi\) is a complete Bernstein function, then \(\phi(s)/s\) is a Stieltjes function. Conversely, if \(g\) is a Stieltjes function, then \(s\,g(s)\) is a complete Bernstein function.
\end{proposition}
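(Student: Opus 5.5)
The plan is to work directly with the integral representations \eqref{eq:Stdef} and \eqref{eq:CBFdef}, matching their kernels term by term. The key elementary identity is
\[
\frac{s}{s+r}=1-\frac{r}{s+r},\qquad\text{equivalently}\qquad \frac{1}{s}\cdot\frac{s}{s+r}=\frac{1}{s+r}.
\]

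For the first implication, let \(\phi\) be complete Bernstein with the representation \eqref{eq:CBFdef}. Dividing by \(s>0\) gives
\[
\frac{\phi(s)}{s}=\frac{c_0}{s}+c_1+\int_0^\infty \frac{1}{s+r}\,\sigma(\dd r).
\]
This is exactly of the form \eqref{eq:Stdef} with \(\alpha=c_0\), \(\beta=c_1\) and \(\mu=\sigma\). The integrability condition \(\int_0^\infty (1+r)^{-1}\sigma(\dd r)<\infty\) is the same condition required of \(\mu\), so nothing needs to be checked there.

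For positivity, the definition of a Stieltjes function requires values in \((0,\infty)\), whereas \(\phi\) maps into \([0,\infty)\). If \(\phi\not\equiv0\), then at least one of \(c_0\), \(c_1\), \(\sigma\) is nonzero. Every term in the representation is strictly positive for \(s>0\) whenever its coefficient or measure is nonzero, so \(\phi(s)/s>0\). The degenerate case \(\phi\equiv0\) gives the zero function. I will treat it as a trivial Stieltjes function under the convention \(\alpha=\beta=0\), \(\mu=0\), or exclude it implicitly as in Proposition~\ref{prop:StCBF}.

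For the converse, let \(g\) be Stieltjes with data \((\alpha,\beta,\mu)\). Multiplying by \(s\) gives
\[
s\,g(s)=\alpha+\beta s+\int_0^\infty \frac{s}{s+r}\,\mu(\dd r).
\]
This is \eqref{eq:CBFdef} with \(c_0=\alpha\), \(c_1=\beta\), \(\sigma=\mu\), and the measure condition again transfers verbatim. Nonnegativity of \(s\,g(s)\) is immediate.

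The only potential obstacle is bookkeeping rather than analysis. One point is the mismatch between the codomains \((0,\infty)\) and \([0,\infty)\), handled as above. The other is confirming that the integrals converge for each fixed \(s>0\). This follows from
\[
\frac{1}{s+r}\le \frac{\max(1,1/s)}{1+r},
\]
so finiteness of \(\int_0^\infty (1+r)^{-1}\mu(\dd r)\) guarantees convergence.

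Since each direction reduces to a term-by-term identity of representations, no uniqueness of the representing triples is needed. In particular, I will not invoke Proposition~\ref{prop:StCBF}.
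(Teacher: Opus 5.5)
Your proof is correct: because the paper defines both classes through the representations \eqref{eq:Stdef} and \eqref{eq:CBFdef}, the identification $\alpha=c_0$, $\beta=c_1$, $\mu=\sigma$ and its converse settle both directions, and you are right to flag that $\phi\equiv0$ must be excluded, since the paper requires Stieltjes functions to take values in $(0,\infty)$. The paper gives no proof of its own; it quotes the statement as a standard fact from the references cited at the start of Sect.~\ref{sec:stieltjes}, and your comparison of representations is the standard argument there.
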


We next recall the standard integral representation of the Tricomi confluent hypergeometric function of the second kind \cite{tricomi1947funzioni}
\begin{equation}
\TricomiU(a,b,z)
=
\frac{1}{\Gamma(a)}
\int_0^\infty
\ee^{-zt}t^{a-1}(1+t)^{b-a-1}\,\dd t,
\quad \Re\{z\}>0,\quad a>0,\quad b>1.
\label{eq:Uintegral}
\end{equation}

\begin{proposition}\label{prop:Ubasic}
Let \(a\in(0,1)\), \(b\in(1,2)\), and \(\tau>0\). Then the function
\[
s\mapsto \TricomiU(a,b,s\tau),\qquad s>0,
\]
is positive and completely monotone on \((0,\infty)\). Moreover,
\begin{equation}
\TricomiU(a,b,s\tau)\sim
\frac{\Gamma(b-1)}{\Gamma(a)}(s\tau)^{1-b},
\qquad s\to0^+,
\label{eq:Usmall}
\end{equation}
and
\begin{equation}
\TricomiU(a,b,s\tau)\sim
(s\tau)^{-a},
\qquad |s|\to\infty,\qquad \Re\{s\}>0.
\label{eq:Ularge}
\end{equation}
\end{proposition}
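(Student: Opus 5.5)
The plan is to work directly from the integral representation \eqref{eq:Uintegral}. Substituting $t\mapsto t/\tau$ turns it into the Laplace transform \eqref{eq:U_laplace_kernel} of the kernel
\[
\kappa_{a,b,\tau}(t)=\frac{1}{\Gamma(a)\tau}\left(\frac{t}{\tau}\right)^{a-1}\left(1+\frac{t}{\tau}\right)^{b-a-1}.
\]
For $a>0$ this kernel is strictly positive on $(0,\infty)$ and locally integrable at $0$. As $t\to\infty$ it behaves like $t^{b-2}$, which is not integrable at infinity for $b\in(1,2)$. Even so, $\ee^{-st}\kappa_{a,b,\tau}(t)$ is integrable for every $s>0$, so the transform is finite. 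Positivity of $\TricomiU(a,b,s\tau)$ is then immediate. Complete monotonicity follows from Bernstein's theorem in its easy direction: we may differentiate under the integral sign, giving
\[
(-1)^n\frac{\dd^n}{\dd s^n}\TricomiU(a,b,s\tau)=\int_0^\infty t^n\ee^{-st}\kappa_{a,b,\tau}(t)\dd t>0 .
\]
Differentiation under the integral is justified by dominated convergence on any interval $s\ge s_0>0$, using the bound $t^n\ee^{-s_0t}\kappa_{a,b,\tau}(t)\in L^1$.

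For the large-$s$ asymptotics \eqref{eq:Ularge}, we write $z=s\tau$ and substitute $t=u/z$ in \eqref{eq:Uintegral}. This gives
\[
\TricomiU(a,b,z)=\frac{z^{-a}}{\Gamma(a)}\int_0^\infty \ee^{-u}u^{a-1}\left(1+\frac{u}{z}\right)^{b-a-1}\dd u .
\]
For $\Re z>0$ the contour rotation is harmless: it is legitimate by analyticity and the exponential decay of the integrand in the sector. We then use $|1+u/z|\ge 1$ when $\Re z>0$ and $u>0$. Since the exponent $b-a-1$ lies in $(-1,1)$, the factor is bounded by $(1+u/|z|)^{|b-a-1|}\le (1+u)^{|b-a-1|}$ for $|z|\ge1$. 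This gives a $z$-independent integrable majorant, so dominated convergence yields convergence of the integral to $\Gamma(a)$ and hence $\TricomiU\sim z^{-a}$. Essentially this is Watson's lemma.

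The small-$s$ asymptotics \eqref{eq:Usmall} is the main point, since it is governed by the non-integrable tail of the kernel. We substitute $t=u/z$ again, now with $z=s\tau\to0^+$ real. This gives
\[
\TricomiU(a,b,z)=\frac{z^{1-b}}{\Gamma(a)}\int_0^\infty \ee^{-u}u^{a-1}(z+u)^{b-a-1}\dd u ,
\]
using $(1+u/z)^{b-a-1}=z^{-(b-a-1)}(z+u)^{b-a-1}$. As $z\to0^+$ the integrand tends pointwise to $\ee^{-u}u^{b-2}$, whose integral is $\Gamma(b-1)$, finite because $b>1$. The obstacle is domination near $u=0$, and here it pays to split into two cases according to the sign of $b-a-1$.

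If $b-a-1\ge0$, then $(z+u)^{b-a-1}\le(1+u)^{b-a-1}$ for $z\le1$, which is integrable against $\ee^{-u}u^{a-1}$. If $b-a-1<0$, then $(z+u)^{b-a-1}\le u^{b-a-1}$, giving the majorant $\ee^{-u}u^{b-2}\in L^1$. In both cases dominated convergence yields $\TricomiU(a,b,z)\sim\frac{\Gamma(b-1)}{\Gamma(a)}z^{1-b}$, which completes the proof.
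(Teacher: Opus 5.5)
Your proof is correct. For positivity and complete monotonicity it matches the paper's argument: the Laplace transform of the positive kernel $\kappa_{a,b,\tau}$ together with the easy direction of Bernstein's theorem. You also make explicit the dominated-convergence justification that the paper leaves implicit.

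For the two asymptotic relations you take a genuinely different route. The paper does not derive them; it cites the standard large-argument expansion of $\TricomiU$. For \eqref{eq:Usmall} it cites the connection formula
\[
\TricomiU(a,b,z)=\frac{\Gamma(1-b)}{\Gamma(a-b+1)}M(a,b,z)+\frac{\Gamma(b-1)}{\Gamma(a)}z^{1-b}M(a-b+1,2-b,z),
\]
whose second term dominates as $z\to0$ because $1-b<0$.

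You instead extract both limits directly from \eqref{eq:Uintegral}, using the rescaling $t=u/z$ and dominated convergence. The details hold up:
\begin{enumerate}
\item The contour rotation is legitimate, since $\Re(zt)\ge|z|\,|t|\cos(\arg z)>0$ on the connecting arc.
\item The bound $\Re(1+u/z)\ge1$ gives the $z$-independent majorant $\ee^{-u}u^{a-1}(1+u)^{|b-a-1|}$, so you even obtain uniformity on the whole open right half-plane.
\item The case split on the sign of $b-a-1$ correctly controls the region $u\to0$ as $z\to0^+$.
\end{enumerate}

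What your route buys is self-containedness. Everything follows from the single integral representation, and it makes transparent that the $z^{1-b}$ blow-up comes from the non-integrable $t^{b-2}$ tail of the kernel.

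What the cited route buys is more information: a full expansion at the origin, validity for complex $z\to0$, and bounds on the whole slit plane $\C\setminus(-\infty,0]$. Those slit-plane bounds are what the keyhole argument in Theorem~\ref{thm:USt} actually uses. Your half-plane estimates would need further contour rotation to serve that purpose.
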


\begin{proof}
Positivity follows directly from \eqref{eq:Uintegral}. Moreover, after the change
of variables \(t=\tau r\), one gets
\[
\TricomiU(a,b,s\tau)
=
\frac{1}{\Gamma(a)\tau}
\int_0^\infty
\ee^{-st}
\left(\frac{t}{\tau}\right)^{a-1}
\left(1+\frac{t}{\tau}\right)^{b-a-1}\dd t.
\]
Hence \(s\mapsto \TricomiU(a,b,s\tau)\) is the Laplace transform of the positive kernel
\[
\kappa_{a,b,\tau}(t)
=
\frac{1}{\Gamma(a)\tau}
\left(\frac{t}{\tau}\right)^{a-1}
\left(1+\frac{t}{\tau}\right)^{b-a-1},
\qquad t>0,
\]
and complete monotonicity follows from Bernstein's theorem \cite{SchillingSongVondracek2012}. The large-argument asymptotics \eqref{eq:Ularge} are standard for the Tricomi function, while
\eqref{eq:Usmall} follows from the connection formula near the origin, where
the singular branch \(z^{1-b}\) dominates when \(1<b<2\) \cite{DLMF,Buchholz1969,Temme1996}.
\end{proof}

\begin{definition}[Raw Tricomi symbol]
For \(a\in(0,1)\), \(b\in(1,2)\), and \(\tau>0\), define
\begin{equation}
\varphi_{a,b,\tau}(s):=\frac{1}{\TricomiU(a,b,s\tau)},
\qquad s>0.
\label{eq:rawsymbol}
\end{equation}
\end{definition}

\begin{lemma}[Boundary values of the Tricomi function on the cut]
\label{lem:cut}
For \(y>0\), the principal branch of the Tricomi function satisfies
\begin{equation}\label{eq:Ucut}
\TricomiU(a,b,-y+i0)=A(y)+iB(y),
\end{equation}
where, if \(b\neq a+1\),
\begin{equation}\label{eq:Acut}
A(y)=c_1 M(a,b,-y)+c_2 y^{1-b}\cos\theta\,M(a-b+1,2-b,-y),
\end{equation}
while
\begin{equation}
B(y)=c_2 y^{1-b}\sin\theta\,M(a-b+1,2-b,-y),
\label{eq:Bcut}
\end{equation}
with
\begin{equation}
c_1=\frac{\Gamma(1-b)}{\Gamma(a-b+1)},
\qquad
c_2=\frac{\Gamma(b-1)}{\Gamma(a)},
\qquad
\theta=\pi(1-b).
\label{eq:coeffcut}
\end{equation}
Moreover, by the Kummer transformation,
\begin{equation}
M(a-b+1,2-b,-y)=\ee^{-y}M(1-a,2-b,y),
\label{eq:Kummerapp}
\end{equation}
and therefore \(B(y)<0\) for every \(y>0\).

In the special case \(b=a+1\), one has
\[
\TricomiU(a,a+1,z)=z^{-a},
\]
hence
\[
A(y)=y^{-a}\cos(\pi a),
\qquad
B(y)=-\,y^{-a}\sin(\pi a),
\]
and again \(B(y)<0\) for every \(y>0\).
\end{lemma}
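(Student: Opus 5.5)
We start from the standard connection formula expressing the Tricomi function through the two Kummer solutions. For non-integer $b$ it reads
\[
\TricomiU(a,b,z)=\frac{\Gamma(1-b)}{\Gamma(a-b+1)}M(a,b,z)+\frac{\Gamma(b-1)}{\Gamma(a)}z^{1-b}M(a-b+1,2-b,z).
\]
This identity is valid on the principal branch $|\arg z|<\pi$, where $M$ is entire in $z$. It holds for every $b\in(1,2)$, so the case $b\neq a+1$ needs no further restriction; the only requirement is that $\Gamma(1-b)$ be finite, which is true because $b\notin\mathbb{Z}$. The coefficients are exactly $c_1,c_2$ from \eqref{eq:coeffcut}.

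Next we take the boundary value $z=-y+i0=y\,\ee^{i\pi}$ with $y>0$. The entire factors are simply evaluated at $-y$, and since $M(a,b,-y)$ and $M(a-b+1,2-b,-y)$ are real for real arguments, they contribute no imaginary part. The only multivalued piece is
\[
z^{1-b}=y^{1-b}\ee^{i\pi(1-b)}=y^{1-b}(\cos\theta+i\sin\theta),\qquad \theta=\pi(1-b).
\]
Separating real and imaginary parts then gives \eqref{eq:Acut} and \eqref{eq:Bcut} at once.

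For the sign of $B$, apply Kummer's transformation $M(\alpha,\beta,z)=\ee^{z}M(\beta-\alpha,\beta,-z)$ with $\alpha=a-b+1$ and $\beta=2-b$. Since $\beta-\alpha=1-a$, this yields \eqref{eq:Kummerapp}. Now $1-a>0$ and $2-b>0$, so every coefficient $(1-a)_n/((2-b)_n n!)$ of the series $M(1-a,2-b,y)$ is positive, hence $M(1-a,2-b,y)>0$ for $y>0$, and $\ee^{-y}>0$ as well. Moreover $c_2>0$, because $b-1>0$ and $a>0$. Finally $\theta=\pi(1-b)\in(-\pi,0)$, so $\sin\theta<0$. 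Together these give $B(y)<0$.

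In the special case $b=a+1$ we use $\TricomiU(a,a+1,z)=z^{-a}$. One can check this directly from \eqref{eq:Uintegral}, where the factor $(1+t)^{b-a-1}$ equals $1$ and the integral reduces to a Gamma integral; analytic continuation then extends it to the principal branch. Evaluating at $z=y\ee^{i\pi}$ gives $y^{-a}(\cos\pi a-i\sin\pi a)$, and $\sin\pi a>0$ for $a\in(0,1)$, so again $B(y)<0$. This case is in fact also covered by the general formula, since $c_1=\Gamma(-a)/\Gamma(0)=0$ there, but the direct computation is cleaner.

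The main obstacle is the branch convention: we must fix that $-y+i0$ corresponds to $\arg z=+\pi$ on the principal branch, that the $M$-terms carry no imaginary part there, and that the connection formula is legitimate for non-integer $b$ in $(1,2)$. The rest is a sign check.
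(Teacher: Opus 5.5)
Your proposal is correct and follows the same route as the paper: the connection formula $U(a,b,z)=c_1M(a,b,z)+c_2z^{1-b}M(a-b+1,2-b,z)$ evaluated at $z=y\,\ee^{i\pi}$, Kummer's transformation together with the positivity of the Maclaurin coefficients of $M(1-a,2-b,y)$ and $\sin\theta<0$ to get the sign of $B$, and a direct Gamma-integral computation for $b=a+1$. You supply a little more detail than the paper does, namely the explicit check that $c_2>0$ and the observation that the case $b=a+1$ is already covered by the general formula, since $1/\Gamma(0)=0$.
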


\begin{proof}
For \(b\neq a+1\), the boundary-value representation follows from the standard connection formula for the Tricomi function across the cut together with the principal-branch choice on \(\C\setminus(-\infty,0]\) \cite{DLMF,Buchholz1969,Temme1996}. Using \eqref{eq:Kummerapp}, one obtains
\[
M(a-b+1,2-b,-y)=\ee^{-y}M(1-a,2-b,y).
\]
Since \(1-a>0\) and \(2-b>0\), the Maclaurin coefficients of
\(M(1-a,2-b,y)\) are positive, so the right-hand side is positive for every
\(y>0\). Because \(\sin\theta<0\) for \(1<b<2\), it follows that \(B(y)<0\).

\noindent If \(b=a+1\), then \eqref{eq:Uintegral} reduces to
\[
\TricomiU(a,a+1,z)
=
\frac{1}{\Gamma(a)}\int_0^\infty \ee^{-zt}t^{a-1}\,\dd t
=
z^{-a},
\qquad \Re\{z\}>0.
\]
On the principal branch,
\[
(-y+i0)^{-a}=y^{-a}\ee^{-i\pi a},
\]
which gives the stated formulas for \(A(y)\) and \(B(y)\). Since
\(a\in(0,1)\), one has \(\sin(\pi a)>0\), and therefore \(B(y)<0\).
\end{proof}

\begin{theorem}[Direct Stieltjes representation of the Tricomi function]
\label{thm:USt}
Let \(a\in(0,1)\), \(b\in(1,2)\) and \(\tau>0\). Then
\[
s\mapsto \TricomiU(a,b,s\tau),\qquad s>0,
\]
is a Stieltjes function. More precisely,
\begin{equation}
\TricomiU(a,b,s\tau)
=
\int_0^\infty \frac{\rho^U_{a,b,\tau}(r)}{s+r}\,\dd r,
\qquad s>0,
\label{eq:UStieltjes}
\end{equation}
where
\begin{equation}
\rho^U_{a,b,\tau}(r)
=
\frac{\Gamma(b-1)\sin(\pi(b-1))}{\pi\,\Gamma(a)}
(\tau r)^{1-b}\ee^{-\tau r}M(1-a,2-b,\tau r),
\qquad r>0.
\label{eq:rhoU}
\end{equation}
In particular, \(\rho^U_{a,b,\tau}(r)\ge0\) for every \(r>0\).
\end{theorem}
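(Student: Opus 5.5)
We plan to prove \eqref{eq:UStieltjes} via the Stieltjes inversion formula, that is, by a Cauchy-integral argument on the cut plane. Set \(F(z):=\TricomiU(a,b,z\tau)\) on \(\C\setminus(-\infty,0]\), using the principal branch. We will use three facts about \(F\). First, \(F\) is holomorphic off the negative real axis. Second, \eqref{eq:Ularge} gives \(|F(z)|=O(|z|^{-a})\) as \(|z|\to\infty\); we will need this uniformly in every sector \(|\arg z|\le\pi-\delta\), and, via the boundary values, up to the cut. Third, \eqref{eq:Usmall} gives \(|F(z)|=O(|z|^{1-b})\) as \(z\to0\), with \(1-b\in(-1,0)\), so the singularity at the origin is integrable.

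Fix \(s>0\) and apply Cauchy's formula to \(F(z)/(z-s)\) on a keyhole contour. The contour consists of a large circle of radius \(R\), a small circle of radius \(\varepsilon\) around \(0\), and the two banks of the cut \((-R,-\varepsilon)\). The large circle contributes \(O(R\cdot R^{-a}/R)=O(R^{-a})\to0\). The small circle contributes \(O(\varepsilon\cdot\varepsilon^{1-b})=O(\varepsilon^{2-b})\to0\). Since \(F(\bar z)=\overline{F(z)}\), the two banks together give
\[
F(s)=-\frac{1}{\pi}\int_0^\infty \frac{\Im F(-r+i0)}{s+r}\,\dd r .
\]
We then invoke Lemma~\ref{lem:cut} with \(y=\tau r\), which gives \(\Im F(-r+i0)=B(\tau r)\). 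Using \(c_2=\Gamma(b-1)/\Gamma(a)\), \(\sin\theta=\sin(\pi(1-b))=-\sin(\pi(b-1))\), and \eqref{eq:Kummerapp}, this yields \(-B(\tau r)/\pi=\rho^U_{a,b,\tau}(r)\) exactly as in \eqref{eq:rhoU}. The case \(b=a+1\) is covered as well, either directly from \(z^{-a}\) or by the same formula, which then reduces to \(\sin(\pi a)\,(\tau r)^{-a}/\pi\).

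Next we check that the result is a Stieltjes function in the sense of \eqref{eq:Stdef}, with \(\alpha=\beta=0\) and \(\mu(\dd r)=\rho^U_{a,b,\tau}(r)\,\dd r\). Positivity of \(\rho^U_{a,b,\tau}\) is the statement \(B<0\) from Lemma~\ref{lem:cut}. For the integrability condition \(\int_0^\infty\rho^U_{a,b,\tau}(r)(1+r)^{-1}\dd r<\infty\) we look at the two ends separately. Near \(0\), \(\rho^U_{a,b,\tau}\sim C r^{1-b}\), which is integrable because \(b<2\). At infinity we use \(M(1-a,2-b,y)\sim\frac{\Gamma(2-b)}{\Gamma(1-a)}\ee^{y}y^{a+b-2}\), which gives \(\rho^U_{a,b,\tau}(r)\sim C' r^{a-1}\). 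Hence \(\rho^U_{a,b,\tau}(r)/(1+r)\sim C' r^{a-2}\), which is integrable because \(a<1\). This consistency check also confirms the vanishing of the \(\alpha\) and \(\beta\) terms, since \(F(s)\to0\) as \(s\to\infty\) and \(sF(s)\to0\) as \(s\to0^+\).

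The main obstacle is justifying the contour estimates uniformly up to the cut. The asymptotics \eqref{eq:Ularge} are stated only for \(\Re\{s\}>0\), so we must extend them to \(|\arg z|<\pi\), including the boundary values. We will do this using the standard result that \(U(a,b,z)\sim z^{-a}\) holds uniformly in \(|\arg z|\le 3\pi/2-\delta\) \cite{DLMF}. Near the origin we will use the connection formula \(U=c_1M(a,b,z)+c_2z^{1-b}M(a-b+1,2-b,z)\), which gives the bound \(O(|z|^{1-b})\) uniformly in argument. An alternative route avoids the contour altogether. By Proposition~\ref{prop:Ubasic}, \(F\) is the Laplace transform of \(\kappa_{a,b,\tau}\), so it suffices to show that \(\kappa_{a,b,\tau}\) is the Laplace transform of \(\rho^U_{a,b,\tau}\). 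That would require a Mellin--Barnes identity, so the contour argument remains our primary route.
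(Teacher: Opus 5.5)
Your argument is the paper's proof. Both use Cauchy's formula on a keyhole contour, kill the arcs with the $O(|z|^{-a})$ and $O(|z|^{1-b})$ bounds, take the jump $-\frac{1}{\pi}\Im F(-r+i0)$ across the cut, and apply Lemma~\ref{lem:cut} with Kummer's transformation to identify and sign the density. Your justification of the arc bounds uniformly up to the cut, via $\TricomiU(a,b,z)\sim z^{-a}$ for $|\arg z|\le 3\pi/2-\delta$ and the connection formula near $0$, fills in what the paper only asserts.

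There is one slip, in the tail estimate.
\begin{itemize}
\item Since $M(\alpha,c,y)\sim\frac{\Gamma(c)}{\Gamma(\alpha)}\ee^{y}y^{\alpha-c}$, taking $\alpha=1-a$ and $c=2-b$ gives the exponent $b-a-1$, not $a+b-2$.
\item Hence $\rho^U_{a,b,\tau}(r)\sim\frac{\sin(\pi a)}{\pi}(\tau r)^{-a}$, using $\Gamma(b-1)\Gamma(2-b)=\pi/\sin(\pi(b-1))$.
\item So $\rho^U_{a,b,\tau}(r)/(1+r)\sim C\,r^{-1-a}$, which is integrable because $a>0$, not because $a<1$.
\end{itemize}
For $a\neq\tfrac12$ your tail $r^{a-1}$ contradicts two things. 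It is inconsistent with $\TricomiU(a,b,s\tau)\sim(s\tau)^{-a}$, since a density tail $r^{-\gamma}$ with $0<\gamma<1$ forces $s^{-\gamma}$ decay of the Stieltjes transform. It also contradicts your own exact density $\frac{\sin(\pi a)}{\pi}(\tau r)^{-a}$ in the case $b=a+1$.

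The conclusion survives, and the check is in fact redundant. Once the contour argument yields $\int_0^\infty\rho^U_{a,b,\tau}(r)(s+r)^{-1}\dd r=\TricomiU(a,b,s\tau)$ with $\rho^U_{a,b,\tau}\ge0$, by monotone convergence as $\varepsilon\to0$ and $R\to\infty$, setting $s=1$ gives finiteness directly.

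Your alternative route also needs no Mellin--Barnes identity. The classical transform $\int_0^\infty\ee^{-pr}r^{c-1}M(\alpha,c,r)\,\dd r=\Gamma(c)p^{-c}(1-1/p)^{-\alpha}$ holds for $c>0$, $p>1$, and follows termwise from the series. With $c=2-b$, $\alpha=1-a$ and $p=1+t$, it gives $\int_0^\infty\ee^{-rt}\rho^U_{a,b,1}(r)\,\dd r=\kappa_{a,b,1}(t)$. Tonelli then yields \eqref{eq:UStieltjes}, after scaling in $\tau$, with no contour or uniformity estimates at all. As a by-product, it shows that $\kappa_{a,b,\tau}$ is completely monotone even when $b>a+1$.
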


\begin{proof}
Let us set \(f(z)=\TricomiU(a,b,z)\), which is analytic in the slit plane \(\C\setminus(-\infty,0]\), let us fix
\(z\in\C\setminus(-\infty,0]\) and let us consider a standard keyhole contour \(\Gamma_{R,\varepsilon}\) around the negative real axis. By Cauchy's formula,
\[
f(z)=\frac{1}{2\pi i}\int_{\Gamma_{R,\varepsilon}}
\frac{f(\zeta)}{\zeta-z}\,\dd \zeta \,,
\]
and considering \eqref{eq:Ularge}, one has \(f(\zeta)=O(|\zeta|^{-a})\) as
\(|\zeta|\to\infty\) in the principal sector. Hence the contribution of the
large circle is \(O(R^{-a})\to0\), since \(a>0\). Likewise, for
\eqref{eq:Usmall} one has \(f(\zeta)=O(|\zeta|^{1-b})\) as \(\zeta\to0\), so the contribution of the small circle is \(O(\varepsilon^{2-b})\to0\), since
\(b<2\).

Passing to the limit, only the two sides of the cut remain. Writing the
boundary values on the cut as
\[
\TricomiU(a,b,-x+i0)=A(x)+iB(x),
\qquad
\TricomiU(a,b,-x-i0)=A(x)-iB(x),
\qquad x>0,
\]
we obtain
\[
f(z)
=
\frac{1}{2\pi i}\int_0^\infty
\frac{\TricomiU(a,b,-x-i0)-\TricomiU(a,b,-x+i0)}{x+z}\,\dd x
=
-\frac{1}{\pi}\int_0^\infty \frac{B(x)}{x+z}\,\dd x.
\]

At this point, according to Lemma \ref{lem:cut}, we obtain
\[
B(x)=\frac{\Gamma(b-1)}{\Gamma(a)}
x^{1-b}\sin(\pi(1-b))\,M(a-b+1,2-b,-x),
\]
and using the Kummer transformation \eqref{eq:Kummerapp}, this becomes
\[
B(x)=\frac{\Gamma(b-1)}{\Gamma(a)}
x^{1-b}\sin(\pi(1-b))\,\ee^{-x}M(1-a,2-b,x).
\]
Since \(1-a\in(0,1)\) and \(2-b\in(0,1)\), the Maclaurin series of
\(M(1-a,2-b,x)\) has positive coefficients for \(x>0\), hence
\(M(1-a,2-b,x)>0\). Moreover \(\sin(\pi(1-b))<0\) for \(1<b<2\), so \(B(x)<0\)
for all \(x>0\). Therefore
\[
\rho^U_{a,b,1}(x):=-\frac{1}{\pi}B(x)\ge0,
\qquad x>0,
\]
and
\[
\TricomiU(a,b,z)=\int_0^\infty \frac{\rho^U_{a,b,1}(x)}{x+z}\,\dd x.
\]
Finally, setting \(z=s\tau\) and changing variables \(x=\tau r\) yields
\eqref{eq:UStieltjes} and \eqref{eq:rhoU}.

It remains to verify the Stieltjes integrability condition. Near the origin,
\eqref{eq:rhoU} gives
\[
\rho^U_{a,b,\tau}(r)\sim C_0\,r^{1-b},
\qquad r\to0^+,
\]
which is integrable because \(1-b\in(-1,0)\). At infinity, the standard
large-argument asymptotics of the Kummer function imply
\[
M(1-a,2-b,\tau r)\sim
\frac{\Gamma(2-b)}{\Gamma(1-a)}\,\ee^{\tau r}(\tau r)^{\,b-a-1},
\qquad r\to\infty,
\]
hence
\[
\rho^U_{a,b,\tau}(r)\sim C_\infty\,r^{-a},
\qquad r\to\infty,
\]
and therefore
\[
\frac{\rho^U_{a,b,\tau}(r)}{1+r}\sim C_\infty\,r^{-a-1},
\qquad r\to\infty,
\]
which is integrable since \(a\in(0,1)\). Thus \(\rho^U_{a,b,\tau}(r)\,\dd r\)
is indeed a Stieltjes measure.

In the special case \(b=a+1\), formula \eqref{eq:rhoU} reduces to
\[
\rho^U_{a,a+1,\tau}(r)
=
\frac{\sin(\pi a)}{\pi}\,(\tau r)^{-a}\ee^{-\tau r}M(1-a,1-a,\tau r)
=
\frac{\sin(\pi a)}{\pi}\,\tau^{-a}r^{-a},
\]
because $M(a,a,z)=\ee^{z}$. This is exactly the standard Stieltjes density for $(s\tau)^{-a}$.
\end{proof}

\begin{corollary}\label{cor:rawCBF}
The raw Tricomi symbol \(\varphi_{a,b,\tau}\) is a complete Bernstein
function.
\end{corollary}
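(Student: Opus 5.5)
The plan is to combine Theorem~\ref{thm:USt} with Proposition~\ref{prop:StCBF}; essentially no new analysis is needed. By Theorem~\ref{thm:USt}, the function \(f(s):=\TricomiU(a,b,s\tau)\) is a Stieltjes function. It has the representation \eqref{eq:Stdef} with \(\alpha=\beta=0\) and \(\mu(\dd r)=\rho^U_{a,b,\tau}(r)\,\dd r\), and the integrability condition \(\int_0^\infty (1+r)^{-1}\mu(\dd r)<\infty\) was verified at the end of that proof.

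Before invoking Proposition~\ref{prop:StCBF}, we check the nondegeneracy hypothesis \(f\not\equiv0\). This is immediate from Proposition~\ref{prop:Ubasic}, which gives \(f(s)>0\) for all \(s>0\). It also follows from \eqref{eq:UStieltjes}, since \(\rho^U_{a,b,\tau}\) is strictly positive: the prefactor \(\Gamma(b-1)\sin(\pi(b-1))/(\pi\Gamma(a))\) is positive for \(b\in(1,2)\), and \(\ee^{-\tau r}M(1-a,2-b,\tau r)>0\). In particular, \(\varphi_{a,b,\tau}=1/f\) is well defined and finite on \((0,\infty)\).

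Proposition~\ref{prop:StCBF} then yields directly that \(\varphi_{a,b,\tau}=1/f\) is a complete Bernstein function, which is the claim.

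Optionally, one can identify the triplet in \eqref{eq:CBFdef} using the asymptotics of Proposition~\ref{prop:Ubasic}. Since \(\varphi_{a,b,\tau}(s)\sim \frac{\Gamma(a)}{\Gamma(b-1)}(s\tau)^{b-1}\to0\) as \(s\to0^+\), we get \(c_0=\lim_{s\to0^+}\varphi_{a,b,\tau}(s)=0\). Since \(\varphi_{a,b,\tau}(s)/s\sim \tau^a s^{a-1}\to0\) as \(s\to\infty\), we get \(c_1=\lim_{s\to\infty}\varphi_{a,b,\tau}(s)/s=0\). Hence \(\varphi_{a,b,\tau}\) is given purely by its measure \(\sigma\), which anticipates the L\'evy--Khintchine representation of Sect.~\ref{sec:levy}.

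There is no genuine obstacle here. The only point requiring care is confirming that the hypotheses of Proposition~\ref{prop:StCBF} are met exactly: the Stieltjes measure must satisfy the integrability condition, and \(f\) must not vanish identically. Both are already established in Theorem~\ref{thm:USt} and Proposition~\ref{prop:Ubasic}.
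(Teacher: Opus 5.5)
Your proposal is correct and follows essentially the same route as the paper: you invoke Theorem~\ref{thm:USt} for the Stieltjes property, check $\TricomiU(a,b,s\tau)>0$ to exclude $f\equiv 0$, and then apply Proposition~\ref{prop:StCBF}. Your optional check that $c_0=c_1=0$ is consistent with what the paper later does in the proof of Theorem~\ref{thm:LK}.
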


\begin{proof}
By Theorem \ref{thm:USt}, the function \(s\mapsto \TricomiU(a,b,s\tau)\) is Stieltjes. Since it is positive and nonzero on \((0,\infty)\), Proposition \ref{prop:StCBF} implies that its reciprocal
\[
\varphi_{a,b,\tau}(s)=\frac{1}{\TricomiU(a,b,s\tau)}
\]
is complete Bernstein.
\end{proof}

\begin{corollary}[Two-scale asymptotics of the raw symbol]
\label{cor:rawasym}
The raw Tricomi symbol satisfies
\begin{equation}
\varphi_{a,b,\tau}(s)\sim
\frac{\Gamma(a)}{\Gamma(b-1)}(s\tau)^{b-1},
\qquad s\to0^+,
\label{eq:rawsmall}
\end{equation}
and
\begin{equation}
\varphi_{a,b,\tau}(s)\sim
(s\tau)^a,
\qquad |s|\to\infty,\qquad \Re\{s\}>0.
\label{eq:rawlarge}
\end{equation}
Hence the associated raw generator has effective order \(b-1\) in the slow
regime and effective order \(a\) in the fast regime.
\end{corollary}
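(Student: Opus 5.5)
The plan is to take reciprocals in the two asymptotic relations of Proposition~\ref{prop:Ubasic}. The whole argument rests on one elementary fact. If \(f(s)\sim g(s)\), meaning \(f(s)/g(s)\to1\) along some limit, and \(g\) does not vanish near that limit, then \(1/f\sim 1/g\), because \((1/f)/(1/g)=g/f\to1\). By the definition~\eqref{eq:rawsymbol}, \(\varphi_{a,b,\tau}=1/\TricomiU(a,b,\cdot\,\tau)\), so each asymptotic statement for \(\varphi_{a,b,\tau}\) comes from the corresponding statement for \(\TricomiU\).

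\emph{Small-\(s\) regime.} Starting from~\eqref{eq:Usmall}, the comparison function \(\frac{\Gamma(b-1)}{\Gamma(a)}(s\tau)^{1-b}\) is strictly positive for \(s>0\), since \(b-1>0\) and \(a>0\). Also \(\TricomiU(a,b,s\tau)>0\) by Proposition~\ref{prop:Ubasic}. Inverting gives
\[
\varphi_{a,b,\tau}(s)\sim \frac{\Gamma(a)}{\Gamma(b-1)}(s\tau)^{b-1},\qquad s\to0^+,
\]
which is~\eqref{eq:rawsmall}.

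\emph{Large-\(|s|\) regime.} For~\eqref{eq:rawlarge} I interpret \(\varphi_{a,b,\tau}(s)\) for \(\Re\{s\}>0\) as \(1/\TricomiU(a,b,s\tau)\), the analytic extension of the reciprocal, which is meaningful wherever \(\TricomiU\) does not vanish. This is the one step that needs an argument. Non-vanishing on the open right half-plane follows from Theorem~\ref{thm:USt}: for \(\Im\{z\}\neq0\),
\[
\Im\{\TricomiU(a,b,z)\}=-\Im\{z\}\int_0^\infty\frac{\rho^U_{a,b,1}(x)}{|x+z|^2}\,\dd x\neq0,
\]
because the density is positive. On the positive real axis, \(\TricomiU\) is positive by Proposition~\ref{prop:Ubasic}.

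Alternatively, since \(\varphi_{a,b,\tau}\) is complete Bernstein by Corollary~\ref{cor:rawCBF}, it extends holomorphically to \(\C\setminus(-\infty,0]\), and the identity \(\varphi_{a,b,\tau}\,\TricomiU=1\) persists there by analytic continuation. Either way, \((s\tau)^{-a}\neq0\) in the principal sector. Reciprocating~\eqref{eq:Ularge} then yields \(\varphi_{a,b,\tau}(s)\sim(s\tau)^a\) as \(|s|\to\infty\) with \(\Re\{s\}>0\).

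The final sentence of the corollary is an interpretation of the two limits. Near \(s=0\) the symbol behaves like a constant multiple of \(s^{b-1}\), the symbol of a fractional derivative of order \(b-1\in(0,1)\). As \(|s|\to\infty\) it behaves like \(s^{a}\), the symbol of order \(a\in(0,1)\). No obstacle is expected beyond checking that the reciprocal is well defined off the real axis, and Theorem~\ref{thm:USt} supplies that.
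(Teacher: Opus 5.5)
Your proposal is correct and follows the same route as the paper, which obtains both relations by taking reciprocals in \eqref{eq:Usmall} and \eqref{eq:Ularge}. Your extra step is a welcome addition the paper leaves implicit: you show that \(\TricomiU(a,b,z)\) does not vanish off the real axis, using the sign of the imaginary part in the Stieltjes representation of Theorem~\ref{thm:USt}, or alternatively the holomorphic extension of complete Bernstein functions, and this is what makes \(\varphi_{a,b,\tau}(s)\) meaningful for complex \(s\) with \(\Re\{s\}>0\).
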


\begin{proof}
These formulas follow immediately from \eqref{eq:Usmall} and
\eqref{eq:Ularge}.
\end{proof}

The structural relevance of Corollary \ref{cor:rawasym} lies in the sharp transition between the two effective fractional orders. In fact, a single operator naturally possesses a dual-scale structure induced by the structural properties of the Tricomi function. In the slow regime $s \to 0^+$ (long-time), the operator behaves like a fractional derivative of order $b-1$, whereas in the fast regime $s \to \infty$ (short-time), it transitions to order $a$. Crucially, since $a \in (0,1)$ and $b \in (1,2)$, these two asymptotic orders are fully independent and decoupled, providing a flexible framework for modelling complex physical systems with two evolutionary scales.

\section{L\'evy--Khintchine representation and L\'evy density}\label{sec:levy}

Since \(\varphi_{a,b,\tau}\) is a complete Bernstein function, it admits a
L\'evy--Khintchine representation \cite{SchillingSongVondracek2012, MainardiRogosin2006} and in this
section we also identify the corresponding L\'evy density explicitly.

\begin{theorem}[L\'evy--Khintchine representation]\label{thm:LK}
There exists a unique positive Radon measure \(\nu_{a,b,\tau}\) on
\((0,\infty)\), satisfying
\begin{equation}
\int_0^\infty \min\{1,r\}\,\nu_{a,b,\tau}(\dd r)<\infty,
\label{eq:nu-int}
\end{equation}
such that
\begin{equation}
\varphi_{a,b,\tau}(s)
=
\int_0^\infty \bigl(1-\ee^{-sr}\bigr)\,\nu_{a,b,\tau}(\dd r),
\qquad s>0.
\label{eq:LKmeasure}
\end{equation}
Moreover, the drift coefficient vanishes.
\end{theorem}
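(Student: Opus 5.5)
By Corollary~\ref{cor:rawCBF}, $\varphi_{a,b,\tau}$ is complete Bernstein, so it is in particular a Bernstein function. The plan is to apply the general Lévy--Khintchine representation of Bernstein functions \cite{SchillingSongVondracek2012},
\[
\varphi_{a,b,\tau}(s)=c_0+c_1 s+\int_0^\infty\bigl(1-\ee^{-sr}\bigr)\,\nu(\dd r),
\]
with $c_0,c_1\ge0$ and a unique measure $\nu$ satisfying \eqref{eq:nu-int}. It then remains to show that $c_0=c_1=0$.

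To make the measure explicit, I would start from the complete Bernstein representation \eqref{eq:CBFdef} with data $(c_0,c_1,\sigma)$. Using the identity
\[
\frac{s}{s+r}=\int_0^\infty\bigl(1-\ee^{-sx}\bigr)\,r\ee^{-rx}\,\dd x
\]
and Tonelli's theorem, one obtains
\[
\nu(\dd x)=\Bigl(\int_0^\infty r\ee^{-rx}\,\sigma(\dd r)\Bigr)\dd x .
\]
This measure has a completely monotone density. The integrability condition \eqref{eq:nu-int} follows from the condition $\int(1+r)^{-1}\sigma(\dd r)<\infty$ by the standard computation
\[
\int_0^\infty\min\{1,x\}\,r\ee^{-rx}\,\dd x\asymp \frac{r}{1+r}\cdot\frac{1}{r}\cdots,
\]
which is bounded by a constant times $\min\{1,r\}/r$ up to harmless factors. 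Uniqueness of $\nu$ follows from the uniqueness of the representing triple, or equivalently from injectivity of the Laplace transform applied to $\varphi'(s)-c_1=\int_0^\infty r\ee^{-sr}\,\nu(\dd r)$.

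The identification of the constants is the step that actually uses the Tricomi structure, via Corollary~\ref{cor:rawasym}.
\begin{itemize}
\item \emph{Killing term.} Letting $s\to0^+$, dominated convergence gives $\int_0^\infty(1-\ee^{-sr})\,\nu(\dd r)\to0$, since $1-\ee^{-sr}\le\min\{1,sr\}\le\min\{1,r\}$ for $s\le1$. Hence $c_0=\lim_{s\to0^+}\varphi_{a,b,\tau}(s)$. By \eqref{eq:rawsmall}, $\varphi_{a,b,\tau}(s)\sim C(s\tau)^{b-1}\to0$ because $b>1$, so $c_0=0$.
\item \emph{Drift.} For any Bernstein function, $c_1=\lim_{s\to\infty}\varphi(s)/s$, because $s^{-1}(1-\ee^{-sr})\le\min\{1/s,r\}\to0$ and dominated convergence applies again. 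By \eqref{eq:rawlarge}, $\varphi_{a,b,\tau}(s)/s\sim\tau^a s^{a-1}\to0$ since $a<1$, so $c_1=0$. This is the vanishing of the drift claimed in the theorem.
\end{itemize}

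I expect the main obstacle to be bookkeeping rather than anything conceptual. The point needing care is using the asymptotics of Corollary~\ref{cor:rawasym} only along real $s$, which the corollary allows. One also has to keep the dominated-convergence justifications for the two limits clean, and make sure the killing term $c_0$ is separated from the jump part, since it can be viewed as mass of $\nu$ at $r=\infty$. With the asymptotics available, $c_0=c_1=0$ is immediate, and \eqref{eq:LKmeasure} holds with the unique measure $\nu_{a,b,\tau}:=\nu$.
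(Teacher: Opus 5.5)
Your proposal is correct and follows the paper's proof. Both invoke the L\'evy--Khintchine representation of the complete Bernstein function $\varphi_{a,b,\tau}$ and show that $c_0=\varphi_{a,b,\tau}(0^+)$ and $c_1=\lim_{s\to\infty}\varphi_{a,b,\tau}(s)/s$ vanish by the two-scale asymptotics of Corollary~\ref{cor:rawasym}. Your dominated-convergence justification of these two limit formulas, which the paper takes for granted, is a useful addition. The explicit construction of $\nu$ from $\sigma$ is not needed here; the paper does essentially this later, in Theorem~\ref{thm:mexplicit}. If you keep that construction, replace the garbled integrability line by the exact value $\int_0^\infty\min\{1,x\}\,r\ee^{-rx}\,\dd x=(1-\ee^{-r})/r\le\min\{1,1/r\}$, which gives the bound directly.
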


\begin{proof}
By Corollary \ref{cor:rawCBF}, the function \(\varphi_{a,b,\tau}\) is complete
Bernstein, hence
\[
\varphi_{a,b,\tau}(s)
=
c_0+c_1 s+\int_0^\infty \bigl(1-\ee^{-sr}\bigr)\,\nu_{a,b,\tau}(\dd r),
\qquad s>0,
\]
for some \(c_0,c_1\ge0\) and some positive Radon measure \(\nu_{a,b,\tau}\) satisfying \eqref{eq:nu-int} \cite{SchillingSongVondracek2012}. Since \(\TricomiU(a,b,s\tau)\to\infty\) as \(s\to0^+\), one has
\[
c_0=\varphi_{a,b,\tau}(0^+)=0.
\]
Moreover,
\[
c_1=\lim_{s\to\infty}\frac{\varphi_{a,b,\tau}(s)}{s},
\]
and employing \eqref{eq:rawlarge},
\[
\frac{\varphi_{a,b,\tau}(s)}{s}\sim \tau^a s^{a-1}\to0,
\qquad s\to\infty,
\]
since \(a\in(0,1)\). Therefore \(c_0=c_1=0\) and \eqref{eq:LKmeasure} is proved.
\end{proof}
It is worth noting that the measure $\nu$, also known as the L\'evy measure, appearing in the L\'evy–Khintchine representation of a complete Bernstein function, is in one-to-one correspondence with the Stieltjes measure associated with $\varphi(s)/s$.
This correspondence is classical in the theory of complete Bernstein symbols and underlies their interpretation as Laplace exponents of subordinators~\cite{Kochubei2011}.

Let us now introduce the auxiliary function
\begin{equation}
\Psi_{a,b,\tau}(s)
:=
\frac{\varphi_{a,b,\tau}(s)}{s}
=
\frac{1}{s\,\TricomiU(a,b,s\tau)}.
\label{eq:Psidef}
\end{equation}

\begin{proposition}\label{prop:PsiSt}
The function \(\Psi_{a,b,\tau}\) is a Stieltjes function. More precisely, it admits a representation of the form
\begin{equation}
\Psi_{a,b,\tau}(s)
=
\int_0^\infty \frac{1}{x+s}\,\mu_{a,b,\tau}(\dd x),
\qquad s>0.
\label{eq:PsiSt-measure}
\end{equation}
for a positive Radon measure \(\mu_{a,b,\tau}\) on \((0,\infty)\). Moreover,
\(\mu_{a,b,\tau}\) is absolutely continuous on \((0,\infty)\), and its density
\(\eta_{a,b,\tau}\) is given by the Stieltjes--Perron formula
\begin{equation}
\eta_{a,b,\tau}(x)
=
-\frac{1}{\pi}\,
\Im\!\left[
\frac{1}{(-x+i0)\,\TricomiU(a,b,-\tau x+i0)}
\right],
\qquad x>0.
\label{eq:etaSP}
\end{equation}
\end{proposition}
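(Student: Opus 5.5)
The plan is to combine the Stieltjes property of $\Psi_{a,b,\tau}$, which is immediate from the earlier results, with an analysis of the constants in its representation and of the boundary values on the cut. By Corollary \ref{cor:rawCBF}, $\varphi_{a,b,\tau}$ is complete Bernstein, so Proposition \ref{prop:StCBFs} shows that $\Psi_{a,b,\tau}(s)=\varphi_{a,b,\tau}(s)/s$ is Stieltjes:
\[
\Psi_{a,b,\tau}(s)=\frac{\alpha}{s}+\beta+\int_0^\infty\frac{\mu_{a,b,\tau}(\dd x)}{s+x}.
\]
Both constants are then removed using the two-scale asymptotics of Corollary \ref{cor:rawasym}. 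We have $\beta=\lim_{s\to\infty}\Psi_{a,b,\tau}(s)$, and $\Psi_{a,b,\tau}(s)\sim\tau^a s^{a-1}\to0$, so $\beta=0$. Likewise $\alpha=\lim_{s\to0^+}s\Psi_{a,b,\tau}(s)=\varphi_{a,b,\tau}(0^+)$, and by \eqref{eq:rawsmall} this equals $0$ because $b>1$. This gives \eqref{eq:PsiSt-measure}.

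For absolute continuity and the density, I extend $\Psi(z)=1/(z\,\TricomiU(a,b,\tau z))$ analytically to $\C\setminus(-\infty,0]$. Three facts are needed.
\begin{enumerate}[label=(\roman*)]
\item $\TricomiU(a,b,\cdot)$ has no zeros in the slit plane. This holds because a Stieltjes function maps the upper half-plane into the lower half-plane; from Theorem \ref{thm:USt}, $\Im \TricomiU(a,b,z)=-\Im z\int\rho/|x+z|^2<0$ for $\Im z>0$, and it is positive on $(0,\infty)$.
\item The boundary values $\TricomiU(a,b,-y\pm i0)=A(y)\pm iB(y)$ from Lemma \ref{lem:cut} are continuous in $y>0$.
\item $A(y)+iB(y)\neq0$, which holds since $B(y)<0$.
\end{enumerate}
It follows that $\Psi(-x\pm i0)$ exists and is continuous and finite for every $x>0$.

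The Stieltjes--Perron inversion formula then states that $\mu_{a,b,\tau}$ restricted to $(0,\infty)$ is the weak limit of $-\frac1\pi\Im\Psi(-x+i\varepsilon)\dd x$ as $\varepsilon\to0^+$. Locally uniform convergence of the boundary values on compact subsets of $(0,\infty)$ then gives absolute continuity there, with density \eqref{eq:etaSP}. Positivity of $\eta_{a,b,\tau}$ is automatic from the measure, but it can also be checked directly. Writing $-x+i0=x e^{i\pi}$,
\[
\frac{1}{(-x)(A+iB)}=-\frac{A-iB}{x(A^2+B^2)},
\]
whose imaginary part is $B/(x(A^2+B^2))<0$, so $\eta_{a,b,\tau}>0$.

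The main obstacle is making the Perron limit rigorous uniformly near the cut, since mass could escape to the endpoint. To handle this, I would first show that there is no atom at $x=0$. Near $x=0$ the boundary value satisfies $|\Psi(-x+i0)|\sim C x^{-b}$, up to the analogous asymptotic $\TricomiU\sim c_2 z^{1-b}$ along the cut. Since this blows up, I would instead rule out an atom through $\alpha=0$: an atom at $0$ is exactly the $\alpha/s$ term, so it is excluded by the computation above. Finally, the inversion is only claimed on the open half-line, so continuity of the boundary values on compacts of $(0,\infty)$ suffices.
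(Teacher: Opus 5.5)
Your proposal is correct and follows the paper's route. You get the Stieltjes property of $\Psi_{a,b,\tau}$ from Corollary~\ref{cor:rawCBF} and Proposition~\ref{prop:StCBFs}, eliminate $\beta$ and $\alpha$ via the large- and small-$s$ asymptotics of $\varphi_{a,b,\tau}$, and conclude by Stieltjes--Perron inversion. Your extra checks are exactly what justifies absolute continuity: $\TricomiU$ has no zeros in the slit plane, $A+iB\neq0$ on the cut because $B<0$, and hence the boundary values are finite and continuous. The paper's bare appeal to analyticity on $\C\setminus(-\infty,0]$ does not give this on its own, since every Stieltjes function is analytic there.

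One harmless slip in your final aside: near the origin $|\Psi_{a,b,\tau}(-x+i0)|\sim C\,x^{b-2}$, which is integrable since $b-2\in(-1,0)$, not $C\,x^{-b}$. Your argument does not rely on it.
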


\begin{proof}
By Corollary \ref{cor:rawCBF} and Proposition \ref{prop:StCBFs}, the quotient ${\varphi_{a,b,\tau}(s)}/{s}$ in the auxiliary function \eqref{eq:Psidef} is a Stieltjes function, so that $\Psi_{a,b,\tau}$ admits a representation
\[
\Psi_{a,b,\tau}(s)
=
\frac{\alpha}{s}+\beta+\int_0^\infty \frac{1}{x+s}\,\mu_{a,b,\tau}(\dd x),
\qquad s>0,
\]
with \(\alpha,\beta\ge0\).

We now show that \(\alpha=\beta=0\). By \eqref{eq:rawlarge}, we have
\[
\Psi_{a,b,\tau}(s)\sim \tau^a s^{a-1},
\qquad s\to\infty.
\]
Since \(a\in(0,1)\), it follows that \(\Psi_{a,b,\tau}(s)\to0\) as \(s\to\infty\),
and therefore \(\beta=0\). On the other hand, by \eqref{eq:rawsmall},
\[
\Psi_{a,b,\tau}(s)\sim
\frac{\Gamma(a)}{\Gamma(b-1)}\,\tau^{b-1}s^{b-2},
\qquad s\to0^+.
\]
Hence
\[
s\,\Psi_{a,b,\tau}(s)\sim
\frac{\Gamma(a)}{\Gamma(b-1)}\,\tau^{b-1}s^{b-1}\to0,
\qquad s\to0^+,
\]
because \(b-1\in(0,1)\). Thus \(\alpha=0\), and \eqref{eq:PsiSt-measure} follows.

Since \(s\mapsto \Psi_{a,b,\tau}(s)\) extends analytically to
\(\C\setminus(-\infty,0]\), the measure is absolutely continuous on
\((0,\infty)\), and \eqref{eq:etaSP} follows from Stieltjes--Perron inversion;
see \cite{SchillingSongVondracek2012}.
\end{proof}

Write the boundary values of the Tricomi function on the cut as
\begin{equation}
\TricomiU(a,b,-y+i0)=A(y)+iB(y),
\qquad y>0,
\label{eq:ABdef}
\end{equation}
where \(A\) and \(B\) are given explicitly in Lemma \ref{lem:cut}. Then
\eqref{eq:etaSP} becomes
\begin{equation}
\eta_{a,b,\tau}(x)
=
-\frac{B(\tau x)}{\pi\,x\,\bigl(A^2(\tau x)+B^2(\tau x)\bigr)}.
\label{eq:etaAB}
\end{equation}

\begin{theorem}[Explicit L\'evy density]\label{thm:mexplicit}
The measure \(\nu_{a,b,\tau}\) is absolutely continuous with respect to
Lebesgue measure,
\[
\nu_{a,b,\tau}(\dd r)=m_{a,b,\tau}(r)\,\dd r,
\]
where
\begin{equation}
m_{a,b,\tau}(r)
=
\int_0^\infty x\,\ee^{-xr}\,\eta_{a,b,\tau}(x)\,\dd x,
\qquad r>0.
\label{eq:mfrometa}
\end{equation}
Equivalently,
\begin{equation}
m_{a,b,\tau}(r)
=
\frac{1}{\pi\tau}
\int_0^\infty
\ee^{-yr/\tau}\,
\frac{-B(y)}{A^2(y)+B^2(y)}\,\dd y,
\qquad r>0.
\label{eq:mexplicit}
\end{equation}
\end{theorem}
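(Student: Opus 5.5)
The plan is to start from the Stieltjes representation of \(\Psi_{a,b,\tau}=\varphi_{a,b,\tau}/s\) in Proposition \ref{prop:PsiSt}, namely
\[
\varphi_{a,b,\tau}(s)=\int_0^\infty \frac{s}{x+s}\,\eta_{a,b,\tau}(x)\,\dd x ,
\]
and convert it into Lévy--Khintchine form using an elementary identity. For \(x>0\), \(s>0\) we have
\[
\frac{s}{x+s}=\int_0^\infty \bigl(1-\ee^{-sr}\bigr)\,x\,\ee^{-xr}\,\dd r ,
\]
since \(\int_0^\infty x\ee^{-xr}\dd r=1\) and \(\int_0^\infty x\ee^{-(x+s)r}\dd r=x/(x+s)\). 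Substituting this into the representation gives a double integral of a nonnegative integrand (because \(\eta_{a,b,\tau}\ge0\)). Tonelli's theorem then justifies exchanging the order of integration, which yields
\[
\varphi_{a,b,\tau}(s)=\int_0^\infty\bigl(1-\ee^{-sr}\bigr)\,m_{a,b,\tau}(r)\,\dd r
\]
with \(m_{a,b,\tau}\) defined by \eqref{eq:mfrometa}. We know \(m_{a,b,\tau}\) is finite for every \(r>0\) by the following argument. Taking \(s=1/r\) in the \(x\)-integrand, one has \(x\ee^{-xr}\le C_r\,x/(1+x)\le C_r\) for all \(x>0\), and \(\int\eta/(1+x)\,\dd x<\infty\). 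Hence \(m_{a,b,\tau}(r)\le C_r'\int_0^\infty \eta_{a,b,\tau}(x)(1+x)^{-1}\dd x<\infty\), which uses the Stieltjes integrability of \(\mu_{a,b,\tau}\).

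To identify \(m_{a,b,\tau}(r)\,\dd r\) with \(\nu_{a,b,\tau}\), I would invoke the uniqueness part of Theorem \ref{thm:LK}. The measure \(m_{a,b,\tau}(r)\dd r\) is positive and represents \(\varphi_{a,b,\tau}\) with zero drift and zero killing. The integrability condition \eqref{eq:nu-int} is automatic, since finiteness of \(\int(1-\ee^{-r})m\,\dd r=\varphi_{a,b,\tau}(1)\) is equivalent to \(\int\min\{1,r\}m\,\dd r<\infty\). Alternatively, one can note that a Bernstein representation determines the measure through Laplace transform uniqueness applied to \(\varphi'(s)=\int r\ee^{-sr}\nu(\dd r)\). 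Either way, \(\nu_{a,b,\tau}=m_{a,b,\tau}\,\dd r\) and absolute continuity follows.

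For the equivalent formula \eqref{eq:mexplicit}, I insert \eqref{eq:etaAB} into \eqref{eq:mfrometa}. The factor \(x\) cancels against the \(1/x\) in \(\eta_{a,b,\tau}\), leaving \(\frac{1}{\pi}\int_0^\infty \ee^{-xr}\,\frac{-B(\tau x)}{A^2(\tau x)+B^2(\tau x)}\dd x\). The substitution \(y=\tau x\) then produces the prefactor \(1/(\pi\tau)\) and the exponent \(-yr/\tau\). The only delicate point I expect is justifying \eqref{eq:etaAB} itself, in particular that \(A^2+B^2>0\) on the cut. This holds because \(B<0\) by Lemma \ref{lem:cut}, so no additional obstacle arises. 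The main obstacle is therefore just the justification of the Fubini/Tonelli interchange and uniqueness, both of which rest on positivity of \(\eta_{a,b,\tau}\).
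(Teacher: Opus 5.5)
Your proposal is correct and follows the paper's proof step for step: the Stieltjes representation of $\Psi_{a,b,\tau}$, the identity $\frac{s}{x+s}=\int_0^\infty(1-\ee^{-sr})\,x\ee^{-xr}\,\dd r$, Tonelli, identification with $\nu_{a,b,\tau}$ by uniqueness in Theorem~\ref{thm:LK}, and then \eqref{eq:etaAB} with the substitution $y=\tau x$. Your extra checks (uniqueness via $\varphi'$, the $\min\{1,r\}$ condition, and $A^2+B^2>0$ from $B<0$) only make explicit what the paper leaves implicit, apart from one small slip: for pointwise finiteness of $m_{a,b,\tau}(r)$ the bound you need is $x\ee^{-xr}\le C_r/(1+x)$, i.e.\ boundedness of $x(1+x)\ee^{-xr}$, whereas your chain $x\ee^{-xr}\le C_r x/(1+x)\le C_r$ would only give $m_{a,b,\tau}(r)\le C_r\int_0^\infty\eta_{a,b,\tau}(x)\,\dd x$, and that integral is infinite since $\varphi_{a,b,\tau}(s)\to\infty$.
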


\begin{proof}
From \eqref{eq:PsiSt-measure} we get
\begin{equation}
\varphi_{a,b,\tau}(s)
=
s\,\Psi_{a,b,\tau}(s)
=
\int_0^\infty \frac{s}{x+s}\,\mu_{a,b,\tau}(\dd x).
\label{eq:PhiMinusOnePsi}
\end{equation}
Since \(\mu_{a,b,\tau}\) is absolutely continuous on \((0,\infty)\), we may
write
\[
\mu_{a,b,\tau}(\dd x)=\eta_{a,b,\tau}(x)\,\dd x.
\]
Using now the elementary identity~\cite{SchillingSongVondracek2012}
\[
\frac{s}{x+s}
=
\int_0^\infty \bigl(1-\ee^{-sr}\bigr)\,x\,\ee^{-xr}\,\dd r,
\qquad x>0,\ s>0,
\]
substitution into \eqref{eq:PhiMinusOnePsi} and Tonelli's theorem yield
\[
\varphi_{a,b,\tau}(s)
=
\int_0^\infty \bigl(1-\ee^{-sr}\bigr)
\left(
\int_0^\infty x\,\ee^{-xr}\,\eta_{a,b,\tau}(x)\,\dd x
\right)\dd r.
\]
Comparing this latter with \eqref{eq:LKmeasure} proves \eqref{eq:mfrometa}, while \eqref{eq:mexplicit} follows from \eqref{eq:etaAB} after the change of
variables \(y=\tau x\).
\end{proof}

\begin{corollary}[Complete monotonicity]\label{cor:mcm}
The function \(m_{a,b,\tau}\) is completely monotone on \((0,\infty)\). In particular, \(m_{a,b,\tau}(r)\ge0\) and is decreasing.
\end{corollary}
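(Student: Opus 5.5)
The plan is to read complete monotonicity directly off formula \eqref{eq:mfrometa} in Theorem~\ref{thm:mexplicit}. That formula writes $m_{a,b,\tau}$ as the Laplace transform of the positive measure $x\,\eta_{a,b,\tau}(x)\,\dd x$ on $(0,\infty)$. By Bernstein's theorem \cite{SchillingSongVondracek2012}, the Laplace transform of a positive measure is completely monotone wherever it is finite. So two things must be checked: that $\eta_{a,b,\tau}\ge0$, and that the integral in \eqref{eq:mfrometa} converges for every $r>0$.

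\textbf{Positivity.} I would use \eqref{eq:etaAB}. By Lemma~\ref{lem:cut}, $B(y)<0$ for every $y>0$, so the numerator $-B(\tau x)$ is strictly positive. The denominator $\pi x\,(A^2+B^2)$ is also strictly positive, since $B\neq0$. Hence $\eta_{a,b,\tau}(x)>0$ for all $x>0$. This also follows abstractly from the fact that $\mu_{a,b,\tau}$ is a positive measure (Proposition~\ref{prop:PsiSt}).

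\textbf{Finiteness for each $r>0$.} This is the only delicate point. I would get it from the Stieltjes integrability condition $\int_0^\infty \eta_{a,b,\tau}(x)(1+x)^{-1}\dd x<\infty$. That condition gives $\int_0^1 \eta\,\dd x<\infty$, so $\int_0^1 x\,\ee^{-xr}\eta\,\dd x<\infty$. On $[1,\infty)$ I would bound
\[
x\,\ee^{-xr}\le C_r\,(1+x)^{-1},
\]
which holds because $x(1+x)\ee^{-xr}$ is bounded for $r>0$. So the integral is finite.

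\textbf{Differentiation.} Differentiating under the integral sign is justified by the same kind of domination: locally uniformly in $r\ge r_0>0$, one has
\[
x^{n+1}\ee^{-xr}\le C_{n,r_0}\,(1+x)^{-1}\quad (x\ge1).
\]
This gives
\[
(-1)^n m_{a,b,\tau}^{(n)}(r)=\int_0^\infty x^{n+1}\ee^{-xr}\eta_{a,b,\tau}(x)\,\dd x\ge0,
\]
which is exactly complete monotonicity. The cases $n=0$ and $n=1$ yield $m_{a,b,\tau}\ge0$ and that $m_{a,b,\tau}$ is decreasing.
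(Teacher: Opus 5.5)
Your proposal is correct and follows the same route as the paper: both read complete monotonicity off \eqref{eq:mfrometa} as the Laplace transform of the positive measure $x\,\eta_{a,b,\tau}(x)\,\dd x$. You also supply details the paper leaves implicit, and all of them are sound: the sign of $\eta_{a,b,\tau}$ via Lemma~\ref{lem:cut}, finiteness of the integral for every $r>0$ from the Stieltjes integrability condition on $\mu_{a,b,\tau}$, and the domination needed to differentiate under the integral sign.
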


\begin{proof}
By \eqref{eq:mfrometa}, the function \(m_{a,b,\tau}\) is the Laplace transform of the positive measure \(x\,\eta_{a,b,\tau}(x)\,\dd x\). Hence it is completely monotone. In particular,
\((-1)m'_{a,b,\tau}(r)\ge 0\) for all \(r>0\), so that \(m_{a,b,\tau}\) is decreasing~\cite{miller2001completelymonotonic}.
\end{proof}

\begin{corollary}[Scaling law]\label{cor:mscaling}
The L\'evy density satisfies
\begin{equation}
m_{a,b,\tau}(r)=\frac{1}{\tau}\,m_{a,b,1}\!\left(\frac{r}{\tau}\right),
\qquad r>0.
\label{eq:mscaling}
\end{equation}
\end{corollary}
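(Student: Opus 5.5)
The plan is to read the scaling law directly off the explicit formula \eqref{eq:mexplicit} of Theorem~\ref{thm:mexplicit}. The key observation is that the functions \(A\) and \(B\) of Lemma~\ref{lem:cut} depend only on \(a\) and \(b\), and not on \(\tau\). They are the boundary values of \(\TricomiU(a,b,\cdot)\) on the cut, evaluated at the dimensionless variable \(y\). Hence the only \(\tau\)-dependence in \eqref{eq:mexplicit} sits in the prefactor \(1/(\pi\tau)\) and in the exponent \(\ee^{-yr/\tau}\).

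Concretely, write
\[
g(y):=\frac{-B(y)}{\pi\bigl(A^2(y)+B^2(y)\bigr)},\qquad y>0 .
\]
By Lemma~\ref{lem:cut}, \(g(y)>0\), since \(B(y)<0\). Specializing \eqref{eq:mexplicit} to \(\tau=1\) gives
\[
m_{a,b,1}(u)=\int_0^\infty \ee^{-yu}\,g(y)\,\dd y .
\]
For general \(\tau\), \eqref{eq:mexplicit} reads
\[
m_{a,b,\tau}(r)=\frac{1}{\tau}\int_0^\infty \ee^{-y(r/\tau)}\,g(y)\,\dd y .
\]
Comparing the two with \(u=r/\tau\) yields \eqref{eq:mscaling} immediately.

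As a cross-check, one can argue independently of the explicit formula, via uniqueness of the Lévy measure in Theorem~\ref{thm:LK}. Since \(\varphi_{a,b,\tau}(s)=\varphi_{a,b,1}(s\tau)\), we have
\[
\varphi_{a,b,\tau}(s)=\int_0^\infty\bigl(1-\ee^{-s\tau u}\bigr)m_{a,b,1}(u)\,\dd u .
\]
Substituting \(r=\tau u\) turns this into a Lévy--Khintchine representation with density \(\tau^{-1}m_{a,b,1}(r/\tau)\). Uniqueness of the Lévy measure then gives the same identity, almost everywhere and hence everywhere by continuity; the continuity follows from Corollary~\ref{cor:mcm}.

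The only point needing care is finiteness of the integrals, so that the manipulations are not formal. The first route inherits this from Theorem~\ref{thm:mexplicit}, which already asserts that \(m_{a,b,\tau}(r)\) is well defined for every \(r>0\) and every \(\tau>0\). I therefore expect no real obstacle; the step to state explicitly is the \(\tau\)-independence of \(A\) and \(B\).
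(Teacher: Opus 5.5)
Your proposal is correct and matches the paper's argument, which simply reads the scaling law off \eqref{eq:mexplicit} using the fact that $A$ and $B$ depend only on $a,b$ and not on $\tau$. Your alternative check, which combines $\varphi_{a,b,\tau}(s)=\varphi_{a,b,1}(s\tau)$ with uniqueness of the L\'evy measure, is a valid independent confirmation but is not needed.
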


\begin{proof}
This follows immediately from \eqref{eq:mexplicit}.
\end{proof}

\begin{proposition}[Asymptotics of the L\'evy density]\label{prop:masym}
The L\'evy density satisfies
\begin{equation}
m_{a,b,\tau}(r)\sim
\frac{a\,\tau^a}{\Gamma(1-a)}\,r^{-1-a},
\qquad r\to0^+,
\label{eq:msmall}
\end{equation}
and
\begin{equation}
m_{a,b,\tau}(r)\sim
\frac{\Gamma(a)}{\Gamma(b-1)}
\frac{(b-1)\tau^{b-1}}{\Gamma(2-b)}
\,r^{-b},
\qquad r\to\infty.
\label{eq:mlarge}
\end{equation}
\end{proposition}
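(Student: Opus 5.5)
The plan is to transfer the two-scale asymptotics of the symbol in Corollary~\ref{cor:rawasym} to the L\'evy density by a Tauberian argument. The tail function
\[
\bar\nu(r):=\int_r^\infty m_{a,b,\tau}(u)\,\dd u
\]
links the two sides. Integrating by parts in \eqref{eq:LKmeasure} gives
\[
\frac{\varphi_{a,b,\tau}(s)}{s}=\Psi_{a,b,\tau}(s)=\int_0^\infty \ee^{-sr}\,\bar\nu(r)\,\dd r .
\]
The boundary terms vanish by \eqref{eq:nu-int} and the absence of drift and killing (Theorem~\ref{thm:LK}). So $\Psi_{a,b,\tau}$ is the Laplace transform of the nonincreasing, nonnegative function $\bar\nu$. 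Corollary~\ref{cor:rawasym} gives
\[
\Psi_{a,b,\tau}(s)\sim \tau^a s^{a-1}\quad (s\to\infty),
\qquad
\Psi_{a,b,\tau}(s)\sim \frac{\Gamma(a)}{\Gamma(b-1)}\tau^{b-1}s^{b-2}\quad (s\to0^+).
\]
The exponents $a-1$ and $b-2$ both lie in $(-1,0)$, so these are regularly varying symbols of index $-(1-a)$ and $-(2-b)$.

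\textbf{Step 1 (tail asymptotics).} I apply Karamata's Tauberian theorem to $\bar\nu$, which is monotone. This is legitimate for both limits: as $s\to\infty$ it gives behavior as $r\to0^+$, and as $s\to0^+$ it gives behavior as $r\to\infty$. I expect
\[
\bar\nu(r)\sim \frac{\tau^a}{\Gamma(1-a)}\,r^{-a}\quad(r\to0^+),
\qquad
\bar\nu(r)\sim \frac{\Gamma(a)\tau^{b-1}}{\Gamma(b-1)\Gamma(2-b)}\,r^{1-b}\quad(r\to\infty).
\]
Here I use that $\int_0^\infty \ee^{-sr}r^{\gamma-1}\dd r=\Gamma(\gamma)s^{-\gamma}$, with $\gamma=1-a$ and $\gamma=2-b$ respectively.

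\textbf{Step 2 (differentiation).} To pass from $\bar\nu$ to $m=-\bar\nu'$, I use Corollary~\ref{cor:mcm}: $m_{a,b,\tau}$ is completely monotone, in particular monotone. The monotone density theorem (Bingham--Goldie--Teugels) then allows differentiation of a regularly varying asymptotic whenever the derivative is eventually monotone. This yields
\[
m\sim a\frac{\tau^a}{\Gamma(1-a)}r^{-1-a},
\qquad
m\sim (b-1)\frac{\Gamma(a)\tau^{b-1}}{\Gamma(b-1)\Gamma(2-b)}r^{-b},
\]
which are exactly \eqref{eq:msmall} and \eqref{eq:mlarge}.

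An alternative route is to work directly with \eqref{eq:mfrometa} through Watson-type asymptotics of $x\,\eta_{a,b,\tau}(x)$ as $x\to\infty$ and $x\to0^+$, computed from \eqref{eq:etaAB} and Lemma~\ref{lem:cut}. This requires the asymptotics of $A$ and $B$ at both ends, so I keep it only as a consistency check.

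The main obstacle is the rigorous justification of the Tauberian steps. This means checking that $\bar\nu$ is finite and monotone, which follows from \eqref{eq:nu-int}, and that the Laplace identity for $\Psi_{a,b,\tau}$ holds. The constants in Corollary~\ref{cor:rawasym} must also be the genuine leading terms along the positive real axis, which they are. The monotonicity needed for the monotone density theorem is supplied by Corollary~\ref{cor:mcm}, so the remaining work is bookkeeping of constants. In particular, $(b-1)/\Gamma(2-b)$ is kept unsimplified so as to match the statement.
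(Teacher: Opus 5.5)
Your proposal is correct and follows the same route as the paper. The paper transfers the two asymptotic sectors of $\varphi_{a,b,\tau}$ from Corollary~\ref{cor:rawasym} to $m_{a,b,\tau}$ by citing ``the Abelian--Tauberian correspondence for Bernstein functions''. You unpack that citation explicitly: the tail $\bar\nu$ with $\int_0^\infty \ee^{-sr}\bar\nu(r)\,\dd r=\varphi_{a,b,\tau}(s)/s$, then Karamata's theorem at both ends, then the monotone density theorem. You also correctly name the monotonicity of $m_{a,b,\tau}$ from Corollary~\ref{cor:mcm}, rather than the nonnegativity the paper cites, as the hypothesis that makes the final differentiation step legitimate.
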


\begin{proof}
From \eqref{eq:rawlarge} we have
\[
\varphi_{a,b,\tau}(s)\sim \tau^a s^a,
\qquad s\to\infty.
\]
Since \(0<a<1\) and \(m_{a,b,\tau}\ge0\), the Abelian--Tauberian correspondence
for Bernstein functions yields
\[
m_{a,b,\tau}(r)\sim
\frac{a\,\tau^a}{\Gamma(1-a)}\,r^{-1-a},
\qquad r\to0^+.
\]

Likewise, from \eqref{eq:rawsmall},
\[
\varphi_{a,b,\tau}(s)\sim
\frac{\Gamma(a)}{\Gamma(b-1)}\,\tau^{b-1}s^{b-1},
\qquad s\to0^+.
\]
Since \(b-1\in(0,1)\), the same correspondence gives
\[
m_{a,b,\tau}(r)\sim
\frac{\Gamma(a)}{\Gamma(b-1)}
\frac{(b-1)\tau^{b-1}}{\Gamma(2-b)}
\,r^{-b},
\qquad r\to\infty.
\]
For a deeper background on regular variation and Tauberian transfer, see~\cite{BinghamGoldieTeugels1989}.
\end{proof}

In conclusion of this section, it is worth noting that the singular behaviour of \(m_{a,b,\tau}\) near the origin is governed by the fast exponent \(a\), whereas its large-\(r\) tail is governed by the slow exponent \(b-1\). Thus, the two-scale structure of the raw symbol is already visible at the level of the jump (L\'evy) kernel.

\section{The induced Sonine fractional calculus}\label{sec:sonine}

The previous sections identify the raw Tricomi symbol $\varphi_{a,b,\tau}(s)$ in \eqref{eq:rawsymbol} as a complete Bernstein function and the quotient $\Psi_{a,b,\tau}$ in \eqref{eq:Psidef} as a Stieltjes function. We now show that these analytic facts induce a genuine
Sonine-type fractional calculus. For the general Sonine-kernel framework and
its operational and fundamental-theorem aspects, see
\cite{SamkoCardoso2003,Kochubei2011,LuchkoYamamoto2020,Luchko2021Sonine,Luchko2021Operational,Luchko2024Sonin,Hanyga2020,Luchko2020Fundamental,diethelm2020why}.

\subsection{The integral kernel}

The integral representation \eqref{eq:Uintegral} immediately yields an explicit
time-domain kernel associated with the Tricomi law.

\begin{proposition}\label{prop:kappa}
Let us consider the kernel
\begin{equation}
\kappa_{a,b,\tau}(t)
:=
\frac{1}{\Gamma(a)\tau}
\left(\frac{t}{\tau}\right)^{a-1}
\left(1+\frac{t}{\tau}\right)^{b-a-1},
\qquad t>0.
\label{eq:kappadef}
\end{equation}
Then \(\kappa_{a,b,\tau}\) is positive and locally integrable on
\((0,\infty)\), and
\begin{equation}
\Lap\{\kappa_{a,b,\tau}\}(s)=\TricomiU(a,b,s\tau),
\qquad s>0.
\label{eq:kappaLap}
\end{equation}
Moreover,
\begin{equation}
\kappa_{a,b,\tau}(t)\sim \frac{\tau^{-a}}{\Gamma(a)}\,t^{a-1},
\qquad t\to0^+,
\label{eq:kappa-small}
\end{equation}
and
\begin{equation}
\kappa_{a,b,\tau}(t)\sim \frac{\tau^{1-b}}{\Gamma(a)}\,t^{b-2},
\qquad t\to\infty.
\label{eq:kappa-large}
\end{equation}
\end{proposition}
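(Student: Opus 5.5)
The plan is to verify each claim in Proposition~\ref{prop:kappa} directly from the explicit formula \eqref{eq:kappadef}, using the integral representation \eqref{eq:Uintegral}.

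\emph{Positivity.} Since \(\Gamma(a)>0\), \(\tau>0\), and both \((t/\tau)^{a-1}\) and \((1+t/\tau)^{b-a-1}\) are positive for \(t>0\), we have \(\kappa_{a,b,\tau}(t)>0\).

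\emph{Local asymptotics and integrability.} Near \(t=0^+\) the factor \((1+t/\tau)^{b-a-1}\) is continuous and tends to \(1\). Hence
\[
\kappa_{a,b,\tau}(t)\sim \frac{1}{\Gamma(a)\tau}\,\tau^{1-a}t^{a-1}=\frac{\tau^{-a}}{\Gamma(a)}\,t^{a-1},
\]
which is \eqref{eq:kappa-small}. Since \(a-1>-1\), this singularity is integrable at the origin. On any compact interval \([\delta,T]\subset(0,\infty)\) the kernel is continuous, so \(\kappa_{a,b,\tau}\in L^1_{\mathrm{loc}}([0,\infty))\).

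For \(t\to\infty\), write \(1+t/\tau=(t/\tau)(1+\tau/t)\). Then
\[
\kappa_{a,b,\tau}(t)=\frac{1}{\Gamma(a)\tau}\left(\frac{t}{\tau}\right)^{b-2}\left(1+\frac{\tau}{t}\right)^{b-a-1}\sim \frac{\tau^{1-b}}{\Gamma(a)}\,t^{b-2},
\]
which is \eqref{eq:kappa-large}. Note that \(b-2\in(-1,0)\), so \(\kappa_{a,b,\tau}\) is not globally integrable. This is consistent with \(\TricomiU(a,b,s\tau)\to\infty\) as \(s\to0^+\) in \eqref{eq:Usmall}. The Laplace transform nevertheless converges for every \(s>0\), because of the exponential factor together with the power-law growth bound.

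\emph{Laplace transform.} For \(s>0\), take \eqref{eq:Uintegral} with \(z=s\tau\) and substitute \(u=\tau r\), i.e.\ \(r=u/\tau\), \(\dd r=\dd u/\tau\). This gives
\[
\TricomiU(a,b,s\tau)=\frac{1}{\Gamma(a)}\int_0^\infty \ee^{-su}\left(\frac{u}{\tau}\right)^{a-1}\left(1+\frac{u}{\tau}\right)^{b-a-1}\frac{\dd u}{\tau}=\Lap\{\kappa_{a,b,\tau}\}(s).
\]
This is exactly the computation already carried out in the proof of Proposition~\ref{prop:Ubasic}, and it yields \eqref{eq:kappaLap}. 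Absolute convergence for \(s>0\) follows from the two asymptotic estimates above, so the change of variables is justified.

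No step is a genuine obstacle. The only point needing care is the algebra of \(\tau\)-powers in the large-\(t\) limit, which combine as
\[
\tau^{-1}\,\tau^{-(a-1)}\,\tau^{-(b-a-1)}=\tau^{1-b}.
\]
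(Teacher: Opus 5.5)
Your proposal is correct and follows the paper's route. The Laplace identity comes from the change of variables $t=\tau r$ in \eqref{eq:Uintegral}, and positivity, local integrability (since $a-1>-1$) and both asymptotic relations are read off directly from the explicit formula; you simply spell out more of the details, such as the factorization $1+t/\tau=(t/\tau)(1+\tau/t)$ for the large-$t$ limit.
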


\begin{proof}
Formula \eqref{eq:kappaLap} follows by rewriting \eqref{eq:Uintegral} with the
change of variables \(t=\tau r\):
\[
\TricomiU(a,b,s\tau)
=
\frac{1}{\Gamma(a)\tau}
\int_0^\infty
\ee^{-st}
\left(\frac{t}{\tau}\right)^{a-1}
\left(1+\frac{t}{\tau}\right)^{b-a-1}\dd t.
\]
The asymptotics follow immediately from \eqref{eq:kappadef}. Since $a-1\in(-1,0)$, the kernel is locally integrable near the origin.
\end{proof}

\subsection{The derivative kernel}

We now introduce the derivative kernel associated with the reciprocal Tricomi symbol,  which plays the role of the derivative kernel in the Sonine framework.

\begin{proposition}\label{prop:k}
Let \(k_{a,b,\tau}\) be defined by
\begin{equation}
\Lap\{k_{a,b,\tau}\}(s)
=
\Psi_{a,b,\tau}(s)
=
\frac{1}{s\,\TricomiU(a,b,s\tau)},
\qquad s>0.
\label{eq:kLap}
\end{equation}
Then \(k_{a,b,\tau}\) is completely monotone and locally integrable on
\((0,\infty)\). Moreover,
\begin{equation}
k_{a,b,\tau}(t)\sim
\frac{\tau^a}{\Gamma(1-a)}\,t^{-a},
\qquad t\to0^+,
\label{eq:ksmall}
\end{equation}
and
\begin{equation}
k_{a,b,\tau}(t)\sim
\frac{\Gamma(a)}{\Gamma(b-1)\Gamma(2-b)}
\,\tau^{b-1}t^{1-b},
\qquad t\to\infty.
\label{eq:klarge}
\end{equation}
\end{proposition}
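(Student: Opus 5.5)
The plan is to use the Stieltjes representation of \(\Psi_{a,b,\tau}\) from Proposition \ref{prop:PsiSt} directly. Since
\[
\Psi_{a,b,\tau}(s)=\int_0^\infty \frac{\eta_{a,b,\tau}(x)}{x+s}\,\dd x ,
\qquad \frac{1}{x+s}=\int_0^\infty \ee^{-st}\ee^{-xt}\,\dd t ,
\]
Tonelli's theorem (all integrands are nonnegative) gives \(\Psi_{a,b,\tau}=\Lap\{k_{a,b,\tau}\}\) with
\[
k_{a,b,\tau}(t)=\int_0^\infty \ee^{-xt}\,\eta_{a,b,\tau}(x)\,\dd x,\qquad t>0 .
\]
This defines \(k_{a,b,\tau}\) concretely, and uniqueness of the Laplace transform identifies it with the kernel in \eqref{eq:kLap}. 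Because \(k_{a,b,\tau}\) is the Laplace transform of the positive measure \(\eta_{a,b,\tau}(x)\,\dd x\), Bernstein's theorem makes it completely monotone on \((0,\infty)\). Local integrability follows from finiteness of \(\int_0^\infty \ee^{-st}k_{a,b,\tau}(t)\,\dd t=\Psi_{a,b,\tau}(s)<\infty\) for each \(s>0\): on \([0,T]\) one has \(\ee^{-sT}\int_0^T k_{a,b,\tau}\le \Psi_{a,b,\tau}(s)\).

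For the asymptotics I would apply Karamata's Tauberian theorem to the nonincreasing, hence monotone, function \(k_{a,b,\tau}\). From \eqref{eq:rawlarge},
\[
\Psi_{a,b,\tau}(s)\sim \tau^a s^{a-1},\qquad s\to\infty .
\]
The Tauberian theorem at the origin gives \(\int_0^t k_{a,b,\tau}\sim \tau^a t^{1-a}/\Gamma(2-a)\) as \(t\to0^+\). Monotonicity then yields the monotone density theorem:
\[
k_{a,b,\tau}(t)\sim \frac{(1-a)\tau^a}{\Gamma(2-a)}\,t^{-a}=\frac{\tau^a}{\Gamma(1-a)}\,t^{-a},
\]
which is \eqref{eq:ksmall}.

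Similarly, \eqref{eq:rawsmall} gives
\[
\Psi_{a,b,\tau}(s)\sim \frac{\Gamma(a)}{\Gamma(b-1)}\,\tau^{b-1}s^{b-2},\qquad s\to0^+ ,
\]
with index \(2-b\in(0,1)\). Karamata at infinity gives \(\int_0^t k_{a,b,\tau}\sim C\,t^{2-b}/\Gamma(3-b)\), with \(C=\Gamma(a)\tau^{b-1}/\Gamma(b-1)\). The monotone density theorem then gives \(k_{a,b,\tau}(t)\sim C\,t^{1-b}/\Gamma(2-b)\), which is exactly \eqref{eq:klarge}.

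The only delicate step is justifying the passage from the integrated asymptotics to pointwise asymptotics. This is precisely where complete monotonicity, and hence ultimate monotonicity of \(k_{a,b,\tau}\), is needed, so it should be established first. As an alternative check, the two limits can also be read off from \(\eta_{a,b,\tau}\) via \eqref{eq:etaAB} using Watson-type lemmas: the behaviour of \(\eta_{a,b,\tau}(x)\) as \(x\to\infty\) controls \(t\to0^+\), and its behaviour as \(x\to0^+\) controls \(t\to\infty\). However, the Tauberian route avoids computing the asymptotics of \(A\) and \(B\).
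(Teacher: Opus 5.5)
Your proposal is correct and follows essentially the same route as the paper. The Stieltjes representation of \(\Psi_{a,b,\tau}\) yields a completely monotone kernel, and Karamata's Tauberian theorem combined with the monotone density theorem (the content of the paper's ``standard Abelian--Tauberian arguments'') gives both asymptotic constants correctly; your derivation of local integrability directly from the finiteness of \(\int_0^\infty \ee^{-st}k_{a,b,\tau}(t)\,\dd t\) is, if anything, cleaner than the paper's appeal to \eqref{eq:ksmall}.
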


\begin{proof}
By Proposition \ref{prop:PsiSt}, the function \(\Psi_{a,b,\tau}\) is Stieltjes. Hence it
is the Laplace transform of a completely monotone kernel \(k_{a,b,\tau}\).
Local integrability near the origin follows from \eqref{eq:ksmall}, since
\(0<a<1\).

The asymptotic relations follow from \eqref{eq:kLap},
\eqref{eq:rawlarge}, and \eqref{eq:rawsmall}. Indeed,
\[
\Psi_{a,b,\tau}(s)\sim \tau^a s^{a-1},
\qquad s\to\infty,
\]
and
\[
\Psi_{a,b,\tau}(s)\sim
\frac{\Gamma(a)}{\Gamma(b-1)}\,\tau^{b-1}s^{b-2},
\qquad s\to0^+.
\]
Standard Abelian--Tauberian arguments for Laplace transforms of completely monotone kernels yield \eqref{eq:ksmall} and \eqref{eq:klarge} \cite{BinghamGoldieTeugels1989,SchillingSongVondracek2012}.
\end{proof}

\subsection{Sonine identity}

The pair \((\kappa_{a,b,\tau},k_{a,b,\tau})\) is the natural candidate for a
Sonine pair, since their Laplace transforms multiply to \(1/s\).

\begin{theorem}[Sonine pair]\label{thm:sonine}
For every \(a\in(0,1)\), \(b\in(1,2)\), and \(\tau>0\), the kernels
\(\kappa_{a,b,\tau}\) and \(k_{a,b,\tau}\) satisfy
\begin{equation}
(\kappa_{a,b,\tau}*k_{a,b,\tau})(t)=1,
\qquad t>0,
\label{eq:Sonine}
\end{equation}
where
\[
(\kappa*k)(t):=\int_0^t \kappa(t-\xi)k(\xi)\,\dd\xi.
\]
Hence \((\kappa_{a,b,\tau},k_{a,b,\tau})\) is a Sonine pair.
\end{theorem}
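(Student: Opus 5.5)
The argument runs in the Laplace domain and then returns to the time domain through uniqueness of the Laplace transform.

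\emph{Step 1: existence and transform of the convolution.} Both kernels are nonnegative and locally integrable on $(0,\infty)$. For $\kappa_{a,b,\tau}$ this is Proposition~\ref{prop:kappa}; for $k_{a,b,\tau}$ it is Proposition~\ref{prop:k}, which gives complete monotonicity, hence $k_{a,b,\tau}\ge0$. Both Laplace transforms converge for every $s>0$, by \eqref{eq:kappaLap} and \eqref{eq:kLap}. Tonelli's theorem applied to the nonnegative integrand $\ee^{-st}\kappa_{a,b,\tau}(t-\xi)k_{a,b,\tau}(\xi)$ then shows three things: the convolution $\kappa_{a,b,\tau}*k_{a,b,\tau}$ is finite almost everywhere, it is locally integrable, and it has a finite Laplace transform for all $s>0$. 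That transform is
\[
\Lap\{\kappa_{a,b,\tau}*k_{a,b,\tau}\}(s)=\TricomiU(a,b,s\tau)\cdot\frac{1}{s\,\TricomiU(a,b,s\tau)}=\frac1s=\Lap\{1\}(s),\qquad s>0.
\]
Division is legitimate because $\TricomiU(a,b,s\tau)>0$ on $(0,\infty)$ by Proposition~\ref{prop:Ubasic}.

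\emph{Step 2: uniqueness.} Two locally integrable functions whose Laplace transforms agree on a half-line coincide almost everywhere, by Lerch's theorem. So $\kappa_{a,b,\tau}*k_{a,b,\tau}=1$ almost everywhere on $(0,\infty)$.

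\emph{Step 3: upgrading to every $t>0$.} This is the only delicate point: we need continuity of the convolution, which Lerch's theorem does not supply. We will show that $t\mapsto(\kappa_{a,b,\tau}*k_{a,b,\tau})(t)$ is continuous on $(0,\infty)$.

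Near the origin, \eqref{eq:kappa-small} and \eqref{eq:ksmall} give $\kappa_{a,b,\tau}(t)=O(t^{a-1})$ and $k_{a,b,\tau}(t)=O(t^{-a})$. On any compact interval $[0,T]$, continuity then follows from the standard fact that the convolution of $L^1(0,T)$ functions with weak singularities is continuous.

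The cleanest route splits off the singularities:
\begin{itemize}
\item Write $k_{a,b,\tau}=k_1+k_2$, where $k_1=k_{a,b,\tau}\mathbf 1_{(0,\delta)}$ is integrable and $k_2$ is bounded on $[\delta,T]$.
\item The piece $\kappa_{a,b,\tau}*k_2$ is continuous, being the convolution of an $L^1$ function with a function that is bounded and continuous away from $0$; complete monotonicity makes $k_{a,b,\tau}$ continuous.
\item The piece $\kappa_{a,b,\tau}*k_1$ is continuous for $t>\delta$ by the symmetric argument, since $\kappa_{a,b,\tau}$ is continuous away from $0$.
\end{itemize}
Letting $\delta\downarrow0$ covers all $t>0$.

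A continuous function equal to $1$ almost everywhere equals $1$ everywhere, which yields \eqref{eq:Sonine} for every $t>0$. This, together with the positivity and local integrability of both kernels, is exactly the definition of a Sonine pair.
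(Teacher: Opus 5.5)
Your proof is correct and follows the same route as the paper: the Laplace convolution theorem gives $\Lap\{\kappa_{a,b,\tau}*k_{a,b,\tau}\}(s)=1/s$, and uniqueness of the Laplace transform then identifies the convolution with $1$. Your Tonelli justification in Step 1 and your continuity argument in Step 3 spell out what the paper leaves tacit: Lerch's theorem alone gives equality only almost everywhere, and equality for every $t>0$ needs continuity of the convolution, which your splitting of the two singularities (one at $\xi=t$, the other at $\xi=0$) supplies.
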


\begin{proof}
By Propositions \ref{prop:kappa} and \ref{prop:k}, both kernels are locally integrable on
\((0,\infty)\), and their Laplace transforms exist for every \(s>0\). By the
Laplace convolution theorem,
\[
\Lap\{\kappa_{a,b,\tau}*k_{a,b,\tau}\}(s)
=
\Lap\{\kappa_{a,b,\tau}\}(s)\Lap\{k_{a,b,\tau}\}(s)
=
\TricomiU(a,b,s\tau)\cdot \frac{1}{s\,\TricomiU(a,b,s\tau)}
=
\frac{1}{s}.
\]
Since \(\Lap\{1\}(s)=1/s\), uniqueness of the Laplace transform yields
\eqref{eq:Sonine}.
\end{proof}

As a consequence, the Tricomi construction yields a genuine Sonine pair in the time domain, not only a formal factorization in Laplace space.
Although the kernel \(k_{a,b,\tau}\) does not admit a closed-form expression in general, its existence, complete monotonicity and asymptotic behaviour are fully characterized via its Laplace symbol.

\subsection{Tricomi integral and Tricomi derivatives}\label{ss:Tricomi-integral-derivatives}

We now introduce the integral and derivative operators induced by the Sonine pair \((\kappa_{a,b,\tau},k_{a,b,\tau})\) previously introduced and discussed. Following the standard Sonine-kernel framework,
this leads naturally to a Tricomi fractional integral together with
Riemann--Liouville-type and Caputo-type realizations of the associated generator.

\begin{definition}[Tricomi integral]
Let \(f\in L^1_{\mathrm{loc}}([0,\infty))\) be causal. We define
\begin{equation}
(\ITU_{a,b,\tau}f)(t)
:=
\int_0^t \kappa_{a,b,\tau}(t-\xi)\,f(\xi)\,\dd\xi,
\qquad t>0.
\label{eq:ITU}
\end{equation}
\end{definition}

\begin{definition}[Riemann--Liouville-type Tricomi derivative]
Let \(u\in L^1_{\mathrm{loc}}([0,\infty))\) be causal and let us assume that
\(k_{a,b,\tau}*u\in AC_{\mathrm{loc}}([0,\infty))\). We define
\begin{equation}
(\RLTD_{a,b,\tau}u)(t)
:=
\frac{\dd}{\dd t}\int_0^t k_{a,b,\tau}(t-\xi)\,u(\xi)\,\dd\xi.
\label{eq:RLTD}
\end{equation}
\end{definition}

\begin{definition}[Tricomi--Caputo derivative]
Let \(u\in AC_{\mathrm{loc}}([0,\infty))\) be causal and let us assume that \(u\) is
of exponential order. We define
\begin{equation}
(\CTD_{a,b,\tau}u)(t)
:=
\int_0^t k_{a,b,\tau}(t-\xi)\,u'(\xi)\,\dd\xi.
\label{eq:CTD}
\end{equation}
Equivalently, using integration by parts in Volterra form,
\begin{equation}
(\CTD_{a,b,\tau}u)(t)
=
\frac{\dd}{\dd t}
\int_0^t
k_{a,b,\tau}(t-\xi)\,[u(\xi)-u(0)]\,\dd\xi.
\label{eq:CTD2}
\end{equation}
\end{definition}

For clarity, here the superscripts $UC$ and $URL$ denote, respectively, the Caputo-type and Riemann–Liouville-type realizations of the Tricomi generator.

\begin{proposition}[Laplace-domain formulas]\label{prop:LaplaceOps}
Let \(f\in L^1_{\mathrm{loc}}([0,\infty))\) be causal and of exponential order,
and let \(u\in AC_{\mathrm{loc}}([0,\infty))\) be causal and of exponential
order. Then
\begin{equation}
\Lap\{\ITU_{a,b,\tau}f\}(s)
=
\TricomiU(a,b,s\tau)\,\widehat f(s),
\label{eq:Lap-I}
\end{equation}
\begin{equation}
\Lap\{\RLTD_{a,b,\tau}u\}(s)
=
\varphi_{a,b,\tau}(s)\,\widehat u(s),
\label{eq:Lap-RL}
\end{equation}
and
\begin{equation}
\Lap\{\CTD_{a,b,\tau}u\}(s)
=
\varphi_{a,b,\tau}(s)\widehat u(s)
-
\frac{\varphi_{a,b,\tau}(s)}{s}u(0).
\label{eq:Lap-Caputo}
\end{equation}
\end{proposition}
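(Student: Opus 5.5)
The three identities will follow from the Laplace convolution theorem combined with the Laplace transforms of the kernels already computed in \eqref{eq:kappaLap} and \eqref{eq:kLap}. The first step is to make sure all transforms exist in a common right half-plane. Since \(\kappa_{a,b,\tau}\) and \(k_{a,b,\tau}\) are positive, locally integrable and polynomially bounded at infinity (Propositions \ref{prop:kappa} and \ref{prop:k}), their Laplace integrals converge absolutely for every \(s>0\). If \(f\) and \(u\) have exponential order \(\sigma\), then all products and convolutions below have absolutely convergent transforms for \(s>\max\{\sigma,0\}\), and Fubini--Tonelli justifies the convolution theorem there. For \eqref{eq:Lap-I} this step is already the whole proof: \(\Lap\{\kappa_{a,b,\tau}*f\}=\TricomiU(a,b,s\tau)\widehat f(s)\).

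For \eqref{eq:Lap-RL}, set \(g:=k_{a,b,\tau}*u\). Then \(\widehat g(s)=\Psi_{a,b,\tau}(s)\widehat u(s)=\varphi_{a,b,\tau}(s)\widehat u(s)/s\). To use the derivative rule \(\Lap\{g'\}(s)=s\widehat g(s)-g(0^+)\), I need two facts:
\begin{itemize}
\item \(g\) is locally absolutely continuous, which is the standing assumption in the definition of \(\RLTD_{a,b,\tau}\);
\item \(g(0^+)=0\).
\end{itemize}
For the second fact, bound \(|g(t)|\le \bigl(\int_0^t k_{a,b,\tau}\bigr)\sup_{[0,t]}|u|\) when \(u\) is locally bounded (for instance \(u\in AC_{\mathrm{loc}}\)). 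This tends to zero because \(k_{a,b,\tau}\in L^1_{\mathrm{loc}}\). In the \(L^1_{\mathrm{loc}}\) case, use the fact that \(g\) is continuous, being absolutely continuous, and that \(\int_0^t|g|\le \|k_{a,b,\tau}\|_{L^1(0,t)}\|u\|_{L^1(0,t)}\to0\) faster than \(t\); together these force \(g(0^+)=0\). I also need \(g\) and \(g'\) to be of exponential order, so that the boundary term \(e^{-sT}g(T)\) vanishes as \(T\to\infty\) in the integration by parts. This follows because \(k_{a,b,\tau}\) grows at most like \(t^{1-b}\) by \eqref{eq:klarge}. Multiplying \(\widehat g\) by \(s\) then gives exactly \(\varphi_{a,b,\tau}(s)\widehat u(s)\).

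For \eqref{eq:Lap-Caputo}, I would apply the convolution theorem directly to \eqref{eq:CTD}:
\[
\Lap\{\CTD_{a,b,\tau}u\}(s)=\Psi_{a,b,\tau}(s)\,\Lap\{u'\}(s)=\frac{\varphi_{a,b,\tau}(s)}{s}\bigl(s\widehat u(s)-u(0)\bigr).
\]
This uses the standard rule \(\Lap\{u'\}=s\widehat u-u(0)\), which is valid for \(u\in AC_{\mathrm{loc}}\) of exponential order. Expanding the product gives the claimed formula. As a consistency check, the alternative form \eqref{eq:CTD2} gives the same result by applying the Riemann--Liouville formula just proved to \(u-u(0)\), whose transform is \(\widehat u(s)-u(0)/s\).

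The main obstacle is technical rather than conceptual: justifying the differentiation rule for \(g=k_{a,b,\tau}*u\), namely \(g(0^+)=0\) and exponential order of \(g'\), under only the hypotheses given. I would handle this as above. Where needed, I would state that exponential order of \(g'\) is understood as part of the standing convention that Laplace transforms exist in a right half-plane, as announced in the introduction.
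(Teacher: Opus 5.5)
Your proof is correct and takes the same route as the paper: the convolution theorem with \eqref{eq:kappaLap} and \eqref{eq:kLap}, the derivative rule for \(k_{a,b,\tau}*u\) using \((k_{a,b,\tau}*u)(0)=0\), and \(\Lap\{u'\}=s\widehat u-u(0)\) for the Caputo case, with the technical justifications that the paper leaves implicit written out. Drop your aside on the \(L^1_{\mathrm{loc}}\) case, because it is false: for \(u=\kappa_{a,b,\tau}\) the Sonine identity gives \(k_{a,b,\tau}*u\equiv1\), so \(g(0^+)=1\), and your bound \(\int_0^t|g|\le\|k_{a,b,\tau}\|_{L^1(0,t)}\|u\|_{L^1(0,t)}\) is only \(O(t)\) there, not \(o(t)\). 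It is not needed anyway, since the hypothesis \(u\in AC_{\mathrm{loc}}\) already gives the local boundedness your main argument uses.
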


\begin{proof}
Formula \eqref{eq:Lap-I} follows directly from \eqref{eq:kappaLap}. For the
Riemann--Liouville-type operator,
\[
\Lap\{\RLTD_{a,b,\tau}u\}(s)
=
s\,\Lap\{k_{a,b,\tau}\}(s)\,\widehat u(s)
=
s\,\frac{\varphi_{a,b,\tau}(s)}{s}\,\widehat u(s)
=
\varphi_{a,b,\tau}(s)\,\widehat u(s),
\]
since \((k_{a,b,\tau}*u)(0)=0\). Finally,
\[
\Lap\{\CTD_{a,b,\tau}u\}(s)
=
\Lap\{k_{a,b,\tau}\}(s)\,\Lap\{u'\}(s)
=
\frac{\varphi_{a,b,\tau}(s)}{s}\,
\bigl(s\widehat u(s)-u(0)\bigr),
\]
which is \eqref{eq:Lap-Caputo}.
\end{proof}

We can finally connect the above definitions through the identity
\begin{equation}
\CTD_{a,b,\tau}u=\RLTD_{a,b,\tau}(u-u(0)) ,
\end{equation}
which shows that the Caputo-type operator is the natural regularization of the Riemann--Liouville-type one, exactly as in the classical fractional setting~\cite{Luchko2021Sonine, Mainardi2022book}.

\subsection{Generalized fundamental theorems}
We now show that the Tricomi integral and derivatives introduced in Sect.~\ref{ss:Tricomi-integral-derivatives} satisfy the generalized fundamental theorems of fractional calculus, in the Sonine-kernel sense.

\begin{theorem}[First fundamental theorem]\label{thm:FFT}
Let \(f\in L^1_{\mathrm{loc}}([0,\infty))\) be causal and of exponential order.
Then, the Riemann--Liouville-type Tricomi derivative is the left inverse of the Tricomi integral operator
\begin{equation}
\RLTD_{a,b,\tau}\,\ITU_{a,b,\tau}f=f.
\label{eq:FFT}
\end{equation}
\end{theorem}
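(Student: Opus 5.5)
The plan is to prove \eqref{eq:FFT} directly in the time domain, using the Sonine identity of Theorem~\ref{thm:sonine} together with associativity and commutativity of the Laplace convolution. Laplace transforms serve only as a consistency check. By definition, $\ITU_{a,b,\tau}f=\kappa_{a,b,\tau}*f$. Since both $\kappa_{a,b,\tau}$ (Proposition~\ref{prop:kappa}) and $k_{a,b,\tau}$ (Proposition~\ref{prop:k}) are locally integrable on $[0,\infty)$, and $f\in L^1_{\mathrm{loc}}([0,\infty))$, all the convolutions below are again in $L^1_{\mathrm{loc}}$. Fubini--Tonelli is justified by the local integrability of $|k_{a,b,\tau}|*|\kappa_{a,b,\tau}|*|f|$, which follows from Young's inequality on each interval $[0,T]$.

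First I compute
\[
k_{a,b,\tau}*\bigl(\ITU_{a,b,\tau}f\bigr)
=k_{a,b,\tau}*(\kappa_{a,b,\tau}*f)
=(k_{a,b,\tau}*\kappa_{a,b,\tau})*f
=1*f,
\]
so that
\[
\bigl(k_{a,b,\tau}*\ITU_{a,b,\tau}f\bigr)(t)=\int_0^t f(\xi)\,\dd\xi .
\]
The right-hand side belongs to $AC_{\mathrm{loc}}([0,\infty))$. This verifies the admissibility hypothesis in the definition \eqref{eq:RLTD} of $\RLTD_{a,b,\tau}$ for $u=\ITU_{a,b,\tau}f$, and that function is itself causal and in $L^1_{\mathrm{loc}}$. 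Differentiating, the Lebesgue differentiation theorem gives $\RLTD_{a,b,\tau}\ITU_{a,b,\tau}f=f$ almost everywhere, and everywhere at points of continuity of $f$.

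As an independent check, I would use the Laplace-domain formulas of Proposition~\ref{prop:LaplaceOps}. Exponential order of $f$ and the existence of $\TricomiU(a,b,s\tau)$ for $s>0$ imply that $\ITU_{a,b,\tau}f$ has a Laplace transform, namely $\TricomiU(a,b,s\tau)\widehat f(s)$. Applying \eqref{eq:Lap-RL} then yields $\varphi_{a,b,\tau}(s)\TricomiU(a,b,s\tau)\widehat f(s)=\widehat f(s)$, and injectivity of the Laplace transform closes the argument. Here the only care needed is that \eqref{eq:Lap-RL} was stated for $u\in AC_{\mathrm{loc}}$, whereas $\ITU_{a,b,\tau}f$ is merely $L^1_{\mathrm{loc}}$. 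The derivation of \eqref{eq:Lap-RL} uses only that $k_{a,b,\tau}*u$ is absolutely continuous and vanishes at $0$, and this holds here by the computation above.

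The main obstacle is the justification of associativity, i.e. the Fubini step, together with the exact sense of the identity (a.e. versus pointwise). I would handle it by noting that $\kappa_{a,b,\tau}\ge0$ and $k_{a,b,\tau}\ge0$ (the latter being completely monotone), so Tonelli applies to $|f|$ and gives finiteness on every $[0,T]$. The equality \eqref{eq:FFT} is then understood in $L^1_{\mathrm{loc}}$, i.e. almost everywhere.
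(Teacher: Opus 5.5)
Your proof is correct, but it takes a different route from the paper's. The paper works entirely in the Laplace domain. It combines \eqref{eq:Lap-I} and \eqref{eq:Lap-RL} to get $\varphi_{a,b,\tau}(s)\TricomiU(a,b,s\tau)\widehat f(s)=\widehat f(s)$ and concludes by uniqueness of the Laplace transform. You work in the time domain instead: the Sonine identity of Theorem~\ref{thm:sonine} and associativity of convolution give $k_{a,b,\tau}*\ITU_{a,b,\tau}f=1*f$, and you then differentiate.

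The paper's argument is shorter, but it has three limitations:
\begin{enumerate}
\item It needs the exponential-order hypothesis so that the transforms exist.
\item It tacitly applies \eqref{eq:Lap-RL} to $u=\ITU_{a,b,\tau}f$. Proposition~\ref{prop:LaplaceOps} states that formula only for $u\in AC_{\mathrm{loc}}([0,\infty))$, and $\ITU_{a,b,\tau}f$ need not lie there: for $f(t)=t^{-1/2}$ and $a<1/2$ it blows up like $t^{a-1/2}$ at the origin.
\item It presupposes, rather than proves, that $\RLTD_{a,b,\tau}\ITU_{a,b,\tau}f$ is defined, i.e.\ that $k_{a,b,\tau}*\ITU_{a,b,\tau}f$ is locally absolutely continuous.
\end{enumerate}

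Your computation settles this admissibility condition and the identity in one stroke. It holds for every $f\in L^1_{\mathrm{loc}}([0,\infty))$ with no growth assumption, and it makes explicit that \eqref{eq:FFT} is an almost-everywhere identity. The nonnegativity of both kernels, which you invoke for Tonelli, actually gives $k_{a,b,\tau}*(\kappa_{a,b,\tau}*|f|)=1*|f|<\infty$ for every $t$. So associativity holds pointwise, and the Sonine identity is needed only almost everywhere, since it sits under an integral.

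Your Laplace-domain check is redundant once the time-domain argument is in place. Your remark on how to justify \eqref{eq:Lap-RL} for $u=\ITU_{a,b,\tau}f$, using that $k_{a,b,\tau}*u=1*f$ is absolutely continuous and of exponential order, does patch exactly the point the paper glosses over.
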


\begin{proof}
Taking Laplace transforms and using \eqref{eq:Lap-I} and \eqref{eq:Lap-RL},
\[
\Lap\{\RLTD_{a,b,\tau}\ITU_{a,b,\tau}f\}(s)
=
\varphi_{a,b,\tau}(s)\,
\TricomiU(a,b,s\tau)\,\widehat f(s)
=
\widehat f(s).
\]
Uniqueness of the Laplace transform gives \eqref{eq:FFT}.
\end{proof}

\begin{theorem}[Second fundamental theorem]\label{thm:SFT}
Let \(u\in AC_{\mathrm{loc}}([0,\infty))\) be causal and of exponential order.
Then, for the Tricomi--Caputo derivative it holds that
\begin{equation}
\ITU_{a,b,\tau}\,\CTD_{a,b,\tau}u=u-u(0).
\label{eq:SFT}
\end{equation}
\end{theorem}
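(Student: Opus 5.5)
The plan is to prove \eqref{eq:SFT} in the Laplace domain, in the same way as Theorem~\ref{thm:FFT}, and then to add a direct time-domain argument based on the Sonine identity \eqref{eq:Sonine} as a cross-check. First, since $u\in AC_{\mathrm{loc}}([0,\infty))$ is of exponential order, $u'\in L^1_{\mathrm{loc}}$. The function $\CTD_{a,b,\tau}u=k_{a,b,\tau}*u'$ is then locally integrable. It is also of exponential order in the weak sense that its Laplace transform converges absolutely for $\Re s$ large: this holds because $k_{a,b,\tau}$ is completely monotone, hence nonnegative and with Laplace transform $\Psi_{a,b,\tau}(s)<\infty$ for every $s>0$. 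The same applies to $\kappa_{a,b,\tau}$, so all convolutions below are Laplace-transformable for $s$ large.

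The Laplace computation itself is short. By \eqref{eq:Lap-I} and \eqref{eq:Lap-Caputo},
\[
\Lap\{\ITU_{a,b,\tau}\CTD_{a,b,\tau}u\}(s)=\TricomiU(a,b,s\tau)\Bigl(\varphi_{a,b,\tau}(s)\widehat u(s)-\frac{\varphi_{a,b,\tau}(s)}{s}u(0)\Bigr)=\widehat u(s)-\frac{u(0)}{s}.
\]
This uses $\TricomiU(a,b,s\tau)\varphi_{a,b,\tau}(s)=1$ from \eqref{eq:rawsymbol}. The right-hand side is $\Lap\{u-u(0)\}(s)$. Both sides are continuous functions on $[0,\infty)$: $u-u(0)$ is continuous, and $\kappa*(k*u')$ is a convolution of $L^1_{\mathrm{loc}}$ functions. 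So uniqueness of the Laplace transform (Lerch's theorem) gives equality for every $t>0$, not just almost everywhere.

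To avoid relying on the exponential-order hypothesis for the composite object, I would also record the time-domain version. Associativity of convolution (Fubini--Tonelli, all kernels nonnegative and locally integrable) gives
\[
\kappa_{a,b,\tau}*(k_{a,b,\tau}*u')=(\kappa_{a,b,\tau}*k_{a,b,\tau})*u'=1*u'=u(t)-u(0),
\]
using Theorem~\ref{thm:sonine} and absolute continuity of $u$.

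The main obstacle is justifying the function-space steps rather than the algebra. One must check that $k_{a,b,\tau}*u'$ is well defined and locally integrable, which follows from Young's inequality on $[0,T]$. One must also check that Fubini applies in the associativity step; this follows from $\int_0^T\!\int_0^T \kappa(t-\xi)k(\xi-\eta)|u'(\eta)|$ being finite by local integrability. I would present the Laplace argument as the main proof, matching the paper's style, and give the convolution-associativity argument as the rigorous justification.
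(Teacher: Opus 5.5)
Your Laplace computation is exactly the paper's proof: combine \eqref{eq:Lap-I} with \eqref{eq:Lap-Caputo}, use $\TricomiU(a,b,s\tau)\,\varphi_{a,b,\tau}(s)=1$, and invert.

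The time-domain identity you add, $\kappa_{a,b,\tau}*(k_{a,b,\tau}*u')=(\kappa_{a,b,\tau}*k_{a,b,\tau})*u'=1*u'$, is a genuinely different route. It is the standard one in general Sonine-kernel theory, and it is the stronger of the two arguments. It needs only $\kappa_{a,b,\tau},k_{a,b,\tau},u'\in L^1_{\mathrm{loc}}$ together with Theorem~\ref{thm:sonine}. The Laplace proof of that theorem causes no trouble, because both kernels are nonnegative with finite transforms for all $s>0$, and a.e.\ validity of \eqref{eq:Sonine} is enough. So exponential order of $u$ is never used. Moreover, for each fixed $t>0$, Tonelli gives $\int_0^t\!\int_0^\xi \kappa_{a,b,\tau}(t-\xi)\,k_{a,b,\tau}(\xi-\eta)\,|u'(\eta)|\dd\eta\dd\xi=\int_0^t(\kappa_{a,b,\tau}*k_{a,b,\tau})(t-\eta)\,|u'(\eta)|\dd\eta=\int_0^t|u'(\eta)|\dd\eta<\infty$. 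Hence Fubini applies at every $t$, and the identity holds pointwise, not just almost everywhere.

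The Laplace route is shorter and mirrors Proposition~\ref{prop:LaplaceOps}, but it inherits that proposition's implicit assumptions. Two of your supporting remarks there are not right as stated.
\begin{enumerate}
\item Exponential order of $u$ does not make $\int_0^\infty \ee^{-st}|u'(t)|\dd t$ finite; take $u(t)=\sin(\ee^{t^2})$. So your absolute-convergence argument for $\Lap\{k_{a,b,\tau}*u'\}$ controls only the factor $k_{a,b,\tau}$.
\item A convolution of $L^1_{\mathrm{loc}}$ functions need not be continuous. For $f(t)=|t-1|^{-3/4}$ on $(0,2)$ and $0$ elsewhere, $f*f$ is unbounded near $t=2$. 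So Lerch's theorem alone gives equality only almost everywhere; continuity of $\kappa_{a,b,\tau}*(k_{a,b,\tau}*u')$ actually follows from the associativity identity.
\end{enumerate}
Both problems disappear if you make the associativity argument the main proof and keep the Laplace computation as a heuristic check.
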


\begin{proof}
Using \eqref{eq:Lap-I} and \eqref{eq:Lap-Caputo},
\[
\Lap\{\ITU_{a,b,\tau}\CTD_{a,b,\tau}u\}(s)
=
\TricomiU(a,b,s\tau)
\left(
\varphi_{a,b,\tau}(s)\widehat u(s)
-
\frac{\varphi_{a,b,\tau}(s)}{s}u(0)
\right)
=
\widehat u(s)-\frac{u(0)}{s}.
\]
Laplace inversion yields \eqref{eq:SFT}.
\end{proof}

\begin{corollary}[Generalized fractional structure]
\label{cor:legit}
The pair \((\ITU_{a,b,\tau},\CTD_{a,b,\tau})\) defines a genuine generalized
fractional calculus in the Sonine-kernel framework. In particular,
\(\CTD_{a,b,\tau}\) is a generalized fractional derivative associated with the
complete Bernstein symbol \(\varphi_{a,b,\tau}\).
\end{corollary}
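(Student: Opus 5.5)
The plan is to check the defining requirements of a generalized fractional calculus in the Sonine-kernel sense, as in Luchko's framework, and to verify each one from results already proved. That framework asks for four things. First, the integral kernel \(\kappa_{a,b,\tau}\) and the derivative kernel \(k_{a,b,\tau}\) must both be positive and locally integrable on \((0,\infty)\). Second, they must form a Sonine pair. Third, the kernel \(k_{a,b,\tau}\) must be singular at the origin, so the derivative is genuinely nonlocal. Fourth, the induced operators must satisfy both fundamental theorems.

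I will take these in order, quoting the earlier results.

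\textbf{Kernels.} Positivity and local integrability of \(\kappa_{a,b,\tau}\) come from Proposition \ref{prop:kappa}. Complete monotonicity and local integrability of \(k_{a,b,\tau}\) come from Proposition \ref{prop:k}. Since \(k_{a,b,\tau}\) is completely monotone, it is positive and nonincreasing. By \eqref{eq:ksmall} it blows up like \(t^{-a}\) at \(0^+\); likewise \(\kappa_{a,b,\tau}\) is singular like \(t^{a-1}\) by \eqref{eq:kappa-small}. These are the conditions of the Luchko class \(\mathcal{L}_1\) of Sonine kernels, and I expect them to be the key membership check.

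\textbf{Sonine pair and fundamental theorems.} The Sonine relation \(\kappa_{a,b,\tau}*k_{a,b,\tau}\equiv1\) is Theorem \ref{thm:sonine}. The first fundamental theorem is Theorem \ref{thm:FFT} and the second is Theorem \ref{thm:SFT}.

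\textbf{Symbol.} Proposition \ref{prop:LaplaceOps} shows that the derivative acts in Laplace space as multiplication by \(\varphi_{a,b,\tau}\) (up to the initial-value term), and \(\varphi_{a,b,\tau}\) is complete Bernstein by Corollary \ref{cor:rawCBF}. Together these justify calling \(\CTD_{a,b,\tau}\) the generalized fractional derivative associated with that symbol.

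\textbf{Function spaces (main obstacle).} The fundamental theorems were proved by Laplace transform, which needs exponential order. To obtain a genuine calculus on \(L^1_{\mathrm{loc}}\) or \(AC_{\mathrm{loc}}\), I would drop that assumption by a purely time-domain argument using the Sonine identity and associativity of convolution:
\[
k_{a,b,\tau}*(\kappa_{a,b,\tau}*f)=(k_{a,b,\tau}*\kappa_{a,b,\tau})*f=1*f .
\]
Differentiating \(1*f\) gives \(f\) almost everywhere. Similarly,
\[
\kappa_{a,b,\tau}*(k_{a,b,\tau}*u')=1*u'=u-u(0).
\]
Both computations need only local integrability, by Young's inequality for convolutions on bounded intervals together with Fubini. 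This localisation step is where care is needed. I expect it to be the only nontrivial point, since the rest is an assembly of earlier statements.
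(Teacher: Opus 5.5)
Your proposal is correct and takes the same route as the paper, whose proof likewise just assembles Theorem~\ref{thm:sonine}, Theorems~\ref{thm:FFT} and~\ref{thm:SFT}, and the Laplace formula \eqref{eq:Lap-Caputo} with Corollary~\ref{cor:rawCBF}, and then appeals to Luchko's general Sonine-kernel theory. Your extra time-domain step, $k_{a,b,\tau}*(\kappa_{a,b,\tau}*f)=1*f$ and $\kappa_{a,b,\tau}*(k_{a,b,\tau}*u')=1*u'=u-u(0)$ via associativity of $L^1_{\mathrm{loc}}$ convolutions, is valid and is a small but genuine improvement: it removes the exponential-order hypothesis that the paper's Laplace-based proofs of the two fundamental theorems require.
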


\begin{proof}
This follows from Theorems \ref{thm:sonine}, \ref{thm:FFT} and \ref{thm:SFT} together with the Laplace-domain formula \eqref{eq:Lap-Caputo}, thanks to the general Sonine-kernel theory \cite{LuchkoYamamoto2020,Luchko2021Sonine,Luchko2021Operational,Luchko2020Fundamental}.
\end{proof}

We emphasize that the reciprocal Tricomi law \eqref{eq:rawsymbol} induces a genuine generalized fractional derivative in the Sonine-kernel sense.
No additional operatorial assumptions are required beyond the Stieltjes and complete Bernstein structure of the Tricomi symbol.

\section{Characterization of the Tricomi family within the Sonine framework}
\label{sec:characterization}

The previous sections show that the Tricomi law generates a genuine Sonine fractional calculus and it would then be natural to ask in what sense this family is intrinsically singled out within the Sonine framework. The answer is not that every Sonine pair with two independent power-law asymptotics must be of Tricomi type, since such a statement would be too strong in general.
What can be proved is a precise rigidity statement inside the Kummer class, indeed among the solutions of the confluent hypergeometric equation, the Tricomi branch is uniquely selected by the Stieltjes property, and the normalization fixes the remaining multiplicative constant. This yields a characterization of the Tricomi--Sonine family inside the class of Sonine pairs whose Laplace-side law belongs to Kummer equation.

We begin with the unscaled Kummer equation
\begin{equation}
zF''(z)+(b-z)F'(z)-aF(z)=0,
\qquad z\in\C\setminus(-\infty,0],
\label{eq:KummerEq}
\end{equation}
with the conditions $a\in(0,1)$, $b\in(1,2)$.
Since \(b\notin\mathbb Z\), a fundamental system of solutions is given by the two linearly independent special functions, namely the Kummer function $M(a,b,z)={}_1F_1(a;b;z)$, coinciding with the confluent hypergeometric function of the first kind, and the Tricomi function $\TricomiU(a,b,z)$ \cite{Abramowitz1965handbook,tricomi1947funzioni,DLMF,Buchholz1969,Temme1996}.

\begin{lemma}[Selection of the Tricomi branch]
\label{lem:branch-selection}
Let \(F\) be analytic in \(\C\setminus(-\infty,0]\) and suppose that \(F\) solves the unscaled Kummer equation \eqref{eq:KummerEq}. Let us also assume that the restriction of \(F\) to \((0,\infty)\) is a Stieltjes function. Then, there exists a constant \(C\ge0\) such that
\begin{equation}
F(z)=C\,\TricomiU(a,b,z),
\qquad z\in\C\setminus(-\infty,0].
\label{eq:FisCU}
\end{equation}
In the case \(F\not\equiv0\), then \(C>0\).
\end{lemma}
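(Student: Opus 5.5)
Since $b\notin\mathbb Z$, every solution of \eqref{eq:KummerEq} that is analytic in $\C\setminus(-\infty,0]$ can be written as
\[
F(z)=\alpha\,M(a,b,z)+\beta\,\TricomiU(a,b,z)
\]
for constants $\alpha,\beta\in\C$. Note that $M(a,b,\cdot)$ is entire, so analyticity on the slit plane imposes no extra constraint. The plan is to show that the Stieltjes hypothesis forces $\alpha=0$ and $\beta\ge0$.

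\textbf{Step 1: $\alpha=0$.} Every Stieltjes function $f$ is bounded on $[1,\infty)$ by the representation \eqref{eq:Stdef}. Indeed, $f(s)\le \alpha'+\beta'+\int_0^\infty\frac{\mu(\dd r)}{1+r}$ for $s\ge1$, where $\alpha',\beta'$ are the constants in \eqref{eq:Stdef}. On the other hand, the standard large-argument asymptotics give
\[
M(a,b,s)\sim\frac{\Gamma(b)}{\Gamma(a)}\,\ee^{s}s^{a-b},\qquad s\to+\infty,
\]
with a nonzero prefactor because $a>0$. By \eqref{eq:Ularge}, $\TricomiU(a,b,s)\sim s^{-a}\to0$. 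Hence $F(s)=\alpha M(a,b,s)+\beta\TricomiU(a,b,s)$ grows exponentially unless $\alpha=0$, and boundedness yields $\alpha=0$. If one prefers to avoid quoting this asymptotic, positivity of the Maclaurin coefficients of $M(a,b,\cdot)$ gives $M(a,b,s)\ge \frac{(a)_n}{(b)_n n!}s^n$ for every $n$, which already contradicts boundedness.

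\textbf{Step 2: $\beta\ge0$.} Now $F=\beta\,\TricomiU(a,b,\cdot)$. A Stieltjes function takes values in $[0,\infty)$ on $(0,\infty)$, and $\TricomiU(a,b,s)>0$ there by Proposition~\ref{prop:Ubasic}. Therefore $\beta=F(1)/\TricomiU(a,b,1)$ is real and nonnegative. Set $C:=\beta$.

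\textbf{Step 3: extension and nondegeneracy.} The identity $F=C\,\TricomiU(a,b,\cdot)$ holds on $(0,\infty)$, and since both sides are analytic in $\C\setminus(-\infty,0]$, it extends to the whole slit plane by the identity theorem. This gives \eqref{eq:FisCU}. Finally, if $F\not\equiv0$ then $C\neq0$, so $C>0$.

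\textbf{Main obstacle.} The only delicate point is Step 1: one needs the exponential growth of $M$ with a nonvanishing leading coefficient, so that the two branches cannot cancel. The coefficient comparison suffices here, since $\TricomiU$ is bounded on $[1,\infty)$ by Proposition~\ref{prop:Ubasic}.
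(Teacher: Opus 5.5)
Your proof is correct and follows essentially the same route as the paper: write $F=c_1M(a,b,\cdot)+c_2\TricomiU(a,b,\cdot)$, use the boundedness of a Stieltjes function on $[1,\infty)$ against the exponential growth of $M$ to force $c_1=0$, and then use $\TricomiU(a,b,s)>0$ on $(0,\infty)$ to get $C=c_2\ge0$. Your variant via the positive Maclaurin coefficients of $M(a,b,\cdot)$, combined with the boundedness of $\TricomiU$ on $[1,\infty)$, neatly avoids quoting the large-argument asymptotic of $M$; your Step~3 is harmless but redundant, since the decomposition already holds on the whole slit plane.
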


\begin{proof}
Since \(b\notin\mathbb Z\), the general solution of \eqref{eq:KummerEq} has the
form
\begin{equation}
F(z)=c_1 M(a,b,z)+c_2 \TricomiU(a,b,z),
\label{eq:general-solution-Kummer}
\end{equation}
for suitable constants \(c_1,c_2\in\C\).

Because \(F\) is Stieltjes on \((0,\infty)\), it admits a representation
\[
F(s)=\frac{\alpha}{s}+\beta+\int_0^\infty \frac{1}{s+r}\,\mu(\dd r),
\qquad s>0,
\]
with \(\alpha,\beta\ge0\) and
\[
\int_0^\infty \frac{1}{1+r}\,\mu(\dd r)<\infty.
\]
In particular, on the one hand, for every \(s\ge1\),
\[
0\le F(s)\le \alpha+\beta+\int_0^\infty \frac{1}{1+r}\,\mu(\dd r),
\]
so \(F\) is bounded on \([1,\infty)\).

On the other hand, the standard large-argument asymptotics give \cite{DLMF,Buchholz1969,Temme1996}
\[
M(a,b,z)\sim \frac{\Gamma(b)}{\Gamma(a)}\,\ee^z z^{a-b},
\qquad z\to+\infty,
\]
whereas
\[
\TricomiU(a,b,z)\sim z^{-a},
\qquad z\to+\infty.
\]
Therefore, if \(c_1\neq0\), the term
\(c_1M(a,b,z)\) would force exponential growth along the positive real axis,
contradicting the boundedness of the Stieltjes function \(F\). Hence \(c_1=0\),
and
\[
F(z)=c_2 \TricomiU(a,b,z).
\]

Since \(F(s)\ge0\) for \(s>0\) and \(\TricomiU(a,b,s)>0\) for \(s>0\), one must
have \(c_2=C\ge0\). If \(F\not\equiv0\), then necessarily \(C>0\).
\end{proof}

\begin{corollary}[Normalization and uniqueness]
\label{cor:normalization-U}
Let \(F\not\equiv0\) satisfy the assumptions of Lemma \ref{lem:branch-selection}.
Then the following statements are equivalent:

\begin{enumerate}
\item[(i)]
\begin{equation}
\lim_{z\to+\infty} z^{a}F(z)=1;
\label{eq:norm-infty}
\end{equation}

\item[(ii)]
\begin{equation}
\lim_{z\to0^+} z^{\,b-1}F(z)=\frac{\Gamma(b-1)}{\Gamma(a)};
\label{eq:norm-zero}
\end{equation}

\item[(iii)]
\begin{equation}
F(z)=\TricomiU(a,b,z)
\qquad\text{for all } z\in\C\setminus(-\infty,0].
\label{eq:exact-U}
\end{equation}
\end{enumerate}
\end{corollary}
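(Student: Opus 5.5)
By Lemma \ref{lem:branch-selection}, every $F\not\equiv0$ satisfying the hypotheses has the form $F(z)=C\,\TricomiU(a,b,z)$ on $\C\setminus(-\infty,0]$, with some constant $C>0$. This reduces the corollary to a statement about the single scalar $C$. The plan is to show that each of (i) and (ii) is equivalent to $C=1$. Since $C=1$ is exactly (iii), the three statements are then equivalent. In other words, we prove (i)$\Leftrightarrow$(iii) and (ii)$\Leftrightarrow$(iii) through the intermediate condition $C=1$.

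For (i), take $\tau=1$ in \eqref{eq:Ularge}, or equivalently use the standard expansion $\TricomiU(a,b,z)\sim z^{-a}$ as $z\to+\infty$ already recalled in the proof of Lemma \ref{lem:branch-selection}. This gives
\[
\lim_{z\to+\infty} z^{a}F(z)=C\lim_{z\to+\infty}z^{a}\TricomiU(a,b,z)=C.
\]
Hence \eqref{eq:norm-infty} holds if and only if $C=1$.

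For (ii), use \eqref{eq:Usmall} with $\tau=1$: $\TricomiU(a,b,z)\sim \frac{\Gamma(b-1)}{\Gamma(a)}z^{1-b}$ as $z\to0^+$. This is valid because $1<b<2$, so the singular branch $z^{1-b}$ dominates. It gives
\[
\lim_{z\to0^+}z^{b-1}F(z)=C\,\frac{\Gamma(b-1)}{\Gamma(a)}.
\]
The constant $\Gamma(b-1)/\Gamma(a)$ is finite and strictly positive because $a>0$ and $b-1>0$. Therefore \eqref{eq:norm-zero} holds if and only if $C=1$.

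Conversely, if (iii) holds, then $C=1$ and both limits follow directly from the same asymptotics. No step here is a genuine obstacle. The only points to check are the nondegeneracy of the limiting constants and the correctness of the leading small-$z$ term. For the latter, one can rely on the connection formula $\TricomiU=\frac{\Gamma(1-b)}{\Gamma(a-b+1)}M(a,b,z)+\frac{\Gamma(b-1)}{\Gamma(a)}z^{1-b}M(a-b+1,2-b,z)$. In it, the $M(a,b,z)$ term stays bounded near $0$, while $z^{1-b}$ blows up because $b>1$.
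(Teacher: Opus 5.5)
Your proposal is correct and matches the paper's proof: you invoke Lemma~\ref{lem:branch-selection} to write $F=C\,\TricomiU(a,b,\cdot)$ with $C>0$, and then use $\TricomiU(a,b,z)\sim z^{-a}$ as $z\to+\infty$ and $\TricomiU(a,b,z)\sim\frac{\Gamma(b-1)}{\Gamma(a)}z^{1-b}$ as $z\to0^+$ to show that each of (i) and (ii) is equivalent to $C=1$, i.e.\ to (iii). Your extra checks are correct and slightly more explicit than the paper, namely that $\Gamma(b-1)/\Gamma(a)$ is finite and positive and that the connection formula justifies the small-$z$ leading term.
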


\begin{proof}
By Lemma \ref{lem:branch-selection}, one has \(F(z)=C\TricomiU(a,b,z)\) with
\(C>0\). Since
\[
\TricomiU(a,b,z)\sim z^{-a},
\qquad z\to+\infty,
\]
condition \eqref{eq:norm-infty} is equivalent to \(C=1\). Likewise, since
\[
\TricomiU(a,b,z)\sim
\frac{\Gamma(b-1)}{\Gamma(a)}\,z^{1-b},
\qquad z\to0^+,
\]
condition \eqref{eq:norm-zero} is also equivalent to \(C=1\). This proves the
equivalence with \eqref{eq:exact-U}.
\end{proof}

We now translate the previous rigidity statement to the Sonine setting.

\begin{theorem}[Characterization of the Tricomi--Sonine family]
\label{thm:TricomiSonineCharacterization}
Let \((\kappa,k)\) be a Sonine pair on \((0,\infty)\), that is,
\begin{equation}
(\kappa*k)(t)=1,
\qquad t>0,
\label{eq:Sonine-char}
\end{equation}
and denote
\[
K(s):=\Lap\{\kappa\}(s),
\qquad s>0.
\]
Let us assume that \(K\) is a Stieltjes function and that there exists \(\tau>0\) such
that the rescaled function
\begin{equation}
F(z):=K\!\left(\frac{z}{\tau}\right),
\qquad z>0,
\label{eq:FfromK}
\end{equation}
extends analytically to \(\C\setminus(-\infty,0]\) and solves the Kummer equation \eqref{eq:KummerEq}.

Then, there exists a constant \(C>0\) such that
\begin{equation}
K(s)=C\,\TricomiU(a,b,s\tau),
\qquad s>0,
\label{eq:K-CU}
\end{equation}
and consequently
\begin{equation}
\kappa(t)=C\,\kappa_{a,b,\tau}(t),
\qquad
k(t)=C^{-1}k_{a,b,\tau}(t),
\qquad t>0,
\label{eq:kappa-k-scaled}
\end{equation}
where \(\kappa_{a,b,\tau}\) and \(k_{a,b,\tau}\) are the normalized Tricomi kernels introduced in Sect.~\ref{sec:sonine}.

If, in addition, one of the equivalent normalization conditions
\begin{equation}
\lim_{s\to+\infty}(s\tau)^aK(s)=1
\label{eq:Knorm1}
\end{equation}
or
\begin{equation}
\lim_{s\to0^+}(s\tau)^{b-1}K(s)=\frac{\Gamma(b-1)}{\Gamma(a)}
\label{eq:Knorm2}
\end{equation}
holds, then \(C=1\), and therefore
\begin{equation}
K(s)=\TricomiU(a,b,s\tau),
\qquad
\kappa(t)=\kappa_{a,b,\tau}(t),
\qquad
k(t)=k_{a,b,\tau}(t).
\label{eq:exact-tricomi-sonine}
\end{equation}
\end{theorem}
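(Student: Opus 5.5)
The plan is to reduce everything to Lemma~\ref{lem:branch-selection} and Corollary~\ref{cor:normalization-U}, and then transfer the result back to the time domain through Laplace uniqueness and the Sonine identity.

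\emph{Step 1: $F$ satisfies the hypotheses of Lemma~\ref{lem:branch-selection}.} A Stieltjes function composed with a positive dilation is again Stieltjes. Indeed, from
\[
K(s)=\frac{\alpha}{s}+\beta+\int_0^\infty\frac{\mu(\dd r)}{s+r}
\]
we get
\[
F(z)=K(z/\tau)=\frac{\alpha\tau}{z}+\beta+\int_0^\infty\frac{\tau\,\mu(\dd r)}{z+\tau r},
\]
which is again a Stieltjes representation after the pushforward $r\mapsto\tau r$ of the measure $\tau\mu$. The integrability condition is preserved because $(1+\tau r)^{-1}$ and $(1+r)^{-1}$ are comparable. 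By hypothesis, $F$ is analytic on $\C\setminus(-\infty,0]$ and solves \eqref{eq:KummerEq}.

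\emph{Step 2: identify $K$.} Lemma~\ref{lem:branch-selection} gives $F=C\,\TricomiU(a,b,\cdot)$ with $C\ge0$. Moreover $F\not\equiv0$: otherwise $K\equiv0$, so $\kappa=0$ a.e. and $\kappa*k=0\neq1$, contradicting \eqref{eq:Sonine-char}. Hence $C>0$. Evaluating at $z=s\tau$ yields \eqref{eq:K-CU}.

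\emph{Step 3: recover the kernels.} Since $\Lap\{\kappa\}=C\,\Lap\{\kappa_{a,b,\tau}\}$ on $(0,\infty)$ by \eqref{eq:kappaLap}, uniqueness of the Laplace transform gives $\kappa=C\kappa_{a,b,\tau}$ a.e. For $k$, I would take Laplace transforms in \eqref{eq:Sonine-char}. This needs $\kappa,k$ to be Laplace transformable on some half-line, which is implicit in the setting, as $K$ is assumed to exist for all $s>0$; for $k$ I would invoke the standing exponential-order convention of the paper. The convolution theorem then gives
\[
\Lap\{k\}(s)=\frac{1}{sK(s)}=C^{-1}\Psi_{a,b,\tau}(s),
\]
so $k=C^{-1}k_{a,b,\tau}$ by \eqref{eq:kLap} and Laplace uniqueness.

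An alternative route avoids Laplace-transformability of $k$. Both $k$ and $C^{-1}k_{a,b,\tau}$ are Sonine partners of the same $\kappa$, so their difference $d$ satisfies $\kappa*d=0$. Titchmarsh's convolution theorem, together with $\kappa>0$ near $0$, then forces $d=0$ a.e. I expect this justification of the partner's uniqueness to be the only delicate point, and I would present the Titchmarsh argument as the primary one since it needs no growth assumption on $k$.

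\emph{Step 4: normalization.} Under the substitution $z=s\tau$, condition \eqref{eq:Knorm1} reads $\lim_{z\to+\infty}z^aF(z)=1$, and \eqref{eq:Knorm2} reads $\lim_{z\to0^+}z^{b-1}F(z)=\Gamma(b-1)/\Gamma(a)$. These are exactly (i) and (ii) of Corollary~\ref{cor:normalization-U}, which forces $F=\TricomiU(a,b,\cdot)$, i.e.\ $C=1$. Substituting $C=1$ in Step 3 gives \eqref{eq:exact-tricomi-sonine}.
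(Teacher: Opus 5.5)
Your proposal is correct and follows the same route as the paper's proof. You transfer the Stieltjes property to the rescaled $F$, apply Lemma~\ref{lem:branch-selection} to get $K(s)=C\,\TricomiU(a,b,s\tau)$, recover $\kappa$ by Laplace uniqueness and $k$ from the Sonine identity, and obtain $C=1$ from Corollary~\ref{cor:normalization-U}. Your extra details are sound refinements rather than a different approach: the dilation invariance of Stieltjes functions, the justification of $C>0$ via $F\not\equiv0$, and the Titchmarsh argument for uniqueness of the Sonine partner, which removes the tacit Laplace-transformability assumption on $k$ in the paper's Laplace-side step.
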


\begin{proof}
Since \(K\) is Stieltjes on \((0,\infty)\), the same is true for $F(z)$ in \eqref{eq:FfromK}.
By assumption, \(F\) extends analytically to the slit plane and solves
\eqref{eq:KummerEq}. Therefore Lemma \ref{lem:branch-selection} yields
\[
F(z)=C\,\TricomiU(a,b,z)
\]
for some \(C>0\). Replacing \(z\) by \(s\tau\), we obtain
\[
K(s)=F(s\tau)=C\,\TricomiU(a,b,s\tau),
\]
which proves \eqref{eq:K-CU}.

Taking inverse Laplace transforms and using Proposition \ref{prop:kappa}, we get
\[
\kappa(t)=C\,\kappa_{a,b,\tau}(t).
\]
Since \((\kappa,k)\) is a Sonine pair, its Laplace-side identity is
\[
\Lap\{\kappa\}(s)\Lap\{k\}(s)=\frac{1}{s}.
\]
Hence
\[
\Lap\{k\}(s)=\frac{1}{sK(s)}
=
\frac{1}{Cs\,\TricomiU(a,b,s\tau)}
=
\frac{1}{C}\Lap\{k_{a,b,\tau}\}(s),
\]
which yields
\[
k(t)=C^{-1}k_{a,b,\tau}(t),
\]
proving so \eqref{eq:kappa-k-scaled}.

Finally, conditions \eqref{eq:Knorm1} and \eqref{eq:Knorm2} are precisely the rescaled versions of \eqref{eq:norm-infty} and \eqref{eq:norm-zero}. Thus, thanks to Corollary \ref{cor:normalization-U}, each of them is equivalent to \(C=1\), which gives \eqref{eq:exact-tricomi-sonine}.
\end{proof}

As a consequence of Theorem~\ref{thm:TricomiSonineCharacterization}, the induced Sonine fractional calculus coincides, modulo a scalar normalization, with the Tricomi fractional calculus at the operator level. Under either normalization
\eqref{eq:Knorm1} or \eqref{eq:Knorm2}, one has
\begin{equation}
I_\kappa=\ITU_{a,b,\tau},
\qquad
D_k^{C}=\CTD_{a,b,\tau}.
\end{equation}
This follows immediately from \eqref{eq:kappa-k-scaled} and from the
definitions of the associated convolution operators.

The previous result is an internal characterization of the Tricomi family within the Sonine framework, restricted to the Kummer class. It does not claim that every Sonine pair with two distinct algebraic asymptotic exponents must be of Tricomi type.

Conceptually, the characterization has two layers. First, the Stieltjes property excludes the Kummer \(M\)-branch and selects the Tricomi branch.
Second, the asymptotic normalization fixes the residual multiplicative freedom. In this sense, the normalized Tricomi law is not merely an explicit example in the Sonine framework, but the unique normalized Stieltjes member of the Kummer class.

\section{Asymptotic identifiability within the Tricomi-\(U\) family}\label{sec:ident}

The asymptotic formulas obtained above imply that the parameters of the raw Tricomi symbol $\varphi_{a,b,\tau}$ can be recovered from its two asymptotic sectors and equivalently from the associated L\'evy density or hereditary kernel.

To summarize the asymptotic behavior, let collect \eqref{eq:rawlarge}, \eqref{eq:rawsmall},
\eqref{eq:msmall}, \eqref{eq:mlarge},
\eqref{eq:ksmall}, and \eqref{eq:klarge} into the following proposition.
\begin{proposition}\label{prop:collect-asym}
For every \(a\in(0,1)\), \(b\in(1,2)\) and \(\tau>0\), it holds
\[
\varphi_{a,b,\tau}(s)\overset{s\to\infty}{\sim} \tau^a s^a \,,
\qquad
\varphi_{a,b,\tau}(s)\overset{s\to 0^+}{\sim}
\frac{\Gamma(a)}{\Gamma(b-1)}\,\tau^{b-1}s^{b-1} \,,
\]
\[
m_{a,b,\tau}(r)\overset{r\to 0^+}{\sim}
\frac{a\,\tau^a}{\Gamma(1-a)}\,r^{-1-a}\,,
\qquad
m_{a,b,\tau}(r)\overset{r\to\infty}{\sim}
\frac{\Gamma(a)}{\Gamma(b-1)}
\frac{(b-1)\tau^{b-1}}{\Gamma(2-b)}
\,r^{-b} \,,
\]
and
\[
k_{a,b,\tau}(t)\overset{t \to 0^+}{\sim}
\frac{\tau^a}{\Gamma(1-a)}\,t^{-a} \,,
\qquad
k_{a,b,\tau}(t)\overset{t\to\infty}{\sim}
\frac{\Gamma(a)}{\Gamma(b-1)\Gamma(2-b)}
\,\tau^{b-1} t^{1-b} \,.
\]
\end{proposition}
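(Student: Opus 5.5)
The statement simply collects asymptotics established earlier, so the proof is a matter of citing them and checking that the constants agree after rewriting. For the raw symbol, I will invoke Corollary~\ref{cor:rawasym}. Expanding $(s\tau)^a=\tau^a s^a$ in \eqref{eq:rawlarge} gives the first relation. Expanding $(s\tau)^{b-1}=\tau^{b-1}s^{b-1}$ in \eqref{eq:rawsmall} gives the second. These in turn rest on Proposition~\ref{prop:Ubasic}, i.e.\ on the standard large- and small-argument behaviour of $\TricomiU(a,b,\cdot)$ for $a\in(0,1)$, $b\in(1,2)$, where the $z^{1-b}$ branch dominates at the origin.

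For the L\'evy density, I will quote Proposition~\ref{prop:masym}, formulas \eqref{eq:msmall} and \eqref{eq:mlarge}. To make the constants transparent, I will recall the mechanism. By Theorem~\ref{thm:LK}, $\varphi_{a,b,\tau}(s)=s\int_0^\infty \ee^{-sr}\bar\nu(r)\,\dd r$ with tail $\bar\nu(r)=\nu_{a,b,\tau}((r,\infty))$. Then $\varphi(s)\sim c\,s^{\gamma}$ with $\gamma\in(0,1)$ gives, by Karamata's Tauberian theorem, $\bar\nu(r)\sim c\,r^{-\gamma}/\Gamma(1-\gamma)$ in the dual regime. The tail $\bar\nu$ is monotone, so the Tauberian theorem applies. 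To pass from the tail to the density $m_{a,b,\tau}=-\bar\nu'$, I will use the monotone density theorem, which is justified because $m_{a,b,\tau}$ is monotone by Corollary~\ref{cor:mcm}. This yields $m\sim c\gamma\, r^{-1-\gamma}/\Gamma(1-\gamma)$. Taking $\gamma=a$, $c=\tau^a$ near $r=0$, and $\gamma=b-1$, $c=\Gamma(a)\tau^{b-1}/\Gamma(b-1)$ near $r=\infty$, reproduces exactly the stated constants.

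For the derivative kernel, I will invoke Proposition~\ref{prop:k}, formulas \eqref{eq:ksmall} and \eqref{eq:klarge}. The check is the same. First, $\Psi_{a,b,\tau}(s)=\varphi_{a,b,\tau}(s)/s$ behaves like $\tau^a s^{a-1}$ as $s\to\infty$ and like $\Gamma(a)\tau^{b-1}s^{b-2}/\Gamma(b-1)$ as $s\to0^+$. Next, since $k_{a,b,\tau}$ is completely monotone, hence monotone, Karamata's theorem combined with the monotone density theorem converts $c\,s^{-\rho}$ into $c\,t^{\rho-1}/\Gamma(\rho)$. Here $\rho=1-a$ at small $t$ and $\rho=2-b$ at large $t$, which gives the constants $\tau^a/\Gamma(1-a)$ and $\Gamma(a)\tau^{b-1}/(\Gamma(b-1)\Gamma(2-b))$.

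The only delicate point is the legitimacy of the Tauberian inversions: regular variation on the Laplace side must be transferred to pointwise asymptotics of the densities rather than of their integrals. Monotonicity of $m_{a,b,\tau}$ and $k_{a,b,\tau}$, which follows from complete monotonicity, is exactly what is needed for this. Beyond that, the argument is bookkeeping of the constants in the two regimes.
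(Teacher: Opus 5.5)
Your proposal is correct and matches the paper, which obtains this proposition simply by collecting \eqref{eq:rawlarge}, \eqref{eq:rawsmall}, \eqref{eq:msmall}, \eqref{eq:mlarge}, \eqref{eq:ksmall} and \eqref{eq:klarge}. Your Karamata-plus-monotone-density argument, built on the tail representation $\varphi_{a,b,\tau}(s)=s\int_0^\infty \ee^{-sr}\bar\nu(r)\,\dd r$ and on the monotonicity of $m_{a,b,\tau}$ and $k_{a,b,\tau}$, spells out the Abelian--Tauberian step that the paper only invokes by name, and your constants all check out.
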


\begin{theorem}[Asymptotic identifiability]\label{thm:ident-family}
Let $\varphi_{a,b,\tau}$ and $\varphi_{\tilde a,\tilde b,\tilde\tau}$ be two raw Tricomi symbols with $a,\tilde a\in(0,1)$, $b,\tilde b\in(1,2)$ and $\tau,\tilde\tau>0$.

Let us assume that
\begin{equation}
\frac{\varphi_{a,b,\tau}(s)}{\varphi_{\tilde a,\tilde b,\tilde\tau}(s)}
\longrightarrow C_\infty\in(0,\infty),
\qquad s\to\infty,
\label{eq:ratio-infty}
\end{equation}
and
\begin{equation}
\frac{\varphi_{a,b,\tau}(s)}{\varphi_{\tilde a,\tilde b,\tilde\tau}(s)}
\longrightarrow C_0\in(0,\infty),
\qquad s\to0^+.
\label{eq:ratio-zero}
\end{equation}
Then, necessarily
\[
a=\tilde a,
\qquad
b=\tilde b,
\]
and moreover,
\[
C_\infty=\frac{\tau^a}{\tilde\tau^{\,a}},
\qquad
C_0=
\frac{\Gamma(a)/\Gamma(b-1)}{\Gamma(\tilde a)/\Gamma(\tilde b-1)}
\frac{\tau^{b-1}}{\tilde\tau^{\,b-1}}.
\]
In particular, if \(C_\infty=C_0=1\), then \(\tau=\tilde\tau\) and consequently
\[
\varphi_{a,b,\tau}\equiv \varphi_{\tilde a,\tilde b,\tilde\tau}.
\]
\end{theorem}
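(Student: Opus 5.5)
The plan is to use the two asymptotic sectors of Proposition \ref{prop:collect-asym} (equivalently Corollary \ref{cor:rawasym}) and compare power-law exponents. As \(s\to\infty\), the ratio in \eqref{eq:ratio-infty} behaves like
\[
\frac{\varphi_{a,b,\tau}(s)}{\varphi_{\tilde a,\tilde b,\tilde\tau}(s)}
=\frac{\tau^a s^a(1+o(1))}{\tilde\tau^{\,\tilde a}s^{\tilde a}(1+o(1))}
=\frac{\tau^a}{\tilde\tau^{\,\tilde a}}\,s^{a-\tilde a}(1+o(1)).
\]
If \(a>\tilde a\) this diverges, and if \(a<\tilde a\) it tends to \(0\). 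Either case contradicts \(C_\infty\in(0,\infty)\), so \(a=\tilde a\) and \(C_\infty=\tau^a/\tilde\tau^{\,a}\).

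The same argument applies at \(s\to0^+\). Here the ratio is asymptotic to
\[
\frac{\Gamma(a)/\Gamma(b-1)}{\Gamma(\tilde a)/\Gamma(\tilde b-1)}\,\frac{\tau^{b-1}}{\tilde\tau^{\,\tilde b-1}}\,s^{b-\tilde b}.
\]
This tends to \(0\) or \(\infty\) unless \(b=\tilde b\). Hence \(b=\tilde b\), and \(C_0\) takes the stated value. Since \(a=\tilde a\) has already been established, the Gamma prefactor in \(C_0\) actually equals \(1\), so \(C_0=(\tau/\tilde\tau)^{b-1}\). I would state the formula in the paper's form and then note this simplification.

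For the final claim, \(C_\infty=1\) gives \((\tau/\tilde\tau)^a=1\). Since \(a>0\), this forces \(\tau=\tilde\tau\); the condition \(C_0=1\) gives the same conclusion consistently, because \(b-1>0\). All three parameters then coincide, so \(\varphi_{a,b,\tau}\equiv\varphi_{\tilde a,\tilde b,\tilde\tau}\) by definition \eqref{eq:rawsymbol}.

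There is no real obstacle here. The only point needing care is that the asymptotic relations \(\sim\) allow products and quotients of the two \(1+o(1)\) factors, which is routine since all quantities are positive on \((0,\infty)\).
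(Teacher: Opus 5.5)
Your proposal is correct and follows essentially the same route as the paper: you compare the leading power laws of the two raw symbols as $s\to\infty$ and $s\to0^+$ to force $a=\tilde a$ and $b=\tilde b$, read off $C_\infty$ and $C_0$, and then use $C_\infty=1$ with $a>0$ to conclude $\tau=\tilde\tau$. Your remark that the Gamma prefactor in $C_0$ equals $1$ once $a=\tilde a$ and $b=\tilde b$, so that $C_0=(\tau/\tilde\tau)^{b-1}$, is a valid simplification that the paper leaves implicit.
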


\begin{proof}
By the asymptotics in Proposition~\ref{prop:collect-asym},
\[
\varphi_{a,b,\tau}(s)\sim \tau^a s^a,
\qquad
\varphi_{\tilde a,\tilde b,\tilde\tau}(s)\sim
\tilde\tau^{\,\tilde a}s^{\tilde a},
\qquad s\to\infty,
\]
so that
\[
\frac{\varphi_{a,b,\tau}(s)}{\varphi_{\tilde a,\tilde b,\tilde\tau}(s)}
\sim
\frac{\tau^a}{\tilde\tau^{\,\tilde a}}\,s^{a-\tilde a}.
\]
A finite nonzero limit at infinity forces \(a=\tilde a\), and then yields the
formula for \(C_\infty\).

Likewise, for $s\to 0^+$ we have
\[
\varphi_{a,b,\tau}(s)\sim
\frac{\Gamma(a)}{\Gamma(b-1)}\,\tau^{b-1}s^{b-1},
\qquad
\varphi_{\tilde a,\tilde b,\tilde\tau}(s)\sim
\frac{\Gamma(\tilde a)}{\Gamma(\tilde b-1)}\,
\tilde\tau^{\,\tilde b-1}s^{\tilde b-1},
\]
leading to
\[
\frac{\varphi_{a,b,\tau}(s)}{\varphi_{\tilde a,\tilde b,\tilde\tau}(s)}
\sim
\frac{\Gamma(a)/\Gamma(b-1)}{\Gamma(\tilde a)/\Gamma(\tilde b-1)}
\frac{\tau^{b-1}}{\tilde\tau^{\,\tilde b-1}}
\,s^{b-\tilde b}, \qquad s\to0^+.
\]
As before, a finite nonzero limit at the origin forces \(b=\tilde b\) and then gives the formula for \(C_0\). If \(C_\infty=C_0=1\), then \(a=\tilde a\) and \(b=\tilde b\) and from \(C_\infty=1\) we obtain \(\tau=\tilde\tau\).
\end{proof}

\begin{theorem}[Recovery formulas]\label{thm:recovery}
Let \(a\in(0,1)\), \(b\in(1,2)\) and \(\tau>0\). Then, we have
\begin{equation}
a
=
\lim_{s\to\infty}\frac{\log \varphi_{a,b,\tau}(s)}{\log s}
=
-1-\lim_{r\to0^+}\frac{\log m_{a,b,\tau}(r)}{\log r}
=
-\lim_{t\to0^+}\frac{\log k_{a,b,\tau}(t)}{\log t},
\label{eq:a-ident}
\end{equation}
and
\begin{equation}
b
=
1+\lim_{s\to0^+}\frac{\log \varphi_{a,b,\tau}(s)}{\log s}
=
-\lim_{r\to\infty}\frac{\log m_{a,b,\tau}(r)}{\log r}
=
1-\lim_{t\to\infty}\frac{\log k_{a,b,\tau}(t)}{\log t}.
\label{eq:b-ident}
\end{equation}
Equivalently,
\begin{equation}
a
=
-\lim_{s\to\infty}\frac{\log \TricomiU(a,b,s\tau)}{\log s},
\qquad
b
=
1-\lim_{s\to0^+}\frac{\log \TricomiU(a,b,s\tau)}{\log s}.
\label{eq:U-ident}
\end{equation}
\end{theorem}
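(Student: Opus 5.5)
The plan is to derive every recovery formula from the asymptotic equivalences collected in Proposition~\ref{prop:collect-asym}, together with one elementary principle. If $g(x)\sim c\,x^{p}$ with $c>0$, as $x\to x_0\in\{0^+,\infty\}$, then $\log g(x)=\log c+p\log x+o(1)$. Since $|\log x|\to\infty$ in either limit, dividing by $\log x$ gives
\[
\lim_{x\to x_0}\frac{\log g(x)}{\log x}=p .
\]
I would state this once as a preliminary observation, noting that $g(x)/(c x^{p})\to1$ implies $\log\bigl(g(x)/(cx^{p})\bigr)\to0$. All the remaining steps are then bookkeeping of exponents.

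First I treat \eqref{eq:a-ident}.
\begin{itemize}
\item From $\varphi_{a,b,\tau}(s)\sim\tau^a s^a$ as $s\to\infty$ the principle gives exponent $a$.
\item From $m_{a,b,\tau}(r)\sim \frac{a\tau^a}{\Gamma(1-a)}r^{-1-a}$ as $r\to0^+$ it gives exponent $-1-a$, so $a=-1-\lim \log m/\log r$. The constant is positive because $a\in(0,1)$.
\item From $k_{a,b,\tau}(t)\sim\frac{\tau^a}{\Gamma(1-a)}t^{-a}$ as $t\to0^+$ it gives exponent $-a$.
\end{itemize}

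Next I treat \eqref{eq:b-ident} the same way.
\begin{itemize}
\item The small-$s$ law $\varphi_{a,b,\tau}(s)\sim\frac{\Gamma(a)}{\Gamma(b-1)}\tau^{b-1}s^{b-1}$ gives exponent $b-1$.
\item The large-$r$ law $m_{a,b,\tau}(r)\sim C r^{-b}$ gives exponent $-b$.
\item The large-$t$ law $k_{a,b,\tau}(t)\sim C' t^{1-b}$ gives exponent $1-b$.
\end{itemize}
In each case I would check that the prefactor is strictly positive. Here $\Gamma(a),\Gamma(b-1),\Gamma(2-b)>0$ and $b-1>0$ in the admissible box, so this holds and the logarithm of the constant is finite.

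Finally, \eqref{eq:U-ident} follows from $\log \TricomiU(a,b,s\tau)=-\log\varphi_{a,b,\tau}(s)$, by \eqref{eq:rawsymbol}. Alternatively it follows directly from \eqref{eq:Ularge} and \eqref{eq:Usmall}, since $(s\tau)^{-a}=\tau^{-a}s^{-a}$ and likewise $(s\tau)^{1-b}=\tau^{1-b}s^{1-b}$.

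There is no genuine obstacle, since the real analytic content is the Tauberian transfer already packaged in Propositions~\ref{prop:masym} and \ref{prop:k}. The only point needing care is the sign and orientation of $\log x$ in the two regimes. As $x\to0^+$ we have $\log x\to-\infty$, so the ratio still tends to the exponent $p$, but the conversions to $a$ and $b$ (the shifts by $\pm1$ and the sign flips) must be tracked carefully. I would verify each conversion once by substitution.
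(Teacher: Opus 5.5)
Your proposal is correct and follows essentially the same route as the paper, which likewise reads off each exponent from the asymptotic equivalences in Proposition~\ref{prop:collect-asym} (together with \eqref{eq:Usmall} and \eqref{eq:Ularge}) via $\log g(x)=p\log x+O(1)$. Your explicit checks that the prefactors are strictly positive and that the shifts and sign conversions for $a$ and $b$ are right are exactly the bookkeeping the paper leaves implicit.
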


\begin{proof}
All identities follow directly from the asymptotic formulas collected in Proposition \ref{prop:collect-asym} together with \eqref{eq:Usmall} and \eqref{eq:Ularge}.
For instance, \eqref{eq:rawlarge} gives
\[
\log \varphi_{a,b,\tau}(s)=a\log s+O(1),
\qquad s\to\infty,
\]
hence
\[
\lim_{s\to\infty}\frac{\log \varphi_{a,b,\tau}(s)}{\log s}=a.
\]
The remaining formulas are obtained in exactly the same way.
\end{proof}

As a consequence, rearranging the leading coefficients in
\eqref{eq:rawlarge}, \eqref{eq:rawsmall}, \eqref{eq:msmall} and
\eqref{eq:klarge}, we might also state the following corollary to set $\tau$.
\begin{corollary}[Recovery of the scale parameter]\label{cor:tau-recovery}
After identifying \(a\) and \(b\), the scale parameter \(\tau\) is recovered from the asymptotic amplitudes. More precisely,
\begin{equation}
\tau
=
\left(
\lim_{s\to\infty}\frac{\varphi_{a,b,\tau}(s)}{s^a}
\right)^{1/a}
=
\left(
\frac{\Gamma(1-a)}{a}
\lim_{r\to0^+} r^{1+a}m_{a,b,\tau}(r)
\right)^{1/a},
\label{eq:tau-fast}
\end{equation}
and also
\begin{align}
\tau
&=
\left(
\frac{\Gamma(b-1)}{\Gamma(a)}
\lim_{s\to0^+}\frac{\varphi_{a,b,\tau}(s)}{s^{b-1}}
\right)^{1/(b-1)}
\\
&=
\left(
\frac{\Gamma(b-1)\Gamma(2-b)}{\Gamma(a)}
\lim_{t\to\infty} t^{b-1}k_{a,b,\tau}(t)
\right)^{1/(b-1)}.
\label{eq:tau-slow}
\end{align}
\end{corollary}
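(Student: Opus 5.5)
The plan is to read off the leading amplitudes in Proposition~\ref{prop:collect-asym} and solve each one for \(\tau\), treating \(a\) and \(b\) as already known from Theorem~\ref{thm:recovery}. Four limits are involved. Each asymptotic relation \(f\sim c\,g\) is turned into an exact limit \(\lim f/g=c\), and then the elementary map \(x\mapsto x^{1/a}\) or \(x\mapsto x^{1/(b-1)}\) is inverted on \((0,\infty)\). This is legitimate because \(a>0\), \(b-1>0\) and all the amplitudes are strictly positive.

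For \eqref{eq:tau-fast}, I use \eqref{eq:rawlarge}. Along the real axis it gives \(\lim_{s\to\infty}\varphi_{a,b,\tau}(s)/s^a=\tau^a\), and taking the \(1/a\)-th power yields the first expression. Next, \eqref{eq:msmall} gives \(\lim_{r\to0^+}r^{1+a}m_{a,b,\tau}(r)=a\tau^a/\Gamma(1-a)\). Multiplying by \(\Gamma(1-a)/a\), which is positive and finite since \(0<a<1\), isolates \(\tau^a\), and again the \(1/a\)-th power gives \(\tau\).

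For \eqref{eq:tau-slow}, I proceed the same way in the slow sector. From \eqref{eq:rawsmall} I get \(\lim_{s\to0^+}\varphi_{a,b,\tau}(s)/s^{b-1}=\Gamma(a)\tau^{b-1}/\Gamma(b-1)\). Multiplying by \(\Gamma(b-1)/\Gamma(a)\) leaves \(\tau^{b-1}\). From \eqref{eq:klarge} I get \(\lim_{t\to\infty}t^{b-1}k_{a,b,\tau}(t)=\Gamma(a)\tau^{b-1}/(\Gamma(b-1)\Gamma(2-b))\). Multiplying by \(\Gamma(b-1)\Gamma(2-b)/\Gamma(a)\), where all Gamma factors are finite and positive because \(b-1,2-b\in(0,1)\), leaves \(\tau^{b-1}\). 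The \(1/(b-1)\)-th power then concludes.

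The only genuinely non-routine input is the validity of the Tauberian asymptotics \eqref{eq:msmall} and \eqref{eq:klarge} for \(m_{a,b,\tau}\) and \(k_{a,b,\tau}\). These require monotonicity of the densities, which is supplied by Corollary~\ref{cor:mcm} and Proposition~\ref{prop:k}. Since those asymptotics are already established earlier in the paper, I would simply cite them, and the corollary reduces to algebra.
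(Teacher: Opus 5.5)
Your proposal is correct and takes the paper's own route: the paper justifies the corollary only by ``rearranging the leading coefficients'' in \eqref{eq:rawlarge}, \eqref{eq:rawsmall}, \eqref{eq:msmall} and \eqref{eq:klarge}, and your algebra for each of the four amplitudes checks out. Your added remarks go slightly beyond what the paper makes explicit but are appropriate: the positivity of the Gamma factors, and the fact that the Tauberian asymptotics for $m_{a,b,\tau}$ and $k_{a,b,\tau}$ depend on the monotonicity supplied by Corollary~\ref{cor:mcm} and Proposition~\ref{prop:k}.
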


\section{The Volterra formulation and the associated Cauchy problem}
\label{sec:model}

The natural low-regularity evolution problem induced by the Tricomi Sonine pair is the Volterra equation of the second kind~\cite{GripenbergLondenStaffans1990}
\begin{equation}
u(t)+\lambda\int_0^t \kappa_{a,b,\tau}(t-\xi)\,u(\xi)\,\dd\xi
=
u_0+\int_0^t \kappa_{a,b,\tau}(t-\xi)\,f(\xi)\,\dd\xi,
\quad t>0,
\label{eq:Volterra}
\end{equation}
where \(\lambda>0\), \(u_0\in\R\) and \(f\) is a causal function. This formulation is well defined under minimal local integrability assumptions. The differential Cauchy problem associated with the Tricomi--Caputo derivative appears as a more regular realization of the same dynamics. For the general viewpoint on fractional evolution equations driven by complete Bernstein symbols
\cite{Kochubei2011,LuchkoYamamoto2020,Luchko2021Operational}.

\begin{definition}[Mild solution]
Let us consider \(u_0\in\R\), \(\lambda>0\) and \(f\in L^1_{\mathrm{loc}}([0,\infty))\).
A causal function \(u\in L^1_{\mathrm{loc}}([0,\infty))\) is called a mild solution of \eqref{eq:Volterra} if it satisfies \eqref{eq:Volterra} for almost every \(t>0\).
\end{definition}

\begin{theorem}[Existence and uniqueness of mild solutions]
\label{thm:ExistUniqueMild}
Let \(u_0\in\R\), \(\lambda>0\) and \(f\in L^1_{\mathrm{loc}}([0,\infty))\).
Then, the Volterra equation \eqref{eq:Volterra} admits a unique mild solution
\(u\in L^1_{\mathrm{loc}}([0,\infty))\).

More precisely, for every \(T>0\), the restriction \(u|_{(0,T)}\) is the unique
solution in \(L^1(0,T)\) of
\begin{equation}
u+\lambda V_Tu=g_T,
\label{eq:VT-equation}
\end{equation}
where we defined
\begin{equation}
(V_Tu)(t):=\int_0^t \kappa_{a,b,\tau}(t-\xi)\,u(\xi)\,\dd\xi,    
\end{equation}
and
\begin{equation}
g_T(t):=u_0+\int_0^t \kappa_{a,b,\tau}(t-\xi)\,f(\xi)\,\dd\xi.    
\end{equation}

\end{theorem}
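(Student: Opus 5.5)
The plan is to treat \eqref{eq:VT-equation} on each finite interval $(0,T)$ as a linear Volterra equation of the second kind with the positive, locally integrable convolution kernel $\lambda\kappa_{a,b,\tau}$ from Proposition~\ref{prop:kappa}. I would then glue the solutions obtained for different $T$.

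First, $g_T\in L^1(0,T)$. Indeed, $\kappa_{a,b,\tau}\in L^1(0,T)$ by Proposition~\ref{prop:kappa}, and Young's inequality for causal convolutions gives $\|\kappa_{a,b,\tau}*f\|_{L^1(0,T)}\le \|\kappa_{a,b,\tau}\|_{L^1(0,T)}\|f\|_{L^1(0,T)}$. The constant $u_0$ is trivially integrable on $(0,T)$. The same bound shows that $V_T$ is a bounded linear operator on $L^1(0,T)$.

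The key step is invertibility of $I+\lambda V_T$ on $L^1(0,T)$. I would use an exponentially weighted norm $\|u\|_{\sigma}:=\int_0^T \ee^{-\sigma t}|u(t)|\,\dd t$, which is equivalent to the usual $L^1(0,T)$ norm. Multiplying by the weight $\ee^{-\sigma t}$ commutes with causal convolution up to replacing $\kappa_{a,b,\tau}$ by $\ee^{-\sigma t}\kappa_{a,b,\tau}$. Hence
\[
\|V_Tu\|_\sigma\le \Bigl(\int_0^T \ee^{-\sigma t}\kappa_{a,b,\tau}(t)\,\dd t\Bigr)\|u\|_\sigma .
\]
By dominated convergence, using only $\kappa_{a,b,\tau}\in L^1(0,T)$, the factor tends to $0$ as $\sigma\to\infty$. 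Choosing $\sigma$ with $\lambda\int_0^T\ee^{-\sigma t}\kappa_{a,b,\tau}\,\dd t<1$ makes $\lambda V_T$ a contraction. The Neumann series then gives a unique $u_T\in L^1(0,T)$ solving \eqref{eq:VT-equation}, namely $u_T=\sum_{n\ge0}(-\lambda V_T)^n g_T$. Equivalently, $u_T=g_T-r_\lambda*g_T$ with the resolvent kernel $r_\lambda=\sum_{n\ge1}(-1)^{n+1}\lambda^n\kappa_{a,b,\tau}^{*n}\in L^1_{\mathrm{loc}}$, as in \cite{GripenbergLondenStaffans1990}.

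Next I would glue the local solutions. Both the equation and the operators are causal, so for $T<T'$ the restriction of $u_{T'}$ to $(0,T)$ solves the $T$-problem. By uniqueness it coincides with $u_T$ almost everywhere. Setting $u:=u_T$ on $(0,T)$ therefore defines a consistent causal $u\in L^1_{\mathrm{loc}}([0,\infty))$, and this $u$ satisfies \eqref{eq:Volterra} almost everywhere. For global uniqueness, any two mild solutions have a difference $w$ with $w+\lambda V_Tw=0$ on every $(0,T)$, so $w=0$ almost everywhere on every $(0,T)$.

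The only delicate point is the contraction estimate. No smallness of $T$ or exponential-order assumption on $f$ is available, which is why the weighted norm is used. The step that must be checked carefully is the weighted Young inequality for causal convolution, i.e.\ applying Tonelli to $\int_0^T\ee^{-\sigma t}\int_0^t\kappa(t-\xi)|u(\xi)|\,\dd\xi\,\dd t$, together with the vanishing of $\int_0^T\ee^{-\sigma t}\kappa_{a,b,\tau}(t)\,\dd t$ as $\sigma\to\infty$, which relies on the integrable singularity $t^{a-1}$ at the origin. An alternative would be a Gronwall/iterated-kernel bound $\|\kappa^{*n}\|_{L^1(0,T)}\le \|\kappa\|^n$ combined with the weight, but the contraction route is cleaner.
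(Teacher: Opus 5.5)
Your proof is correct, and it differs from the paper's at the one step that carries the weight. Both arguments solve \eqref{eq:VT-equation} by a Neumann series on $L^1(0,T)$ and then glue the local solutions using causality and uniqueness, exactly as you do.

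\textbf{How convergence of the Neumann series is justified.} The paper asserts the factorial bound $\|V_T^n\|\le K_T^n/n!$ with $K_T=\|\kappa_{a,b,\tau}\|_{L^1(0,T)}$. You instead make $\lambda V_T$ a strict contraction in the equivalent weighted norm $\|\cdot\|_\sigma$.

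\textbf{Your route is the sound one.} It uses only $\kappa_{a,b,\tau}\in L^1(0,T)$. It also proves what is really needed, namely that $V_T$ has spectral radius zero, since $\|V_T^n\|\le \ee^{\sigma T}\bigl(\int_0^T\ee^{-\sigma t}\kappa_{a,b,\tau}(t)\,\dd t\bigr)^n$ for every $\sigma>0$.

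\textbf{The factorial estimate fails here.} It is a bounded-kernel estimate, where it reads $(T\sup_{(0,T)}\kappa)^n/n!$, and it does not hold for the weakly singular Tricomi kernel. For $b=a+1$, $a=\tfrac12$ one has $\kappa_{a,b,\tau}(t)=(\pi\tau t)^{-1/2}$ and $\kappa_{a,b,\tau}*\kappa_{a,b,\tau}\equiv\tau^{-1}$. Hence $\|V_T^2\|=T/\tau$, which exceeds $2T/(\pi\tau)=K_T^2/2$.

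The paper's conclusion still holds. It can also be repaired directly: using $\kappa_{a,b,\tau}(t)\le C_T\,t^{a-1}$ on $(0,T)$ gives the Gamma-type bound $\|V_T^n\|\le (C_T\Gamma(a)T^a)^n/\Gamma(na+1)$. But as written, it is your weighted-norm step that actually makes the argument rigorous.

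\textbf{A small addition.} Your contraction constant can be chosen uniformly in $T$, because $\int_0^T\ee^{-\sigma t}\kappa_{a,b,\tau}(t)\,\dd t\le\TricomiU(a,b,\sigma\tau)\to0$ as $\sigma\to\infty$.
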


\begin{proof}
Let us fix \(T>0\). Since \(\kappa_{a,b,\tau}\in L^1(0,T)\) by Proposition \ref{prop:kappa} and given \(f\in L^1(0,T)\), one has \(g_T\in L^1(0,T)\). The operator \(V_T:L^1(0,T)\to L^1(0,T)\) is bounded, and if
\[
K_T:=\int_0^T \kappa_{a,b,\tau}(t)\,\dd t,
\]
then, since the iterates of a Volterra operator involve integration over ordered time variables, one obtains the standard factorial decay for its powers \cite{GripenbergLondenStaffans1990},
\[
\|V_T^n\|_{\mathcal{L}(L^1(0,T))}\le \frac{K_T^n}{n!},
\qquad n\ge1.
\]
Therefore, the Neumann series
\[
(I+\lambda V_T)^{-1}
=
\sum_{n=0}^\infty (-\lambda)^n V_T^n
\]
converges in operator norm on \(L^1(0,T)\). Consequently, in \(L^1(0,T)\), \eqref{eq:VT-equation} admits a unique solution
\[
u_T=(I+\lambda V_T)^{-1}g_T .
\]

If \(0<T_1<T_2\), uniqueness on \((0,T_1)\) implies that the solutions
\(u_{T_1}\) and \(u_{T_2}\) coincide on \((0,T_1)\). Thus the family
\(\{u_T\}_{T>0}\) is consistent and defines a unique function \(u\in L^1_{\mathrm{loc}}([0,\infty))\) satisfying \eqref{eq:Volterra} on every finite interval and this is the unique mild solution.
\end{proof}

This latter Theorem \ref{thm:ExistUniqueMild} provides the primary well-posedness result for the Tricomi-induced evolution problem and we emphasize that, at this level, no absolute continuity of the solution is required.

\begin{proposition}[Resolvent kernel for the forcing term]
\label{prop:forcingresolvent}
Let us define
\begin{equation}
\widehat P_{\lambda,a,b,\tau}(s)
:=
\frac{1}{\lambda+\varphi_{a,b,\tau}(s)}
=
\frac{\TricomiU(a,b,s\tau)}{1+\lambda\,\TricomiU(a,b,s\tau)},
\qquad s>0.
\label{eq:Phat}
\end{equation}
Then, \(\widehat P_{\lambda,a,b,\tau}\) is a Stieltjes function and thus \(P_{\lambda,a,b,\tau}\) is completely monotone. Moreover, \(P_{\lambda,a,b,\tau}\) satisfies the resolvent identity
\begin{equation}
P_{\lambda,a,b,\tau}
+
\lambda\,\kappa_{a,b,\tau}*P_{\lambda,a,b,\tau}
=
\kappa_{a,b,\tau}.
\label{eq:Presolvent}
\end{equation}
\end{proposition}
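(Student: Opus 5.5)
The plan is to establish the Stieltjes property of \(\widehat P_{\lambda,a,b,\tau}\) first, then deduce complete monotonicity of its inverse Laplace transform, and finally verify the resolvent identity on the Laplace side.

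\textbf{Stieltjes property.} By Corollary \ref{cor:rawCBF}, \(\varphi_{a,b,\tau}\) is complete Bernstein. The class of complete Bernstein functions is a convex cone containing the positive constants: this is immediate from \eqref{eq:CBFdef} by adding \(\lambda\) to \(c_0\). Hence \(\lambda+\varphi_{a,b,\tau}\) is complete Bernstein. It is not identically zero, since it is at least \(\lambda>0\). Proposition \ref{prop:StCBF} then gives that \(\widehat P_{\lambda,a,b,\tau}=1/(\lambda+\varphi_{a,b,\tau})\) is Stieltjes. The second expression in \eqref{eq:Phat} follows by multiplying numerator and denominator by \(\TricomiU(a,b,s\tau)\).

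\textbf{Complete monotonicity of \(P_{\lambda,a,b,\tau}\).} I will show that the Stieltjes representation of \(\widehat P_{\lambda,a,b,\tau}\) has \(\alpha=\beta=0\).
\begin{itemize}
\item As \(s\to\infty\), \(\varphi_{a,b,\tau}(s)\to\infty\) by \eqref{eq:rawlarge}, so \(\widehat P_{\lambda,a,b,\tau}(s)\to0\), which forces \(\beta=0\).
\item As \(s\to0^+\), \(\varphi_{a,b,\tau}(s)\to0\) by \eqref{eq:rawsmall}, so \(s\widehat P_{\lambda,a,b,\tau}(s)\to0\), which forces \(\alpha=0\).
\end{itemize}
Thus \(\widehat P_{\lambda,a,b,\tau}(s)=\int_0^\infty (s+r)^{-1}\mu(\dd r)\). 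Using \((s+r)^{-1}=\int_0^\infty e^{-st}e^{-rt}\,\dd t\) and Tonelli, I will define
\[
P_{\lambda,a,b,\tau}(t):=\int_0^\infty e^{-rt}\,\mu(\dd r),
\]
whose Laplace transform is \(\widehat P_{\lambda,a,b,\tau}\). This \(P\) is completely monotone, being the Laplace transform of a positive measure (Bernstein's theorem).

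Local integrability of \(P\) near \(0\) requires a check. The quantity \(\int_0^1 P\) is controlled by \(\int \frac{1-e^{-r}}{r}\,\mu(\dd r)\), which is finite because \(\int(1+r)^{-1}\mu(\dd r)<\infty\). This integrability is what justifies the later use of the convolution theorem. I expect it to be the only mildly delicate point. It can also be seen directly from \(0\le \widehat P\le \TricomiU\) and \(\int_0^1 P\le e\,\widehat P(1)\).

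\textbf{Resolvent identity.} Both \(\kappa_{a,b,\tau}\) (Proposition \ref{prop:kappa}) and \(P_{\lambda,a,b,\tau}\) are nonnegative, locally integrable and Laplace transformable for \(s>0\). The convolution theorem therefore gives
\[
\Lap\{P+\lambda\kappa*P\}(s)=\widehat P(s)\bigl(1+\lambda \TricomiU(a,b,s\tau)\bigr)=\TricomiU(a,b,s\tau)=\Lap\{\kappa_{a,b,\tau}\}(s),
\]
using the second form of \eqref{eq:Phat}. Uniqueness of the Laplace transform for \(L^1_{\mathrm{loc}}\) functions then yields \eqref{eq:Presolvent} almost everywhere. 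Since \(P\) is continuous on \((0,\infty)\), being completely monotone, and \(\kappa*P\) is continuous, the identity holds pointwise for \(t>0\).
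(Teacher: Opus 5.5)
Your proposal is correct and follows the same route as the paper. You show that $\lambda+\varphi_{a,b,\tau}$ is complete Bernstein, so its reciprocal is Stieltjes by Proposition~\ref{prop:StCBF}. The resolvent identity then follows from $\bigl(1+\lambda\TricomiU(a,b,s\tau)\bigr)\widehat P_{\lambda,a,b,\tau}(s)=\TricomiU(a,b,s\tau)$ together with uniqueness of the Laplace transform. Your extra checks make explicit what the paper compresses into ``hence the Laplace transform of a completely monotone kernel'': that $\beta=0$, so no Dirac mass at the origin appears; that $P_{\lambda,a,b,\tau}$ is locally integrable; and that the a.e.\ identity upgrades to a pointwise one.
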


\begin{proof}
Recalling that \(\varphi_{a,b,\tau}\) is a complete Bernstein function by Corollary \ref{cor:rawCBF}, the shifted function
\[
s\mapsto \lambda+\varphi_{a,b,\tau}(s)
\]
is again complete Bernstein \cite{SchillingSongVondracek2012}. By Proposition \ref{prop:StCBF}, its reciprocal \(\widehat P_{\lambda,a,b,\tau}\) is a Stieltjes function, hence the Laplace transform of a completely monotone kernel.

Furthermore, we find
\[
\bigl(1+\lambda\TricomiU(a,b,s\tau)\bigr)\widehat P_{\lambda,a,b,\tau}(s)
=
\TricomiU(a,b,s\tau),
\]
and since \(\Lap\{\kappa_{a,b,\tau}\}(s)=\TricomiU(a,b,s\tau)\), Laplace inversion yields \eqref{eq:Presolvent}.
\end{proof}

\begin{theorem}[Laplace-resolvent characterization of mild solutions]
\label{thm:ResolventRepresentation}
Assume that \(f\) is causal and of exponential order and let also \(u\) be a causal function of exponential order. Then \(u\) is a mild solution of
\eqref{eq:Volterra} if and only if
\begin{equation}
\widehat u(s)
=
\widehat S_{\lambda,a,b,\tau}(s)\,u_0
+
\widehat P_{\lambda,a,b,\tau}(s)\,\widehat f(s),
\qquad s>0,
\label{eq:uhat-resolvent}
\end{equation}
where
\begin{equation}
\widehat S_{\lambda,a,b,\tau}(s)
:=
\frac{1}{s\bigl(1+\lambda\,\TricomiU(a,b,s\tau)\bigr)}
=
\frac{\varphi_{a,b,\tau}(s)}
{s\bigl(\lambda+\varphi_{a,b,\tau}(s)\bigr)},
\label{eq:Shat}
\end{equation}
and \(\widehat P_{\lambda,a,b,\tau}\) is given by \eqref{eq:Phat}.
Equivalently,
\begin{equation}
u(t)=S_{\lambda,a,b,\tau}(t)\,u_0
+
\int_0^t P_{\lambda,a,b,\tau}(t-\xi)\,f(\xi)\,\dd\xi.
\label{eq:u-resolvent}
\end{equation}
\end{theorem}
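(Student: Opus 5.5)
The plan is to take Laplace transforms of both sides of \eqref{eq:Volterra} and then use uniqueness of the Laplace transform together with the resolvent identity of Proposition~\ref{prop:forcingresolvent}.

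Since \(\kappa_{a,b,\tau}\in L^1_{\mathrm{loc}}\) has Laplace transform \(\TricomiU(a,b,s\tau)\) for every \(s>0\) by Proposition~\ref{prop:kappa}, and \(u,f\) are of exponential order, the convolutions \(\kappa_{a,b,\tau}*u\) and \(\kappa_{a,b,\tau}*f\) have Laplace transforms on a common right half-line \(s>s_0\). There the convolution theorem applies. If \(u\) is a mild solution, transforming \eqref{eq:Volterra} gives
\[
\bigl(1+\lambda\TricomiU(a,b,s\tau)\bigr)\widehat u(s)=\frac{u_0}{s}+\TricomiU(a,b,s\tau)\widehat f(s).
\]
Because \(\TricomiU(a,b,s\tau)>0\), we can divide by \(1+\lambda\TricomiU\). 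Multiplying numerator and denominator by \(\varphi_{a,b,\tau}\) then gives \eqref{eq:uhat-resolvent} with \(\widehat S\) and \(\widehat P\) as in \eqref{eq:Shat} and \eqref{eq:Phat}. This identity holds first for \(s>s_0\). Both sides are analytic in \(\Re s>s_0\), and the right-hand side extends analytically to all \(s>0\), so we read \eqref{eq:uhat-resolvent} on the common domain of existence.

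For the converse, assume \eqref{eq:uhat-resolvent}. Reversing the algebra shows that the Laplace transforms of the two sides of \eqref{eq:Volterra} coincide for \(s>s_0\). By Lerch's uniqueness theorem, the two sides agree almost everywhere, so \(u\) is a mild solution. Alternatively, one can combine Theorem~\ref{thm:ExistUniqueMild} with the forward direction.

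For the time-domain form \eqref{eq:u-resolvent}, I need \(\widehat S\) and \(\widehat P\) to be Laplace transforms of locally integrable kernels.
\begin{itemize}
\item For \(P_{\lambda,a,b,\tau}\), this is Proposition~\ref{prop:forcingresolvent}. The function \(\widehat P\) is Stieltjes, and it vanishes at infinity because \(\widehat P\sim(s\tau)^{-a}\) by \eqref{eq:Ularge}, so it has no constant term. It is therefore the transform of a completely monotone kernel in \(L^1_{\mathrm{loc}}\).
\item For \(S_{\lambda,a,b,\tau}\), I note that \(\widehat S=\frac{1}{s}-\lambda\,\frac{\widehat P}{s}\). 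Hence \(S_{\lambda,a,b,\tau}(t)=1-\lambda\int_0^t P_{\lambda,a,b,\tau}(\xi)\,\dd\xi\), which is continuous. Equivalently, \(\widehat S=\varphi/(s(\lambda+\varphi))\) is Stieltjes, since \(\varphi/(\lambda+\varphi)\) is complete Bernstein, so \(S\) is completely monotone.
\end{itemize}
With both kernels in hand, the convolution theorem and uniqueness of the Laplace transform turn \eqref{eq:uhat-resolvent} into \eqref{eq:u-resolvent} almost everywhere.

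The main obstacle is the bookkeeping of domains. The convolution theorem must be applied on a half-line where all the transforms converge absolutely. I also need that \(P\) and \(S\) are of exponential order, in fact bounded by \(1\) times a locally integrable function, so that \(P*f\) has exponential order. Complete monotonicity gives \(0\le S\le 1\) and \(P\ge0\) with \(\int_0^\infty P=\widehat P(0^+)=1/\lambda\), which settles this.
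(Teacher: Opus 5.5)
Your proposal is correct and follows the paper's proof: Laplace-transform the Volterra equation using $\Lap\{\kappa_{a,b,\tau}\}(s)=\TricomiU(a,b,s\tau)$, solve algebraically for $\widehat u$, and invert. Your extra bookkeeping (Lerch's theorem for the converse, and identifying $P$ as an integrable completely monotone kernel and $S=1-\lambda\int_0^t P$ as bounded) only makes explicit what the paper leaves implicit. One small slip: the right-hand side of \eqref{eq:uhat-resolvent} need not extend analytically to all $s>0$, because $\widehat f$ may not (e.g.\ $f(t)=e^{t}$), so the identity should be read only beyond the abscissae of convergence of $\widehat f$ and $\widehat u$, which is in effect what you end up doing.
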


\begin{proof}
Taking Laplace transforms in \eqref{eq:Volterra} and using
\(\Lap\{\kappa_{a,b,\tau}\}(s)=\TricomiU(a,b,s\tau)\), we obtain
\[
\widehat u(s)+\lambda\,\TricomiU(a,b,s\tau)\widehat u(s)
=
\frac{u_0}{s}+\TricomiU(a,b,s\tau)\widehat f(s).
\]
Hence, it follows that
\[
\widehat u(s)
=
\frac{1}{s\bigl(1+\lambda\,\TricomiU(a,b,s\tau)\bigr)}\,u_0
+
\frac{\TricomiU(a,b,s\tau)}{1+\lambda\,\TricomiU(a,b,s\tau)}\,\widehat f(s),
\]
which is exactly \eqref{eq:uhat-resolvent} and the equivalence with
\eqref{eq:u-resolvent} follows by Laplace inversion.
\end{proof}

Interestingly, from \eqref{eq:Shat} and \eqref{eq:Phat}, employing \eqref{eq:intro:rawsymbol}, it is quite straightforward to write
\begin{equation}
\widehat S_{\lambda,a,b,\tau}(s) =
\frac{1}{s}-\frac{\lambda}{s}\widehat P_{\lambda,a,b,\tau}(s),    
\end{equation}
and taking Laplace inverse transforms leads to the following statement.
\begin{corollary}[Relation between the two resolvent kernels]
\label{cor:SfromP}
Let us consider the two resolvent kernels
\(P_{\lambda,a,b,\tau}\) and \(S_{\lambda,a,b,\tau}\).
Then, \(S_{\lambda,a,b,\tau}\) is related to \(P_{\lambda,a,b,\tau}\) by
\begin{equation}
S_{\lambda,a,b,\tau}(t)
=
1-\lambda\int_0^t P_{\lambda,a,b,\tau}(\xi)\,\dd\xi,
\qquad t\ge0.
\label{eq:SfromP}
\end{equation}
\end{corollary}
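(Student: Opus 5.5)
The plan is to work entirely on the Laplace side and then invert, using the algebraic identity recorded just before the statement. I first check that identity from \eqref{eq:Shat} and \eqref{eq:Phat}. Writing $U=\TricomiU(a,b,s\tau)$, we have
\[
\frac{1}{s}-\frac{\lambda}{s}\,\widehat P_{\lambda,a,b,\tau}(s)
=\frac{1}{s}\,\frac{1+\lambda U-\lambda U}{1+\lambda U}
=\widehat S_{\lambda,a,b,\tau}(s),
\qquad s>0 .
\]
The right-hand side of \eqref{eq:SfromP} is $1-\lambda\,(1*P_{\lambda,a,b,\tau})(t)$. By the convolution theorem its Laplace transform is $1/s-\lambda\,\widehat P_{\lambda,a,b,\tau}(s)/s$, because $\Lap\{1\}=1/s$. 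Uniqueness of the Laplace transform then gives \eqref{eq:SfromP} for almost every $t>0$.

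Next I check that the inversions are legitimate. By Proposition~\ref{prop:forcingresolvent}, $\widehat P_{\lambda,a,b,\tau}$ is Stieltjes, so $P_{\lambda,a,b,\tau}$ is completely monotone and nonnegative. From $\widehat P_{\lambda,a,b,\tau}(s)=U/(1+\lambda U)$ we get $0\le \widehat P_{\lambda,a,b,\tau}(s)\le 1/\lambda$ for $s>0$. Monotone convergence as $s\to0^+$ then gives
\[
\int_0^\infty P_{\lambda,a,b,\tau}(\xi)\,\dd\xi=\lim_{s\to0^+}\widehat P_{\lambda,a,b,\tau}(s)\le \frac{1}{\lambda}.
\]
Hence $P_{\lambda,a,b,\tau}\in L^1(0,\infty)$, and the integral in \eqref{eq:SfromP} is finite and continuous in $t$. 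This also shows the right-hand side lies in $[0,1]$, which is a useful consistency check.

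The main subtlety is that $S_{\lambda,a,b,\tau}$ has so far been defined only through its Laplace transform, as an $L^1_{\mathrm{loc}}$ function determined almost everywhere. To get the identity for every $t\ge0$, including $t=0$, I will take the continuous representative, namely the right-hand side itself. At $t=0$ it equals $1$. This matches the initial value theorem, since $s\,\widehat S_{\lambda,a,b,\tau}(s)=1/(1+\lambda U)\to1$ as $s\to\infty$, because $U\to0$ by \eqref{eq:Ularge}. With this convention the identity holds pointwise for all $t\ge0$.
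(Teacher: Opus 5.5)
Your proposal is correct and follows the paper's route: the paper likewise obtains $\widehat S_{\lambda,a,b,\tau}(s)=\frac{1}{s}-\frac{\lambda}{s}\widehat P_{\lambda,a,b,\tau}(s)$ from \eqref{eq:Shat} and \eqref{eq:Phat} and then inverts the Laplace transform, using $\Lap\{1\}=1/s$ and the convolution theorem. Your extra checks make explicit what the paper leaves implicit: $P_{\lambda,a,b,\tau}\in L^1(0,\infty)$ via $0\le\widehat P_{\lambda,a,b,\tau}\le 1/\lambda$ and monotone convergence, and the choice of the continuous representative so that the identity holds at every $t\ge0$.
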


In other words, \(P_{\lambda,a,b,\tau}\) is the resolvent kernel naturally associated with the forcing term, whereas \(S_{\lambda,a,b,\tau}\) propagates the initial datum. In particular, the homogeneous solution
\begin{equation}
u_h(t)=S_{\lambda,a,b,\tau}(t)u_0    
\end{equation}
is the natural scalar relaxation law associated with the shifted symbol \(\lambda+\varphi_{a,b,\tau}\).

This notion of mild solution is consistent with the general well-posedness theory for Caputo-type fractional differential equations developed in \cite{Diethelm2019}.

We now move to the differential realization of the same dynamics, considering the associated scalar Cauchy problem
\begin{equation}
\CTD_{a,b,\tau}u(t)+\lambda u(t)=f(t),
\qquad t>0,
\label{eq:Cauchy}
\end{equation}
with initial condition $u(0)=u_0$.

\begin{definition}[Classical solution]
Let \(u_0\in\R\), \(\lambda>0\) and \(f\) be causal. Then, a causal function \(u\in AC_{\mathrm{loc}}([0,\infty))\) of exponential order is called a classical solution of the Cauchy problem \eqref{eq:Cauchy}, if \(\CTD_{a,b,\tau}u\) is well defined almost everywhere on \((0,\infty)\), the identity \eqref{eq:Cauchy} holds
almost everywhere and \(u(0)=u_0\).
\end{definition}

\begin{theorem}[Equivalence with the classical Cauchy problem]
\label{thm:VolterraEquiv}
Let us assume that $f$ is causal and of exponential order.

\begin{enumerate}
\item[(i)]
If \(u\in AC_{\mathrm{loc}}([0,\infty))\) is a causal classical solution of \eqref{eq:Cauchy} and it is of exponential order, then \(u\) is a mild solution of \eqref{eq:Volterra}.
\item[(ii)]
If \(u\) is a mild solution of \eqref{eq:Volterra}, belongs to \(AC_{\mathrm{loc}}([0,\infty))\) and it is of exponential order, then \(u\) is a classical solution of \eqref{eq:Cauchy}.
\end{enumerate}
\end{theorem}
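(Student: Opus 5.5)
The plan is to move between the two formulations with the second fundamental theorem (Theorem \ref{thm:SFT}) and the first fundamental theorem (Theorem \ref{thm:FFT}), working directly in the time domain. Where that is delicate, I will fall back on the Laplace characterization of Theorem \ref{thm:ResolventRepresentation}. Both directions use that $u$ is causal, in $AC_{\mathrm{loc}}([0,\infty))$ and of exponential order, so that Proposition \ref{prop:LaplaceOps} and Theorem \ref{thm:SFT} apply.

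For (i), let $u$ be a classical solution, so $\CTD_{a,b,\tau}u=f-\lambda u$ almost everywhere. The right-hand side lies in $L^1_{\mathrm{loc}}$ and is of exponential order. I apply $\ITU_{a,b,\tau}$ to both sides. By Theorem \ref{thm:SFT} the left side becomes $u-u(0)=u-u_0$. The right side is $\ITU_{a,b,\tau}f-\lambda\,\ITU_{a,b,\tau}u$, i.e. $\kappa_{a,b,\tau}*f-\lambda\,\kappa_{a,b,\tau}*u$. Rearranging gives exactly \eqref{eq:Volterra} for almost every $t$, hence everywhere by continuity of the convolutions. A minor point is that the equality $\CTD u=f-\lambda u$ holds only almost everywhere. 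This is harmless, since $\ITU_{a,b,\tau}$ acts on $L^1_{\mathrm{loc}}$ classes.

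For (ii), let $u$ be a mild solution that is $AC_{\mathrm{loc}}$ and of exponential order. First I recover the initial condition: the convolutions $\kappa_{a,b,\tau}*u$ and $\kappa_{a,b,\tau}*f$ tend to $0$ as $t\to0^+$. This holds because $u$ is locally bounded, $f\in L^1_{\mathrm{loc}}$, and $\kappa_{a,b,\tau}\in L^1(0,T)$; for the $f$ term I will use that convolving an $L^1$ kernel with an $L^1$ function gives a continuous function vanishing at $0$, or argue along a sequence. Continuity of $u$ then gives $u(0)=u_0$.

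Next I compute $\CTD_{a,b,\tau}u$. Taking Laplace transforms of \eqref{eq:Volterra} gives $\widehat u-u_0/s=\TricomiU(a,b,s\tau)\bigl(\widehat f-\lambda\widehat u\bigr)$. I multiply by $\varphi_{a,b,\tau}(s)$ and use \eqref{eq:Lap-Caputo} with $u(0)=u_0$:
\[
\Lap\{\CTD_{a,b,\tau}u\}(s)=\varphi_{a,b,\tau}(s)\Bigl(\widehat u(s)-\frac{u_0}{s}\Bigr)=\widehat f(s)-\lambda\widehat u(s).
\]
Uniqueness of the Laplace transform in $L^1_{\mathrm{loc}}$ then yields \eqref{eq:Cauchy} almost everywhere. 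A time-domain alternative is to write $u-u_0=\ITU_{a,b,\tau}(f-\lambda u)$ and apply $\RLTD_{a,b,\tau}$. Theorem \ref{thm:FFT} gives $\RLTD_{a,b,\tau}(u-u_0)=f-\lambda u$, and the identity $\CTD_{a,b,\tau}u=\RLTD_{a,b,\tau}(u-u(0))$ concludes.

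I expect the main obstacle to be well-definedness of $\CTD_{a,b,\tau}u$ as an almost-everywhere function. Since $u\in AC_{\mathrm{loc}}$, we have $u'\in L^1_{\mathrm{loc}}$. Since $k_{a,b,\tau}\in L^1_{\mathrm{loc}}$ by Proposition \ref{prop:k}, the convolution $k_{a,b,\tau}*u'$ is in $L^1_{\mathrm{loc}}$, hence finite almost everywhere. Its exponential order, needed to apply Laplace uniqueness, follows from complete monotonicity of $k_{a,b,\tau}$, which makes it bounded on $[1,\infty)$. I will state these facts explicitly before the Laplace step.
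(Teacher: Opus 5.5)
You follow the paper's route. In (i) you apply $\ITU_{a,b,\tau}$ and Theorem~\ref{thm:SFT}. In (ii) you write $u-u_0=\ITU_{a,b,\tau}(f-\lambda u)$, apply $\RLTD_{a,b,\tau}$ via Theorem~\ref{thm:FFT}, and use $\CTD_{a,b,\tau}u=\RLTD_{a,b,\tau}(u-u(0))$. Your Laplace variant adds nothing new, since both fundamental theorems are themselves proved by Laplace transforms. The genuine gap is the step recovering $u(0)=u_0$, on which both versions of (ii) depend; the Laplace version needs it through \eqref{eq:Lap-Caputo}.

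Your claim that an $L^1$ kernel convolved with an $L^1$ function is continuous and vanishes at $0$ is false, and arguing along a sequence does not help.

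\begin{itemize}
\item Take $f(t)=t^{-a}$, which is causal, locally integrable and of exponential order. Since $\kappa_{a,b,\tau}(t)\sim\tau^{-a}t^{a-1}/\Gamma(a)$ as $t\to0^+$, one gets $(\kappa_{a,b,\tau}*f)(t)\to\tau^{-a}B(a,1-a)/\Gamma(a)=\tau^{-a}\Gamma(1-a)\neq0$.
\item This breaks statement (ii) itself, not just the proof. For $b=a+1$ one has $\kappa_{a,a+1,\tau}*f\equiv\Gamma(1-a)\tau^{-a}$. The mild solution is then $u(t)=\bigl(u_0+\Gamma(1-a)\tau^{-a}\bigr)E_a(-\lambda\tau^{-a}t^a)$, where $E_a$ is the Mittag-Leffler function.
\item This $u$ lies in $AC_{\mathrm{loc}}([0,\infty))$ and is bounded. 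Yet $u(0)\neq u_0$ and $\CTD_{a,a+1,\tau}u+\lambda u=0\neq f$.
\item The mechanism is $\RLTD_{a,b,\tau}(u-u_0)=\CTD_{a,b,\tau}u+(u(0)-u_0)\,k_{a,b,\tau}$. Here $(u(0)-u_0)\,k_{a,a+1,\tau}(t)=t^{-a}=f(t)$, which absorbs the forcing.
\end{itemize}

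The paper's proof makes the same unjustified assertion (``both convolution terms vanish at the origin''), so you have reproduced its gap rather than introduced a new one. Closing it needs an extra hypothesis on $f$ near $0$.

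\begin{itemize}
\item One option is $f\in L^p_{\mathrm{loc}}$ with $p>1/a$, which includes locally bounded $f$. H\"older then gives $|(\kappa_{a,b,\tau}*f)(t)|\le\|\kappa_{a,b,\tau}\|_{L^{p'}(0,t)}\|f\|_{L^p(0,t)}\to0$, because $\kappa_{a,b,\tau}\in L^{p'}$ near $0$ exactly when $p>1/a$.
\item Alternatively, assume directly that $(\kappa_{a,b,\tau}*f)(t)\to0$ as $t\to0^+$.
\end{itemize}

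With either hypothesis the rest of your argument goes through. A smaller point: in (i), $\kappa_{a,b,\tau}*f$ need not be continuous, so ``everywhere by continuity'' is unjustified. It is also unnecessary, since a mild solution only requires the identity almost everywhere.
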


\begin{proof}
Assume first that \(u\) is a classical solution of \eqref{eq:Cauchy}. Applying the Tricomi integral \(\ITU_{a,b,\tau}\) to both sides, using Theorem \ref{thm:SFT} and \eqref{eq:ITU} gives
\[
u(t)-u(0)+\lambda\int_0^t \kappa_{a,b,\tau}(t-\xi)\,u(\xi)\,\dd\xi
=
\int_0^t \kappa_{a,b,\tau}(t-\xi)\,f(\xi)\,\dd\xi.
\]
Since \(u(0)=u_0\), this is exactly \eqref{eq:Volterra}, so \(u\) is a mild
solution.

Conversely, assume that \(u\) is a mild solution of \eqref{eq:Volterra} and
that \(u\in AC_{\mathrm{loc}}([0,\infty))\) is of exponential order. Rewriting
\eqref{eq:Volterra}, we obtain
\[
u(t)-u_0
=
\int_0^t \kappa_{a,b,\tau}(t-\xi)\,[f(\xi)-\lambda u(\xi)]\,\dd\xi
=
(\ITU_{a,b,\tau}(f-\lambda u))(t).
\]
Applying \(\RLTD_{a,b,\tau}\) and using Theorem \ref{thm:FFT} yields
\[
\RLTD_{a,b,\tau}(u-u_0)=f-\lambda u.
\]
Since \(u\in AC_{\mathrm{loc}}([0,\infty))\), the following identity holds
\[
\CTD_{a,b,\tau}u=\RLTD_{a,b,\tau}(u-u(0)) ,
\]
while \(u(0)=u_0\) follows from \eqref{eq:Volterra} by letting
\(t\to0^+\), because both convolution terms vanish at the origin. Therefore, we find
\[
\CTD_{a,b,\tau}u+\lambda u=f,
\]
proving that \(u\) is a classical solution of \eqref{eq:Cauchy}.
\end{proof}

Thus, the Volterra formulation plays a primary role under minimal assumptions on the data and on the solution. In this setting, the evolution problem is naturally interpreted in terms of mild solutions, which are well defined in the purely hereditary framework and do not require any differentiability in time.

The differential scalar Cauchy problem is recovered from the Volterra formulation only under additional regularity assumptions. In particular, if the mild solution is absolutely continuous, so that the Tricomi--Caputo derivative is well defined, then the two formulations are equivalent and describe the same dynamics.

\subsection{Shifted symbols and bounded relaxation laws}
\label{subsec:shifted}

As already seen in the demonstration of Proposition \ref{prop:forcingresolvent}, the Volterra formulation naturally introduces shifted symbols of the form
$\lambda+\varphi_{a,b,\tau}(s)$ for $\lambda>0$, through the scalar evolution problem \eqref{eq:Cauchy}. The corresponding
homogeneous dynamics is governed by the propagator
\(S_{\lambda,a,b,\tau}\), whose Laplace transform is $\widehat S_{\lambda,a,b,\tau}(s)$ in~\eqref{eq:Shat}.

Closely related to this propagator, we have the bounded relaxation law
\begin{equation}
F_{\lambda,a,b}(s\tau)
:=
\frac{\TricomiU(a,b,s\tau)}{1+\lambda\,\TricomiU(a,b,s\tau)}
=
\frac{1}{\lambda+\varphi_{a,b,\tau}(s)},
\qquad s>0.
\label{eq:Flambda}
\end{equation}
Since \(\varphi_{a,b,\tau}\) is a complete Bernstein function, the shifted symbol \(\lambda+\varphi_{a,b,\tau}\) is again complete Bernstein, so that \(F_{\lambda,a,b}\) is a Stieltjes function \cite{SchillingSongVondracek2012} and, in particular,
\(F_{\lambda,a,b}\) is completely monotone in the time domain.

For \(\lambda=1\), one recovers the normalized bounded Tricomi law, i.~e.
\begin{equation}
F_{1,a,b}(s\tau) := F_{a,b}(s\tau)
=
\frac{\TricomiU(a,b,s\tau)}{1+\TricomiU(a,b,s\tau)},
\label{eq:Fbounded}
\end{equation}
showing that the bounded law can be naturally interpreted within the Volterra framework as the bounded response governing the homogeneous dynamics associated with the shifted generator \(1+\varphi_{a,b,\tau}\), rather than as an external modification of the raw hereditary symbol.

Moreover, using the asymptotics of \(\TricomiU(a,b,s\tau)\) \eqref{eq:Usmall} and \eqref{eq:Ularge}, one obtains respectively
\begin{equation}
F_{a,b}(s\tau)\sim
1-\frac{\Gamma(a)}{\Gamma(b-1)}(s\tau)^{b-1},
\qquad s\to0^+,
\label{eq:Fsmall}
\end{equation}
and
\begin{equation}
F_{a,b}(s\tau)\sim
(s\tau)^{-a},
\qquad |s|\to\infty,\qquad \Re\{s\}>0.
\label{eq:Flarge}
\end{equation}
As a consequence, the bounded law preserves the high-frequency algebraic signature of the raw Tricomi symbol, while saturating at low frequency.

\section{Discussion and concluding remarks}\label{sec:discussion}

The results of this paper identify the Tricomi branch as a distinguished law within the Sonine framework. In the admissible parameters range \(a\in(0,1)\), \(b\in(1,2)\) and \(\tau>0\), the function \(s\mapsto \TricomiU(a,b,s\tau)\) is Stieltjes and its reciprocal is therefore a complete Bernstein function. This yields, in a canonical way, the associated Sonine pair together with the corresponding Tricomi integral and the related Riemann--Liouville-type and Caputo-type derivatives.

A notable feature of the resulting calculus is its genuinely two-scale structure. The exponents \(a\) and \(b-1\) govern different asymptotic regimes and appear consistently in the Laplace symbol, in the L\'evy density and in the hereditary kernel, showing that the two-scale behavior is intrinsic to the construction itself. Within the Kummer class, this structure is rigid enough to select the Tricomi branch, once compatibility with the Stieltjes/Sonine setting and the natural asymptotic normalization are imposed.

The same analytic framework also leads naturally to the associated evolution problem. The Tricomi symbol generates a well-posed Volterra equation under minimal local assumptions, while the corresponding scalar Cauchy problem driven by the Tricomi--Caputo derivative is recovered under additional regularity. In this context, the bounded Tricomi law emerges as the relaxation law associated with shifted complete Bernstein symbols, arising so from the dynamics itself rather than from an external modification of the hereditary kernel.

Another relevant point is that no additional regularization is required at the level of the Tricomi--Caputo derivative. The singular behaviour of the derivative kernel remains fully compatible with the Sonine framework, and the boundedness of the corresponding relaxation law follows from the shifted-symbol construction without altering the original memory structure. Also, the Tricomi–$U$ construction highlights the flexibility of the Sonine framework in accommodating fractional-type dynamics generated by non-power-law kernels.

These results should also be viewed in connection with recent parallel developments on Tricomi-based fractional models, where complementary aspects such as resolvent families and operator-theoretic properties are investigated \cite{tudelapi2026passive2plateau}. In this sense, the present paper provides the explicit Stieltjes and Sonine foundation underlying that broader line of research.

{\color{black}{

From a physical and practical perspective, the choice of the Tricomi function as a fractional kernel is strongly motivated by multi-scale scenarios in continuum mechanics and anomalous transport. In complex media, such as generalized viscoelastic fluids or heterogeneous porous networks, a single fractional exponent often fails to capture the transition from short-time (fast) elastic responses to long-time (slow) viscous relaxation. The Tricomi kernel inherently overcomes this limitation by naturally embedding two decoupled physical scales. This allows the proposed framework to describe complex attenuation laws and multi-regime diffusion processes directly within a single operator, avoiding the need for multi-term or distributed-order differential formulations.
The Tricomi law also identifies a bounded passive relaxation model for physical impedance and relaxation phenomena, including broadband dielectric and electrochemical impedance data, where Debye-type \cite{debye1929} and Cole-Cole-type \cite{cole1941} responses can be recovered, as show explicitly in \cite{tudelapi2026passive2plateau}. That bounded model corresponds precisely to the $\lambda=1$ shifted generator analysed in Sect. \ref{subsec:shifted}, so the construction developed here directly contains and explains the law used in those applications. The bounded Tricomi law then appears naturally as the relaxation law associated with shifted generators.

On a more theoretical note, the analytical structure of the Volterra operator studied in Sect. \ref{sec:model} opens up interesting questions regarding its spectral properties. While a comprehensive spectral characterization falls outside the scope of this initial study, it is worth noting that due to the Stieltjes and complete Bernstein nature of our symbols, the spectrum of the associated integral operator is intimately connected to the branch cut along the negative real axis. The explicit algebraic representation of the Tricomi function implies that the operator does not possess a discrete point spectrum in standard $L^2$ spaces, but rather a continuous spectrum governed by the boundary values derived in Lemma \ref{lem:cut}. A rigorous functional-analytic investigation of this spectrum and its associated semigroup generation properties represents a promising direction for future research.
}}

%\section{Conclusions}\label{sec:conclusions}

%We have shown that the Tricomi confluent hypergeometric function induces a canonical generalized fractional calculus in the Sonine kernel framework. In the suitable range of the parameters $a$, $b$ and $\tau$, the function $\TricomiU(a,b,s\tau)$ is a Stieltjes law, its reciprocal is a complete Bernstein function and this analytic structure uniquely determines the associated Sonine pair together with the corresponding fractional integral and derivative operators.

%Within this setting,  we derived the associated Volterra formulation, proved well-posedness of mild solutions, and recovered the corresponding scalar Cauchy problem under additional regularity assumptions.

In conclusion, the Tricomi function provides an explicit special-function model in which Stieltjes structure, complete Bernstein geometry, Sonine calculus and two-scale asymptotics coexist in a single coherent framework. This makes the Tricomi family a natural candidate for further developments in generalized fractional dynamics and related nonlocal evolution problems.

\backmatter

\bmhead{Acknowledgements}
The work of I.~C. has been carried out in the framework of the activities of the Italian National Group of Mathematical Physics (GNFM), INdAM.

%\section*{Declarations}

%\noindent\textbf{Funding}  
%This work received no specific grant from any funding agency in the public, commercial, or not-for-profit sectors.

\medskip
\noindent\textbf{Conflict of interest}  
The authors declare that they have no conflict of interest.

%\medskip
%\noindent\textbf{Availability of data and materials}  
No datasets were generated or analysed during the current study.

%\medskip
%\noindent\textbf{Code availability}  
%Not applicable.

%\medskip
%\noindent\textbf{Authors' contributions}  
%Both authors contributed to the conception and development of the work. Ivano Colombaro led the mathematical analysis. Marc Tudela-Pi contributed to the mathematical development, interpretation, and overall structuring of the manuscript. Both authors wrote, revised and approved the final manuscript.

%\bibliographystyle{spmpsci}
\bibliography{fcaa-bibliography}
\end{document}